\documentclass[11pt,a4paper]{article}
\usepackage{jheppub}

\usepackage{multirow, graphicx,amssymb,url,mathrsfs,amsmath}
\usepackage{wrapfig,boxedminipage,setspace,subfigure,epsfig}
\usepackage{amsxtra,amstext,latexsym,dsfont,amsfonts}
\usepackage{color,eucal}
\usepackage[dvipsnames]{xcolor}
\usepackage{float}
\usepackage{amsthm}
\usepackage{slashed,comment}
\usepackage{kotex}
\usepackage{tensor}
\usepackage{mathtools}
\usepackage{indentfirst}
\usepackage{graphicx}
\usepackage{grffile} 

\usepackage{array} 
\usepackage[section]{placeins}





\newcommand{\bra}[1]{\mbox{$\langle #1 |$}}
\newcommand{\ket}[1]{\mbox{$| #1 \rangle$}}



\newcommand{\ie}{{\it i.e.}}


\newtheorem{theorem}{Theorem}  
\newtheorem{lemma}{Lemma}

\makeatletter

\renewcommand{\maketag@@@}[1]{\hbox{\m@th\normalsize\normalfont#1}}%

\makeatother


\title{Holographic multipartite entanglement {structures in} IR modified geometries}

\author[a]{Xin-Xiang Ju,}
\author[a]{Bo-Hao Liu,}
\author[a,b]{Ya-Wen Sun,}
\author[a]{Bo-Yu Xu} 
\author[a]{and Yang Zhao}

\emailAdd{juxinxiang21@mails.ucas.ac.cn}
\emailAdd{liubohao16@mails.ucas.ac.cn}
\emailAdd{yawen.sun@ucas.ac.cn}
\emailAdd{xuboyu25@mails.ucas.ac.cn}
\emailAdd{zhaoyang20a@mails.ucas.ac.cn}

\affiliation[a]{School of Physical Sciences, University of Chinese Academy of Sciences, Beijing 100190, China}
\affiliation[b]{Kavli Institute for Theoretical Sciences, University of Chinese Academy of Sciences, Beijing
100049, China}

\abstract{We investigate how IR modifications of the bulk geometry reshape long-range multipartite entanglement on the boundary in holography. We modify the IR geometries in two opposite directions: spherical modifications that enhance long-range entanglement and hyperbolic modifications that suppress them. We utilize various multipartite entanglement measures/signals to analyze the multipartite entanglement structures. These measures/signals are combinations of entanglement entropy, multi-entropy, entanglement wedge cross sections (EWCS) and multi-EWCS. Our results reveal that in the extremal limits of these two geometric modifications, the multipartite entanglement structures exhibit starkly contrasting behaviors: various measures saturate either their theoretical upper or lower bounds in the respective geometries. This demonstrates that IR deformations provide a practical holographic framework for realizing extremal entanglement regimes. Moreover, it serves as an effective tool for studying quantum marginal problems in holography. Finally, by observing how different measures respond to these engineered geometries, we gain clarifying insights into the specific types of multipartite entanglement that each measure/signal is particularly sensitive to.
}


\arxivnumber{}

\begin{document}
 \maketitle
\flushbottom

	
\section{Introduction}\label{sec1}

\noindent The holographic principle    \cite{Maldacena:1997re} states that a gravitational theory in the bulk is completely described by a quantum field theory living on its boundary. A key insight of this duality is the emergence of spacetime geometry from patterns of entanglement in the boundary state    \cite{Rangamani:2016dms,Swingle:2009bg,VanRaamsdonk:2009ar,VanRaamsdonk:2010pw,Maldacena:2013xja}. In particular, the seminal work of Ryu and Takayanagi    \cite{Ryu_2006} relates the entanglement entropy of a boundary region to the area of a minimal surface in the dual bulk spacetime, thus establishing a precise dictionary between geometric quantities and boundary quantum correlations. This observation suggests that the very fabric of bulk geometry is woven from entanglement: changing the entanglement structure reshapes the geometry, and vice-versa.

Building on this insight, \cite{Ju_2024} has turned to a more refined probe of entanglement, the conditional mutual information (CMI), to uncover how the entanglement at different distance scales is encoded in the radial profile of the bulk. For two infinitesimal boundary subregions separated by a finite distance $l$, the CMI conditioned on the interval between them is entirely determined by the bulk geometry at a specific radial depth $z_* \sim l $ in the bulk \cite{Ji:2025vks}. This relation provides a real-space measure of long scale boundary entanglement that is directly sensitive to the IR geometric data in the bulk. In this sense, the IR geometry determines the long range entanglement structures at the boundary while the UV geometry determines the short range ones. However, it should be noted that this “real space entanglement/bulk radial scale” correspondence does not conflict with the familiar UV/IR 
relation of holography: the apparent locality with respect to boundary distance arises from a particular radial gauge, while the physical observable remains fully covariant.

Within this picture, variations of the bulk IR geometry translate directly into changes in the long-scale entanglement structures of the boundary theory. In \cite{Ju_2024}, we have considered two types of opposite IR geometries, leading to two contrasting patterns of distribution of the conditional mutual information across different length scales. Subsequent works    \cite{Ju:2024kuc,Ju:2024hba,Ju:2025tgg} further showed that, in certain configurations {where CMI approaches its upper bound value}, the CMI not only measures correlations between pairs of regions but quantitatively captures the tripartite entanglement among the three relevant subregions. 

{Unlike the bipartite case, the classification and quantification of multipartite entanglement are considerably more intricate, and no universally accepted measure exists \cite{Xie2021,Ma_2024, Gadde_2023}. This complexity reflects the fact that multipartite entanglement admits qualitatively different structures and cannot be fully captured by a single quantity. In recent years, a variety of multipartite entanglement measures have been proposed in holography. These measures are constructed with quantities that admit a holographic dual such as the entanglement wedge cross section (EWCS) \cite{Umemoto_2018, Basak2025, ahn2025}, its multipartite generalizations (multi-EWCS) \cite{Umemoto_2018multiEWCS, Chu:2019etd, Bao_2019, Bao_2019multiSR, Bao:2018fso}, and the multi-entropy \cite{Gadde_2022,Gadde:2024taa,Gadde:2025csh,iizuka2025GM, iizuka2025MGM, iizuka2025}. Besides, owing to the difficulty of finding a faithful entanglement measure, part of the recent effort has shifted toward the identification of multipartite entanglement signals  \cite{balasubramanian2024, bao2025}\textemdash quantities that may not satisfy the requirements of proper entanglement measures proposed in   \cite{Xie2021, Ma_2024}, but whose nonvanishing nonetheless provides a robust indication of the presence of multipartite entanglement.} Other related discussions on multipartite entanglement in holography could be found in \cite{Bao:2015bfa, DeWolfe:2020vjp, Harper:2021uuq, Harper:2022sky, Zou:2022nuj, Hernandez-Cuenca:2023iqh, Jiang:2025iet, balasubramanian2025, Akella:2025owv, Balasubramanian:2025jhq, Yuan:2025dgx}.

Based on these insights and developments, in this work we  systematically investigate how changes in the bulk IR geometry affect the long-distance multipartite entanglement on the boundary. This question gains particular relevance when studying the entanglement distribution among all subregions of the full boundary pure state, since all subregions have to span large spatial scales. More explicitly, we explore this interplay between bulk geometry and boundary multipartite entanglement through a focused analysis of various multipartite entanglement measures in  IR modified holographic geometries. These measures can be organized into three categories: i) measures derived from the bulk entanglement wedge cross sections (EWCS) e.g the Markov gap   \cite{Akers_2020,Hayden_2021} and the L-entropy   \cite{Basak2025}; ii) signals obtained from multi-EWCS; iii) measures based on the multi-entropy \cite{Gadde_2022}, e.g. $\kappa$   \cite{liu2024}. By monitoring how each measure responds to the two opposite types of IR geometry deformations, we achieve two goals in one stroke. First, this would help us gain further insights into where the long range multipartite entanglement structure lies in the bulk geometry. Second, at the same time this analysis provides more information on what type of multipartite entanglement structure these measures could detect from the change of the behavior of these measures in IR modified geometries. 

{Another important motivation of this paper is to investigate the quantum marginal problem   \cite{schilling2014,schilling2015} in a holographic setting. The quantum marginal problem concerns whether a consistent global quantum state exists given a set of its reduced density matrices (quantum marginals). In our holographic framework, the reduced density matrices of relatively small boundary subregions play the role of these marginals. Their entanglement wedges lie in the UV region, which is unaffected when we modify the IR geometry; consequently, their density matrices are also unchanged by subregion–subregion duality   \cite{Czech:2012bh,Wall:2012uf,Headrick:2014cta}. In this sense, modifying the IR geometry corresponds to constructing the full boundary state from fixed marginals, which is precisely the task addressed by the quantum marginal problem. Studying the extremal values of multipartite entanglement measures/signals in the resulting global states then provides holographic constraints on such constructions.}

The rest of this paper is organized as follows. In Section \ref{sec2}, we review the two types of IR modified geometries that we introduced in   \cite{Ju_2024}: the spherical and hyperbolic extremal geometries. We then describe the corresponding changes in the boundary behavior of the conditional mutual information in the two opposite IR modified geometries, revealing two opposite directions in the redistribution of quantum entanglement across different length scales. In Section \ref{sec3}, we show how the EWCS varies in the two types of IR modified geometries and then give the results for the change in two multipartite entanglement measures: the Markov gap and L-entropy, which are consistent with the expectation from the behavior of CMI on these two types of geometries. {It will be shown that the two types of IR modified geometries correspond to upper and lower bound values of the measures, respectively.} In Section \ref{section multi EWCS}, we investigate the modifications of the multi-EWCS on these two types of IR modified geometries and the behaviors of multipartite entanglement measures constructed from the multi-EWCS, including a new signal that we define which vanishes in the hyperbolic IR modified geometry. In Section \ref{multi entropy}, we consider the change in how the multi-entropy and related measures behave in the two IR modified geometries. Finally, we conclude and discuss our results in Section \ref{sec7}.

\section{Review of IR modified geometries and the corresponding long scale entanglement structures}\label{sec2}

\noindent In holography, the geometry of the bulk spacetime encodes entanglement structures of the boundary quantum field theory. A particularly illuminating probe of this relationship is the conditional mutual information (CMI) of boundary subregions. In   \cite{Ju_2024}, it was demonstrated that for two infinitesimal boundary subregions separated by a finite distance \( l \), their CMI—under the condition of the region between them—is holographically determined by the bulk geometry at a specific radial scale corresponding to \( l \). This quantity can be naturally interpreted as a real-space measure of long-distance entanglement in the boundary theory. The validity of this geometric interpretation was rigorously established in   \cite{Ji:2025vks}. Notably, this real-space/radial scale correspondence does not contradict the conventional UV/IR relation in holography, as the seemingly local real-space correspondence explicitly depends on the choice of gauge in the radial coordinate. Within this framework, modifications to the IR geometry of the bulk correspond directly to changes in the large-scale entanglement structure of the boundary theory. Furthermore,   \cite{Ju:2025tgg} revealed that, for certain configurations, the CMI computed in this setting quantitatively captures tripartite entanglement among the three relevant boundary subregions. 

These insights motivate a broader investigation into how changes of the bulk IR geometry affect the multipartite entanglement structures at the boundary. This is of particular interest when considering the entanglement distribution across the entire boundary pure state, where the global multipartite entanglement necessarily involves correlations over long spatial scales. In this section, we review the specific modifications to the IR geometry introduced in \cite{Ju_2024, Ju:2023bjl} and summarize the resulting behavior of boundary CMI. In subsequent sections, we will explore how these IR geometric deformations influence the structure of multipartite entanglement in the boundary theory, thus providing a holographic perspective on the interplay between geometry and multipartite entanglement across scales.

In \cite{Ju_2024}, we proposed two classes of toy-model IR geometries dual to two opposite types of long distance entanglement structures.
To accomplish this, the IR geometry was modified into two distinct forms\textemdash the spherical and the hyperbolic  geometries\textemdash and their corresponding entanglement behaviors were analyzed. In the extremal limit of the former case, the IR region becomes an entanglement shadow impenetrable to any RT surface, while in the extremal case of the latter, the boundary of the IR region is equivalent to an end-of-the-world (EoW) brane. The evaluation of boundary CMI in the two geometries shows that such modifications result in a redistribution of entanglement across different distance scales.

\subsection{Two opposite ways to modify the IR geometry }

\begin{figure} [H]
    \centering 
   \includegraphics[width=0.8\textwidth]{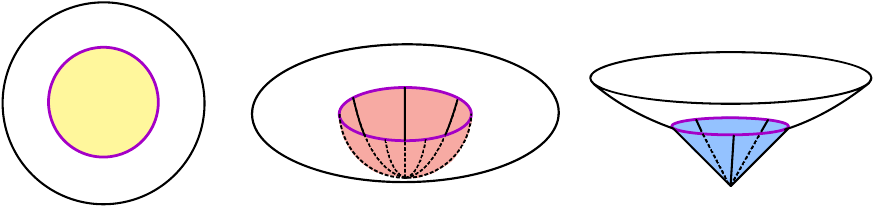} 
   \caption{A graphical summary of the geometries in   \cite{Ju_2024} with modified IR regions. The left figure depicts the general case where the geometry of the IR region (shown in yellow) is modified in a global AdS$_3$ spacetime. Displayed in purple is the edge of this IR region, where matter fields reside and spatial connection conditions are imposed. The middle and the right figures depict the diametrically opposite toy-model geometries obtained through such modification. The middle figure shows the spherical extremal case, where the IR region with large positive curvature (shown in red) can be viewed as a hemisphere embedded in an imaginary Euclidean space. The right figure, on the other hand, shows the hyperbolic extremal case, where the IR region with extremely negative curvature (shown in blue) infinitely approaches a light cone embedded in an imaginary Minkowski spacetime. }
    \label{modIR} 
\end{figure}

\noindent {As shown in the left figure of figure \ref{modIR}, in a global AdS$_3$ spacetime, we could pick an IR region (the yellow region) with an edge at $r=r_{IR}$ and replace this IR region with other geometries while the outside UV geometry is kept fixed.}
This is achieved by adding matter fields on the IR boundary which form a matter brane. The spatial connection condition on the IR edge demands that the geometries on both sides meet with the same induced metric on the edge. We also demand that the Cauchy slice always have zero extrinsic curvature, so we can use RT formula on it to obtain the entanglement entropy. 

Further constraints arise from the Gauss-Bonnet theorem, which implies that once the integration of the curvature scalar is fixed, any modification of the IR geometry amounts to a redistribution of curvature within the IR region. There exist two directions of this redistribution, specifically, to increase or decrease the curvature at the center of the IR region.
The extremal outcomes of the directions are two diametrically opposed geometries, as shown in the middle and the right subfigures of figure \ref{modIR}: 

\begin{itemize}
    \item Spherical extremal case: The curvature at the center of the IR region greatly increases, while near the edge it decreases.\footnote{``Near" refers to a thin shell with negligible thickness near the edge.} This extremely large central curvature prevents geodesics from penetrating the IR region, rendering the IR region an entanglement shadow.
    \item Hyperbolic extremal case: The curvature at the center of the IR region greatly decreases, while near the edge it increases. As the curvature inside the IR region approaches negative infinity, geodesic lengths in the IR region tend to zero. The edge of the IR region is equivalent to an EoW brane in the sense of holographic entanglement entropy.
\end{itemize}

With the Cauchy slice prepared in either extremal configuration, we can evolve it forward and backward in time according to Einstein’s equations. For matter fields within the IR region, the null energy condition (NEC) is imposed, while violations of the weak energy condition (WEC) are allowed. 

For both extremal cases introduced above, we study their geometrical structures and resulting RT surface behaviors as the foundation of further entanglement discussion. In the spherical extremal case, we can view the IR geometry with increased positive curvature as a hemisphere, with the boundary of the IR region identified with the equator, as shown in figure \ref{CMI}. The ansatz of the bulk spacetime metric for modified IR geometry is
\begin{equation}\label{ansatz}
ds^2=-f(r)g(r) dt^2+\frac{1}{g(r)}dr^2+r^2d\theta^2.
\end{equation} 
On the $t=0$ Cauchy slice with zero extrinsic curvature, for general spherical cases\footnote{Here, ``general" means that the curvature within the IR region is increased, but the geometry is not necessarily extremal.}
\begin{equation}\label{g(r) 1}
g(r)=\begin{cases}
(l^2-r^2)/l^2, & \text{for } r<r_{IR},\\
(l_{AdS}^2+r^2)/(l_{AdS}^2), & \text{for } r>r_{IR}.
\end{cases}
\end{equation}
Here $r_{IR}$ denotes the radial location of the gluing edge of the IR region, {and $l_{AdS}$ is the AdS radius. For general spherical cases, the IR geometry can be regarded as a spherical crown, with $l$ being its radius.} {Note that in the extremal limit, the spherical crown becomes a hemisphere, and $l=r_{IR}$. {In this limit, the $g_{rr}$ metric component becomes infinity at the edge of the IR region so that the whole IR region becomes an entanglement shadow where no geodesics from the boundary could enter}.}
 
Within this framework, we focus on the shapes of RT surfaces which undergo four phases. 

I. For a sufficiently small boundary subregion, its RT surface lies entirely in the unmodified exterior region, identical to that in pure AdS. As the subregion size $L$ increases, it will reach a critical value $L_c$, with the RT surface being tangent to the edge of the IR region.

II. For $L_c<L<\pi l_{S^1}$\footnote{In both extremal cases, $2\pi l_{S^1}$ denotes the spatial size of the boundary system.}, the RT surface wraps around the IR region without penetrating it, as shown in the upper left figure of figure \ref{CMI}. It consists of three smoothly joined geodesic arcs: two in the unmodified exterior tangent to the edge of the IR region, and one along the edge.

III. For a subregion with $\pi l_{S^1}<L<2\pi l_{S^1}-L_c$, its RT surface is the same as that of its complement in phase II. Its entanglement wedge now includes the IR region. 

IV. For $L>2\pi l_{S^1}-L_c$, the RT surface detaches from the edge of the IR region, and reduces to its pure AdS shape.

\begin{figure} [H]
    \centering 
   \includegraphics[width=0.8\textwidth]{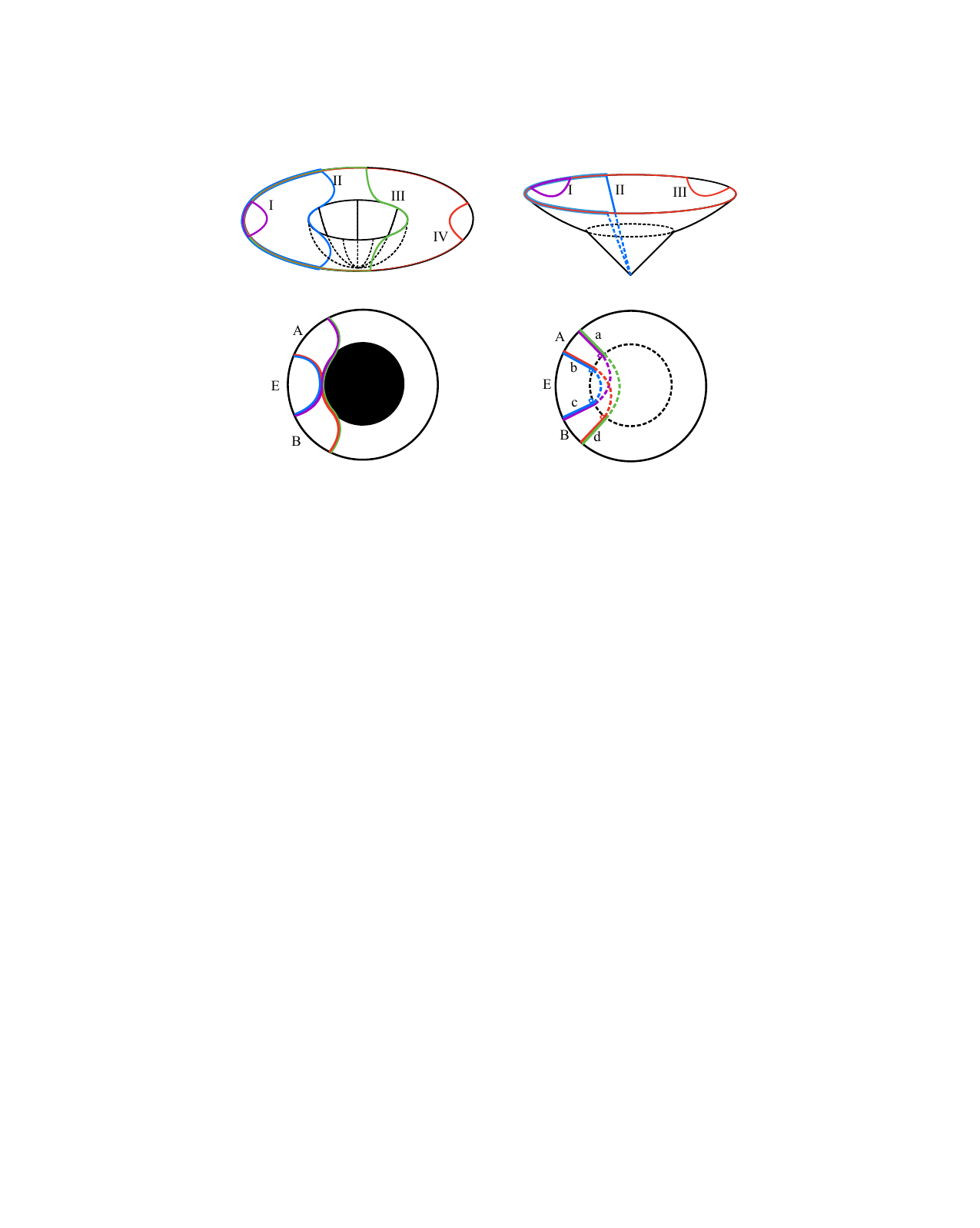} 
   \caption{The RT surfaces in the spherical (left) and hyperbolic (right) extremal cases. As in figure \ref{modIR}, the Cauchy slices of the spherical and hyperbolic extremal geometries are embedded respectively in higher-dimensional imaginary Euclidean and Minkowski backgrounds. 
   The left and right figure depict respectively the four RT surface phases in the spherical extremal case and the three phases in the hyperbolic extremal case, along with their corresponding boundary regions.}
    \label{CMI} 
\end{figure}

On the other hand, applying the same ansatz (\ref{ansatz}) to general hyperbolic cases where the curvature within the IR region decreases, we have
\begin{equation}\label{Hy}
g(r)=\begin{cases}
(l^2+r^2)/l^2, & \text{for } r<r_{IR},\\
(l_{AdS}^2+r^2)/l_{AdS}^2, & \text{for } r>r_{IR}.
\end{cases}
\end{equation}
The curvature inside the IR region $R\propto -1/l^2\to -\infty$ as the hyperbolic radius $l\to 0$ in the extremal limit. In this limit, the IR region tends to a ``light cone" in which any geodesic segment has vanishing length, and in the sense of holographic entanglement entropy, its boundary can be regarded as an EoW brane. Meanwhile, to minimize area, optimal RT surfaces plunge into the IR region immediately, so any geodesic intersecting the boundary of the IR region is orthogonal to it. The RT surfaces in this case exhibit three phases.

I. As in the spherical extremal case, for any small enough boundary region, its RT surface remains the same as in the original AdS spacetime, residing outside the IR region.

II. When the size of the boundary region $L$ increases to a critical value $L_c$, a phase transition occurs before the phase-I surface touches the boundary of the IR region. Beyond this point, the RT surface enters the IR region, leaving two UV segments orthogonal to the IR boundary.

III. When $L > 2\pi l_{S^1} - L_c$, the RT surface degenerates to its original form in vacuum AdS spacetime, off the boundary of the IR region. The corresponding entanglement wedge contains the IR region. 

\subsection{Conditional mutual information and long scale entanglement in IR modified geometries}

\noindent So far we have studied the entanglement structure of the modified geometries through entanglement entropy (RT surfaces). However, it is insufficient to fully capture the connection between radial geometry and boundary correlations across different length scales. To address this we further utilize the conditional mutual information, which, unlike the entanglement entropy, distinguishes entanglement at different length scales as a probe of entanglement structure. 

The conditional mutual information $I(A:B|E)$ is defined by 
\begin{equation}\label{cmi}
    I(A:B|E)=S_{AE}+S_{BE}-S_{ABE}-S_{E}.\end{equation} 
In our discussion, $E$ is chosen as the boundary interval between $A$ and $B$. In pure AdS, such CMI never vanishes. In the spherical extremal case, however, the RT surfaces of $AE$, $BE$, $ABE$ and $E$ coincide when and only when the length of $E$ is longer than $L_c$, and the length of $ABE$ is no longer than $\pi l_{S^1}$. In this regime the CMI $I(A:B|E)$ vanishes, which equivalently requires that the distance $l_{ab}$ between any two points $a\in A$ and $b\in B$ satisfies $L_c\leq l_{ab}<\pi l_{S^1}$. Such vanishing CMI signals the absence of long-range entanglement ($L_c < L < \pi l_{S^1}$) between boundary subregions $A$ and $B$ and the presence of the longest-range entanglement at $L = \pi l_{S^1}$.

\begin{figure} [H]
    \centering 
   \includegraphics[width=0.8\textwidth]{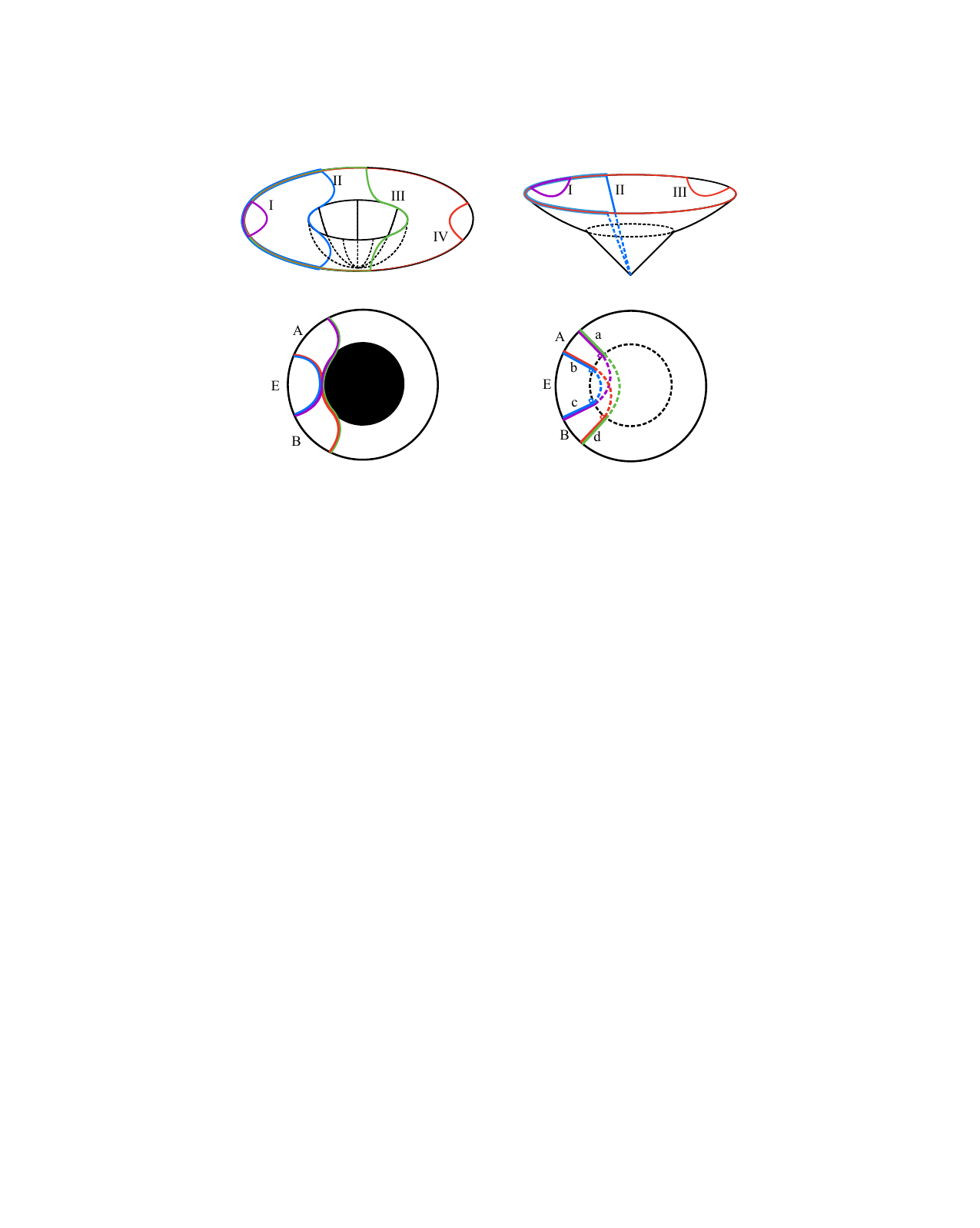} 
   \caption{The vanishing-CMI configurations for the spherical (left) and hyperbolic (right) extremal cases. Specifically, the figures plot the Cauchy slices of the spherical and hyperbolic extremal geometries in figure \ref{CMI} in stereographic projection. 
   The left figure represents the vanishing CMI for the spherical extremal case: the RT surfaces for $AE$ (purple), $BE$ (red), $E$ (blue), and $ABE$ (green) are shown, and their contributions cancel exactly in the CMI combination. An analogous cancellation occurs in the hyperbolic extremal case, as illustrated in the right figure.}
    \label{CMI2} 
\end{figure}

In the hyperbolic extremal case, the condition for vanishing CMI is looser. Specifically, as shown in the lower right of figure \ref{CMI2}, with the boundary length of $E$ no shorter than the critical length $L_c$, 
    \begin{equation}
    I(A:B|E)=S_{AE}+S_{BE}-S_{ABE}-S_{E}=a+c+b+d-a-d-b-c=0,
\end{equation} reflecting the absence of entanglement between $A$ and $B$. In conclusion, all quantum entanglement between boundary subregions with a distance longer than $L_c$ is eliminated for the hyperbolic extremal case.

To analyze the behavior of more refined bipartite entanglement, in   \cite{Ju_2024} we further provide a detailed calculation of the CMI between two infinitesimal subregions separated by different distances on the boundary. Eventually, as illustrated in figure \ref{IRES}, we obtain the following physical picture:

Modifying the IR geometry can be understood as a ``redistribution of the entanglement structure across different length scales". In the spherical extremal case, long-scale (longer than the critical length) entanglement is transferred to the longest-scale, whereas in the hyperbolic extremal case, it is transferred to the critical-scale (the shortest scale subject to modification). It can be concluded that such qualitative difference in entanglement structures originates from the diametrically opposed geometric properties of the spherical and hyperbolic extremal case.

\begin{figure}[H]
    \centering \includegraphics[width=0.9\textwidth]{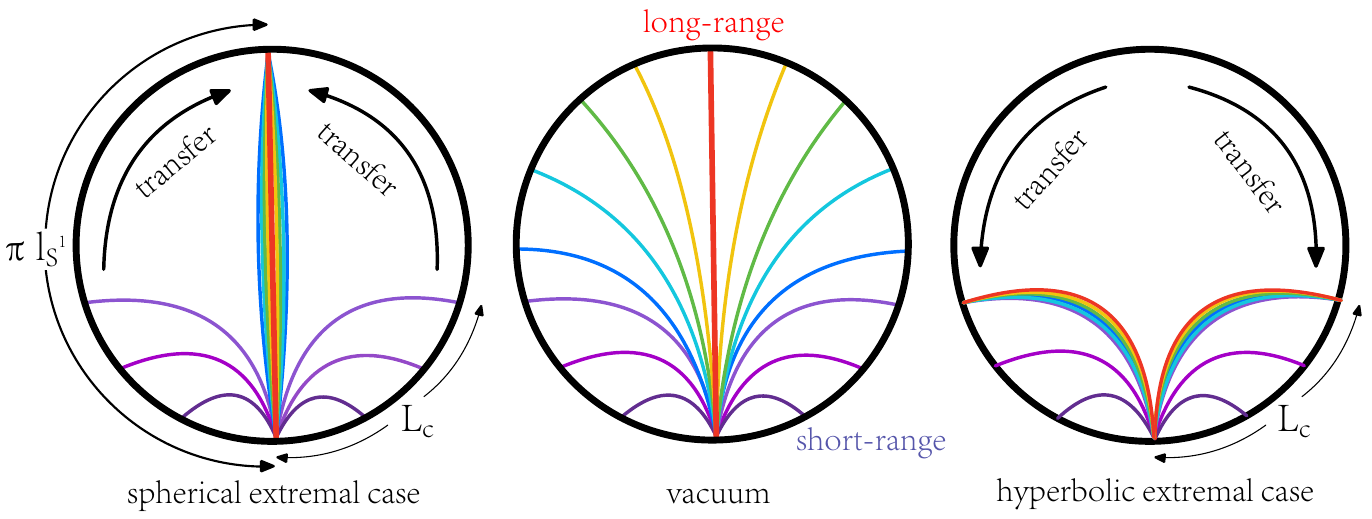} 
    \caption{Entanglement structures are depicted using threads representing the entanglement (CMI between two infinitesimal subregions) between the two points they connect. The middle figure displays the entanglement in vacuum AdS, with entanglement at all length scales. On the left the spherical extremal case is shown with all $L > L_c$ long-scale entanglement eliminated and transferred to the longest scale $L = \pi l_{S^1}$. On the right figure, in the hyperbolic extremal case, all $L > L_c$ long-scale entanglement is eliminated and transferred to the critical length $L = L_c$.}\label{IRES} 
\end{figure}

\subsection{Holographic entropy inequalities in IR modified geometries}

\noindent In addition to CMI, another perspective on the relationship between entanglement and geometry arises from holographic inequalities associated with entanglement entropy, EWCS and multi EWCS   \cite{Bao:2015bfa}. We have the following theorem: 

\begin{theorem}
Holographic inequalities remain valid in the spherical and hyperbolic extremal cases.
\end{theorem}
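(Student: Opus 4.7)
The plan is to adapt the standard cut-and-paste argument for holographic entropy inequalities, as in Bao et al. 2015, to the two modified backgrounds. Recall that every inequality in the holographic entropy cone (subadditivity, strong subadditivity, monogamy of mutual information, and their higher-party generalizations) follows from the same scheme: given candidate RT surfaces for the regions appearing on the LHS, one cuts them along their intersections and reattaches the pieces so as to produce a new collection of codimension-two surfaces, each homologous to one of the regions appearing on the RHS. Because the new surfaces are only candidates, the sum of their areas upper-bounds the sum of the RHS entropies, while by construction it equals the sum of the LHS entropies, yielding the inequality. So the task reduces to showing that this rearrangement still produces admissible surfaces in the two modified geometries.

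First, I would record that the essential ingredients of the RT prescription remain available: the Cauchy slice carries zero extrinsic curvature by construction, so entropies are still computed as minimal-area surfaces anchored on the asymptotic boundary subject to homology, and the exterior ($r>r_{IR}$) metric is unchanged. Hence any candidate surface built entirely in the UV region is evaluated exactly as in pure AdS. The only new features are (a) in the spherical extremal case, an impenetrable inner hemisphere together with geodesic arcs that can run along its equator as part of an RT surface, and (b) in the hyperbolic extremal case, an inner edge that acts as an EoW brane on which RT surfaces can terminate orthogonally, with interior segments of vanishing length.

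For the spherical case I would argue that the cut-and-paste operation remains legal once one allows cutting not only at transverse intersections of the geodesic UV pieces but also at coincident points on the IR edge. The tangency condition guarantees that UV pieces join the edge smoothly, so swapping two pieces meeting at a common point on the edge preserves total area, while any excess arc length along the edge can be redistributed among the rearranged surfaces without changing the sum. Homology is preserved because the entanglement shadow is a fixed common obstacle: one can always close up a rearranged surface by appending the shared portion of the IR edge that is inherited from one of the original RT surfaces. For the hyperbolic case, the cut-and-paste goes through as for any geometry with an EoW brane: pieces ending on the brane can be recombined freely, and since segments routed through the IR region contribute vanishing area in the extremal limit, any rerouting across the brane is at worst area-neutral. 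Together these two steps give every holographic inequality in both backgrounds.

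The main obstacle, in my view, is the homology bookkeeping in the spherical case. In the standard flat-topology proof, every cut-and-paste rearrangement is manifestly homologous to the required boundary region because the bulk slice is simply connected. Here, however, the excluded hemisphere introduces a nontrivial homology class (surfaces going ``around'' it from one side versus the other), and one must check that the rearrangement never inadvertently flips this class, or, equivalently, that the edge-arc contributions can always be assembled to match the homology of the target region. I expect this to work by always pairing each swap at an edge point with a corresponding reassignment of the adjacent edge arc, but this is the one step that needs genuine verification rather than a direct import of the known proof.
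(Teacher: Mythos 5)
Your strategy is genuinely different from the paper's, and the comparison is instructive. You attack the extremal geometries head-on, re-running the cut-and-paste (contraction) arguments with the entanglement shadow and the EoW-brane-like edge treated as new structural features. The paper instead observes that both extremal geometries arise as limits of a one-parameter family of \emph{non-extremal} IR modifications, in which the bulk is smooth, the RT prescription applies verbatim, and the standard proofs of the holographic inequalities go through without modification; since the inequalities are non-strict and the relevant entropies vary continuously along the family, a violation by a finite amount in the extremal limit would force a violation in some nearby non-extremal geometry, a contradiction. That limit argument dissolves in one stroke every delicate point you have to handle by hand: the tangency conditions, the edge arcs, the vanishing-length interior segments, and in particular the homology bookkeeping you flag as your main unverified step. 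Your approach, if completed, would buy something the paper's does not — a structural explanation of \emph{why} the inequalities hold directly in the degenerate geometries — but as written it is not complete, since you yourself leave the spherical-case homology step unproven.

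Two specific corrections to that remaining step. First, the premise of your worry is not quite right: in the spherical extremal case the IR region is \emph{not} excised from the spacetime — it is merely an entanglement shadow that minimal surfaces do not enter — so the Cauchy slice remains topologically a disk and there is no nontrivial homology class for a rearranged surface to ``flip.'' The actual content of the homology constraint is only the assignment of the shadow region to one side or the other of each candidate surface, which is exactly the edge-arc reassignment you describe and which can always be made consistently. Second, the genuine subtlety your route must confront is not topological but analytic: the RT ``surfaces'' in the extremal cases contain degenerate pieces (arcs confined to the IR edge, or interior segments of identically vanishing length), and one must verify that the contraction-map proofs of the higher inequalities in the holographic entropy cone — which are not simple cut-and-paste swaps — remain valid when candidate surfaces are allowed to degenerate in this way. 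The paper's continuity argument is precisely a clean way of certifying that these degenerate configurations are harmless limits of configurations for which the proofs are already known.
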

\begin{proof}
We first introduce the notion of non-extremal geometries. These are geometries with similar IR modifications as our extremal cases, but the curvature deformation does not reach the extremal limit. Since the proofs of these inequalities do not rely on specific geometric conditions, their validity also extends to non-extremal geometries. It should be noted that both types of extremal geometries can be regarded as continuously transformed from non-extremal ones. According to the intermediate value theorem, if an inequality is violated in an extremal geometry by a finite amount $\epsilon$, there must exist a non-extremal geometry, in which the inequality is violated by some finite $\epsilon'$, with $|\epsilon'|< |\epsilon|$. This contradicts with the general validity of these inequalities in non-extremal geometries. Therefore, all holographic inequalities should remain valid in the spherical and hyperbolic extremal cases.
\end{proof}

To conclude, in Section \ref{sec2} we have set our playground of the spherical and hyperbolic extremal cases and discussed mainly how the modifications of IR geometry reshape their bipartite entanglement. It should also be noted that, despite the proposal of the two toy models,   \cite{Ju_2024} provided additional insights into the entanglement structures related to differential entropy   \cite{Balasubramanian:2013lsa} and brane-world holography   \cite{Fujita:2011fp,Takayanagi:2011zk}. For the spherical extremal case, integrating the divergent 2-point CMI between infinitesimal subregions over the boundary yields the area of the modified region. This shows that the carrying capacity of total entanglement for the IR region is determined by its surface area, strengthening the connection between entanglement and geometry. For the hyperbolic extremal case, the shape of the IR region can be generalized so that its edge extends to the boundary, coinciding with an EoW brane in brane-world holography. The elimination of long-scale entanglement reveals the fine entanglement structure that the degrees of freedom extremely close to the boundary of the BCFT can never be entangled with other degrees of freedom at any finite distance. 

In the following sections, our present work further establishes the connection between multipartite entanglement and geometry. We employ a broader set of entanglement measures, including EWCS, multi EWCS, and multi-entropy, to probe more refined classes of boundary entanglement structure beyond the bipartite discussion in   \cite{Ju_2024}. Moreover, new insights about these entanglement measures are gained during such broader multipartite discussion.

\section{Entanglement measures from EWCS for modified IR geometry}\label{sec3}

\noindent {In this section, we investigate the change of the multipartite entanglement structure in the two types of IR modified geometries introduced in the previous section, utilizing the EWCS related multipartite entanglement measures. As explained in the previous section, the multipartite entanglement is expected to increase at longer distance scales in the {spherically IR modified geometries}, while decrease in the hyperbolic case. We begin with a brief review of EWCS and its  boundary dual quantities. Then we analyze how EWCS and the associated multipartite entanglement measures behave in the two types of IR modified geometries.}

{The entanglement wedge cross section (EWCS) is an important geometric object in holography, first proposed in   \cite{Umemoto_2018}. The EWCS of two boundary subregions $A$ and $B$, denoted by $E_W(A:B)$, is defined to be proportional to the minimal cross section that divides $A$ and $B$ in the bulk entanglement wedge $M_{AB}$. In the following discussion, we denote such a minimal cross section by $\gamma'_{A,B}$, and the corresponding $E_W(A:B)$ is defined as  }
\begin{equation}
    E_W(A:B)\equiv \frac{\text{Area}(\gamma'_{A,B})}{4G_N}.
\end{equation}
The EWCS is conjectured to be the holographic dual of {two} information-theoretical quantities: the entanglement of purification  $E_P(A:B)$   \cite{Umemoto_2018} and the reflected entropy $S_R(A:B)$   \cite{dutta2019}. 

{Several measures of multipartite entanglement have been constructed utilizing the entanglement of purification or the reflected entropy, giving rise to EWCS-related multipartite entanglement measures in holography. }
 For instance, the definitions of the Markov gap   \cite{Hayden_2021} and the L-entropy   \cite{Basak2025} rely on the reflected entropy. For a tri-partitioned pure state, the Markov gap $h_{AB}$ is defined as 
\begin{equation}
    h_{AB} \equiv S_R(A:B)-I(A:B).
\end{equation}
Meanwhile, the L-entropy $l_{ABC}$ is defined by
\begin{equation}
    l_{ABC} \equiv [l_{AB}l_{BC}l_{AC}]^{1/3},
\end{equation}
where the two party L-entropy $l_{AB} \equiv 2\min\{S_A, S_B\} - S_R(A:B)$, and $l_{BC}$ and $l_{AC}$ are defined analogously. {The L-entropy satisfies the requirements of a genuine multipartite entanglement (GME) measure   \cite{Xie2021,Ma_2024} and is therefore expected to count the genuine multipartite entanglement among $A$, $B$, and $C$. In particular, a separable state has vanishing L-entropy and for the tripartite GHZ state the L-entropy attains its maximal value in three qubit systems. In principle, the definition of the tripartite L-entropy could be generalized to the multipartite systems by taking the geometric mean of all possible two party L-entropies   \cite{Basak2025}. This construction satisfies the requirements of a GME measure only for systems with up to $n=5$ parties. A refined version of generalized L-entropy was proposed in   \cite{ahn2025}, which satisfies the criteria for a valid GME measure for all $n \geq 3$.}

In analogy with the definition of the L-entropy, one can also define the generalized tripartite Markov gap {with a permutation symmetry in $A, B$ and $C$} as
\begin{equation}
    h_{ABC} \equiv [h_{AB}h_{BC}h_{AC}]^{1/3}.
\end{equation}
As discussed in   \cite{iizuka2025}, instead of using the geometric mean of $h_{AB}$, $h_{AC}$, and $h_{BC}$, the Markov gap can be extended to the multipartite case by constructing a suitable linear combination of the reflected multi-entropy   \cite{Yuan_2025} and the relevant entanglement entropies.

Moreover, the quantity $g(A:B) = 2E_P(A:B) - I(A:B)$ was also defined in   \cite{Zou_2021}, and it was proven that the necessary and sufficient condition for $g(A:B) = 0$ is that the state $\ket{\psi}_{ABC}$ can be written as a \textit{triangle state} up to a local unitary transformation. By triangle state we mean that $\ket{\psi}_{ABC}$ can be decomposed as \footnote{{A triangle state $\ket{\psi}_{ABC}$ lacks non-trivial tripartite entanglement if we divide the subsystems into smaller subsystems. However, when the subsystems in question are merely $A$, $B$, and $C$, according to the definition of genuine multipartite entanglement, $A$, $B$, and $C$ still shares genuine multipartite entanglement as $\ket{\psi}_{ABC}$ is not separable unless factorizing $A$, $B$, and $C$ are factorized into smaller subsystems.}}
\begin{equation}\label{triangle state}
    \ket{\psi}_{ABC} = \ket{\psi}_{A_LB_R}\ket{\psi}_{B_LC_R}\ket{\psi}_{A_RC_L}
\end{equation}
for some appropriate bipartition $\mathcal{H}_{\alpha}=\mathcal{H}_{\alpha_L}\otimes\mathcal{H}_{\alpha_R}$ ($\alpha=A,B,C$)   \cite{Zou_2021}.  In Section \ref{quadrangle state}, we will generalize this conclusion to the multipartite case. It should be emphasized that $h_{AB}$ and $g(A:B)$ are not exactly equivalent in general quantum systems although they have the same holographic dual, namely $2E_W(A:B)-I(A:B)$. 

For a general quantum state, $g(A:B)=0$ is strictly stronger than $h_{AB}=0$ since $g(A:B)\geq h_{AB}\geq=0$. The latter merely implies that $\ket{\psi}_{ABC}$ can be written as a sum of triangle states (SOTS)   \cite{Zou_2021} up to local unitary transformations, which takes the form
\begin{equation}\label{SOTS}
\ket{\psi}_{ABC}=\sum_{j}\sqrt{p_j}\ket{\psi_j}_{A^j_LB^j_R}\ket{\psi_j}_{B^j_LC^j_R}\ket{\psi_j}_{A^j_RC^j_L},
\end{equation}
where $\sum_jp_j=1$. For instance, the tripartite GHZ state has vanishing $h_{AB}$ but non-vanishing $g(A:B)$, and is therefore a SOTS but not a triangle state. Conversely, the condition $h_{AB}\neq 0$ is stronger than the condition $g(A:B)\neq 0$. A nonvanishing $h_{AB}$ indicates the presence of tripartite entanglement structures that are neither triangle states nor sums of triangle states. Nevertheless, since both quantities have the same holographic dual $2E_W(A:B)-I(A:B)$, it is therefore consistent in the holographic context to always adopt the stronger conditions in the two cases $2E_W(A:B)-I(A:B)=0$ and $2E_W(A:B)-I(A:B)\neq 0$.

In this work, we especially focus on the behavior of the two EWCS related measures: the Markov gap   \cite{Hayden_2021} (or equivalently $g(A:B)$  \cite{Zou_2021}) and the L-entropy   \cite{Basak2025} in IR modified geometries. This helps us learn more about multipartite entanglement structures in holography, and in particular about the relation between multipartite entanglement structures and the bulk geometry. In Section \ref{EWCS circular}, we examine the behavior of EWCS-related measures when the IR region is of a circular shape. However, it should be emphasized that the IR region is not restricted to being circular and can, in principle, take an arbitrary shape since gluing two manifolds together only requires that their induced metrics match on the boundary under null energy conditions. In the subsequent sections \ref{EWCS upper bound}  and \ref{section EWCShyperbolic}, {we mainly focus on the maximally possible modifications of the geometry that keep the entanglement wedges of certain specified boundary subregions invariant, resulting in a non-circular IR region}, and we investigate the extremal values that the EWCS can attain under such modifications. In particular, we find that, in the modified hyperbolic geometry, $g(A:B)$ of the boundary quantum state $\ket{\psi}_{ABC}$ can be made to vanish, rendering the boundary state a triangle state.

\subsection{EWCS for geometries with circular shaped modified IR regions}
\label{EWCS circular}

\noindent In this subsection, we examine how the EWCS and EWCS-related measures change in the two types of IR modified geometries with a circular IR region, under the same setup as in Section \ref{sec2}. We analyze the behavior of {$E_W(A:C)$—with the boundary pure state being partitioned into three equal-size subregions $A, B, C$}—in the extremal cases of the modified spherical and hyperbolic geometries, and compare it with that in pure AdS. Furthermore, we investigate how two EWCS-related measures, the Markov gap and the L-entropy, respond to such geometric deformations.
\begin{figure}[h] 
\centering 
\includegraphics[width=0.95\textwidth]{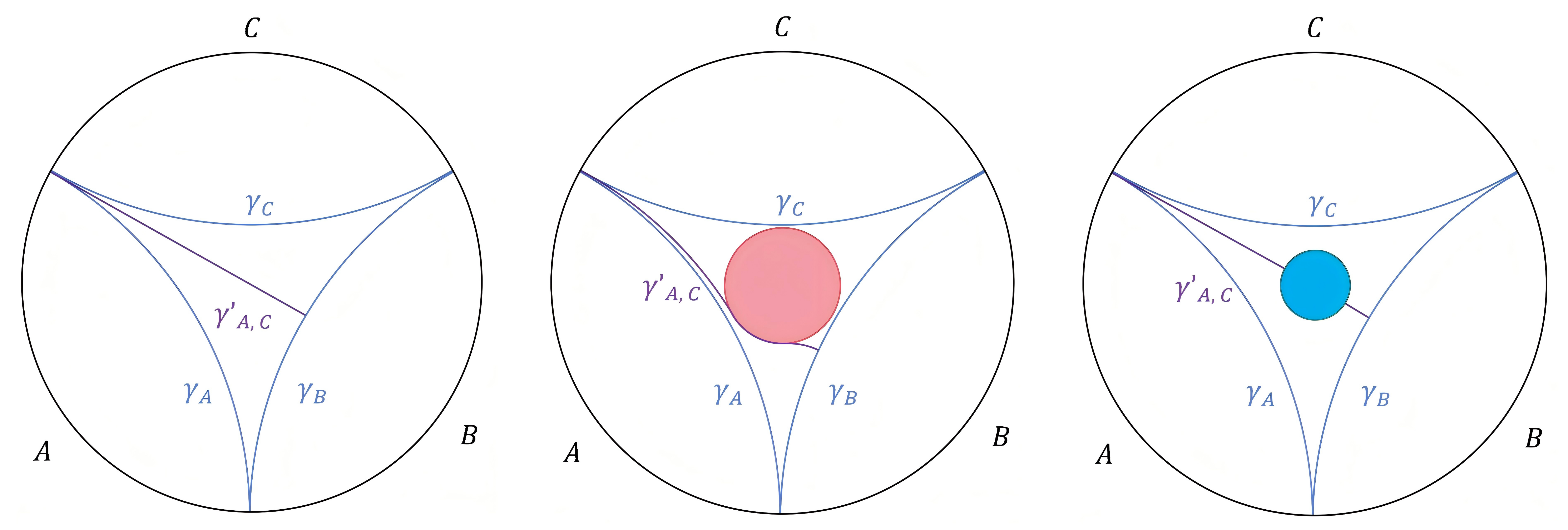} 
\caption{{The calculation of }EWCS in the pure AdS and the modified geometries. The entire conformal boundary is partitioned into three equal subregions $A$, $B$, and $C$. In the left, middle, and right figures, the purple curves represent $\gamma'_{A,C}$ in the pure AdS, the modified spherical geometry, and the modified hyperbolic geometry, respectively. The three blue curves denote the minimal surfaces homologous to $A$, $B$, and $C$. The modified IR region is sufficiently small so that the entanglement wedges of $A$, $B$, and $C$ remain unchanged after the geometric modification.} 
\label{circular IR region} 
\end{figure}

As shown in figure \ref{circular IR region}, the three subregions $A$, $B$, and $C$ are equal in size. We keep the modified IR region sufficiently small so that the entanglement wedges of $A$, $B$, and $C$ remain unchanged after the modification, thereby preserving the local short-range entanglement structure {within each subregion}. In the {first type of IR modified geometry, i.e.} the extremal case of the IR spherically modified  geometry, $\gamma'_{A,C}$ is pushed out to the boundary of the infrared region and thus acquires a larger length. In contrast, in the second type of IR modified geometry, i.e. the extremal case of the hyperbolic IR modified  geometry, the IR region tends to a ``light cone" so that {the length of the geodesic inside the IR region approaches 0}, and $\gamma'_{A,C}$ becomes shorter than in the pure AdS geometry. {Therefore, $E_W(A:C)$ increases in the spherically IR modified geometry while decreases in the hyperbolic case, compared to the original pure AdS result.}

From the definitions of the Markov gap and the L-entropy, it follows that $h_{ABC}$, which is equivalent to $h_{AB}$ since $A$, $B$, and $C$ have equal sizes, increases in the modified spherical geometry and decreases in the hyperbolic case, while $l_{ABC}$ behaves oppositely, {due to the opposite signs of the EWCS term in the two measures and the unchanged entanglement entropies of $A$, $B$, and $C$ before and after the modification.} This strikingly different behavior suggests that they detect different types of entanglement. {Since the Markov gap vanishes for triangle states and SOTS{\footnote{{Since entanglement structures should be invariant under local unitary (LU) transformations of each subsystem, entanglement measures are defined to be LU invariant. Accordingly, when we refer to triangle states and SOTS, we mean states that can be written in the forms of equations (\ref{triangle state}) and (\ref{SOTS}), respectively, up to local unitary transformations of each subregion.}}}, we expect the Markov gap to detect certain non-SOTS-type entanglement.  On the other hand, the L-entropy could be nonzero for triangle states and SOTS-type entanglement, e.g. the GHZ type entanglement, so we expect that L-entropy could at least detect SOTS-type entanglement. This implies that, in the spherical extremal case, the amount of non-SOTS  entanglement increases whereas the SOTS type or other types of entanglement that could be detected by the L-entropy decreases; in contrast, the situation is reversed in the hyperbolic case. } In Section \ref{L and M} we will make more detailed comparison between these two measures.

\subsection{EWCS for maximally IR modified geometries with certain entanglement wedges fixed}\label{section EWCSspherical}

\noindent {Unlike previous work   \cite{Ju_2024} and the preceding subsection, in this subsection, we consider a different shape for the IR modified regions—namely, the maximal extent to which the IR region can be deformed while keeping the density matrices of $A$, $B$, and $C$ unchanged. In general, the shape of the IR region is no longer circular.} One motivation for considering this maximal modification of the IR region is that it allows us to alter the multipartite entanglement structure among the subsystems as much as possible, while preserving the entanglement structure within each individual subsystem. Moreover, this is also related to the quantum marginal problem   \cite{schilling2014,schilling2015}.

The quantum marginal problem concerns the reconstruction of the full density matrix of a quantum system given the reduced density matrices of its subsystems. It is the problem of determining the mathematical conditions on density matrices belonging to different subsystems of interest ensuring that all are belonging via partial trace to the same quantum state of the total system. {The quantum marginal problem has important applications in quantum information theory. For example,} in quantum information theory, many entanglement measures are defined as the extrema of specific combinations of entanglement entropies. Representative examples include the squashed entanglement   \cite{Christandl2004}, the conditional entanglement of mutual information (CEMI)   \cite{Yang2008}, and the entanglement of purification introduced earlier. These quantities mostly exhibit desirable properties—such as convexity and faithfulness—yet they are typically difficult to evaluate in practice, and this difficulty is closely connected to the solution of the quantum marginal problem in special cases.

Although the quantum marginal problem is well-defined, it is in general highly nontrivial to analyze. In holography, this problem can be reformulated within the framework of modifying IR geometry   \cite{Ju_2024,Ju:2025mvz}. Fixing the reduced density matrices while deforming the bulk geometry outside these entanglement wedges is equivalent to constructing various possible density matrices of the full system. Our focus is to study the extremal features of these constructed holographic density matrices under such deformations. These extremal properties serve as constraints bounding the full-system density matrix, indicating that any holographic physically realizable global state must satisfy these bounds.

To ensure that the density matrices of certain boundary subsystems remain unchanged after modifying the IR geometry, the IR region whose geometry could be modified can only  stay outside the entanglement wedges of these boundary subsystems. In the spherical extremal case, the modified IR region could be the whole region outside those entanglement wedges so that the geometric structure inside the entanglement wedges of these boundary subsystems remain unchanged, thereby preserving their reduced density matrices. However, it should be noted that geodesics in the hyperbolic IR modified geometry are shorter than those in the pure AdS. As a result, even if the geometric deformation is restricted to regions outside the entanglement wedges, the corresponding minimal surfaces may still be altered. Therefore, the above arguments no longer hold in the case of the modified hyperbolic geometry and the IR region that could be modified is smaller than the whole region outside those entanglement wedges of the boundary subsystems. In Section \ref{section EWCShyperbolic}, we introduce the alternative geometric construction that preserves the entanglement wedges of given boundary subsystems in the hyperbolic case.

\subsubsection{The spherical case: EWCS upper bound for the quantum marginal problem}\label{EWCS upper bound}
\noindent We start from the spherically IR modified geometries. We will first consider the case in which the entanglement wedges of three boundary subregions are preserved. Subsequently, we turn to the configuration with five boundary subregions, requiring that the entanglement wedges of any two adjacent subregions remain unchanged. In the extremal case of the modified spherical geometry, as we will show, the computation of the EWCS reduces to solving a well-defined problem in quantum information theory and holography, namely, determining the upper bound of $S_R$ or $E_P$ under the constraint that certain reduced density matrices are fixed.

\subsubsection*{Configuration with fixed entanglement wedges for three boundary subregions}

\begin{figure}[h] 
\centering 
\includegraphics[width=0.95\textwidth]{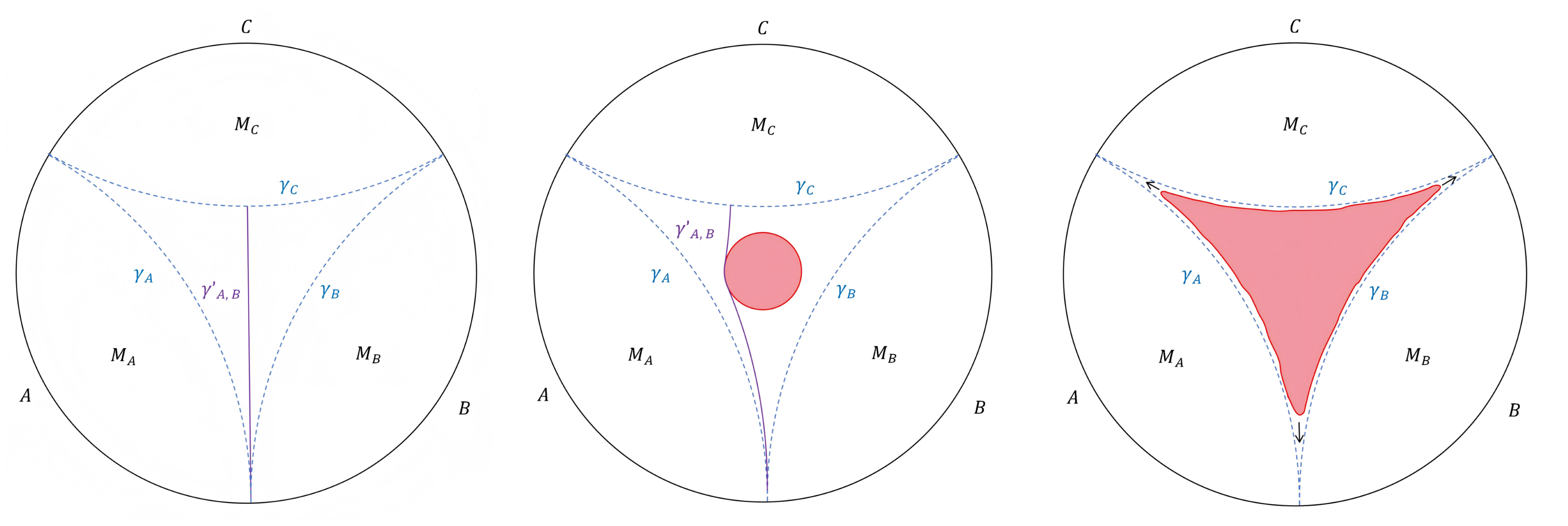}
\caption{$\gamma\prime_{A,B}$ in pure AdS and the spherically IR modified geometry. The three blue dashed lines are the three minimal surfaces homologous to $A$, $B$ and $C$.  We restrict the geometric modification to the exterior of these three minimal surfaces. When the modified IR region has not filled the entire outside region (middle figure), $\gamma\prime_{A,B}$ is pushed outward, and therefore takes a larger value compared to the pure AdS (left figure). As the spherical geometric region gradually fills the whole triangle region enclosed by the three minimal surfaces (right figure), $\gamma\prime_{A,B}$ eventually becomes the shorter one of $\gamma_A$ and $\gamma_B$.} 
\label{EWCS_in_Spherical} 
\end{figure}

\noindent We divide the conformal boundary into three adjacent subregions $A$, $B$, and $C$, whose entanglement wedges $M_A$, $M_B$, and $M_C$ are the regions in the bulk enclosed by the respective minimal surfaces {$\gamma_A$, $\gamma_B$, and $\gamma_C$}, indicated as blue dashed lines in figure \ref{EWCS_in_Spherical}. We modify the geometry only outside $M_A$, $M_B$, and $M_C$, which ensures that the reduced density matrices of $A$, $B$, and $C$ remain unchanged after the bulk geometry modification. Note that although we work in global AdS$_3$ for concreteness, our analysis can in principle be generalized to more general settings. 

$E_W(A:B)$ is proportional to the shortest geodesic $\gamma\prime_{A,B}$ from the boundary point between $A$ and $B$ to $\gamma_C$.  In the extremal case of the modified spherical geometry, since no geodesic penetrates the IR region, the geodesic $\gamma\prime_{A,B}$ is the curve along the boundary of the IR region, as shown in the middle figure of figure \ref{EWCS_in_Spherical}. It is obvious that, after the geometric modification, the length of $\gamma\prime_{A,B}$ becomes larger than its original value in the AdS vacuum.

As mentioned earlier, the IR region need not be restricted to a circular shape; it can take an arbitrary shape. In particular, we can allow the IR region to gradually fill the triangular region outside the three entanglement wedges $M_A$, $M_B$, and $M_C$. As the modified IR region expands toward the three minimal surfaces $\gamma_A$, $\gamma_B$, and $\gamma_C$, the length of $\gamma\prime_{A,B}$ increases gradually throughout this process. When the boundary of the modified IR region coincides with the minimal surfaces, $\gamma\prime_{A,B}$ reduces to the shorter of $\gamma_A$ and $\gamma_B$, at which point $E_W(A:B)$ attains its upper bound at fixed $\rho_A$ and $\rho_B$, and saturates the inequality 
\begin{equation}
   E_W(A:B) \leq \min \{S_A, S_B\}.   
\end{equation}
In holography, this inequality has a clear geometric interpretation: since both $\gamma_A$ and $\gamma_B$ are candidates of the minimal cross section, their lengths must be longer or equal to that of $\gamma'_{A,B}$. Furthermore, we emphasize that this also corresponds to the information-theoretic upper bound of $E_P$ and $S_R/2$ when $\rho_A$ and $\rho_B$ are fixed. This example shows that, modifying IR geometries could lead to boundary states where certain entanglement measures could saturate their information-theoretic bounds, giving rise to states with highly nontrivial and exotic entanglement structures.

\subsubsection*{Configuration with fixed entanglement wedges for overlapping boundary subregions}

\noindent In the calculation above, there is a subtle issue concerning the modified IR region: when we fix the entanglement wedges of non-overlapping boundary subregions, the maximal IR region that can be deformed typically extends all the way to the boundary, thereby inducing geometric changes in the asymptotic boundary area. From the perspective of the quantum marginal problem, if we fix only the reduced density matrices of several non-overlapping boundary subregions, the problem becomes relatively trivial and simple. In such cases, quantities like EWCS often attain their information-theoretic bounds, providing no additional constraints on the full density matrix. The situation changes when we fix the entanglement wedges of overlapping boundary subregions. We illustrate this with a concrete example below.

\begin{figure}[h] 
\centering 
\includegraphics[width=0.95\textwidth]{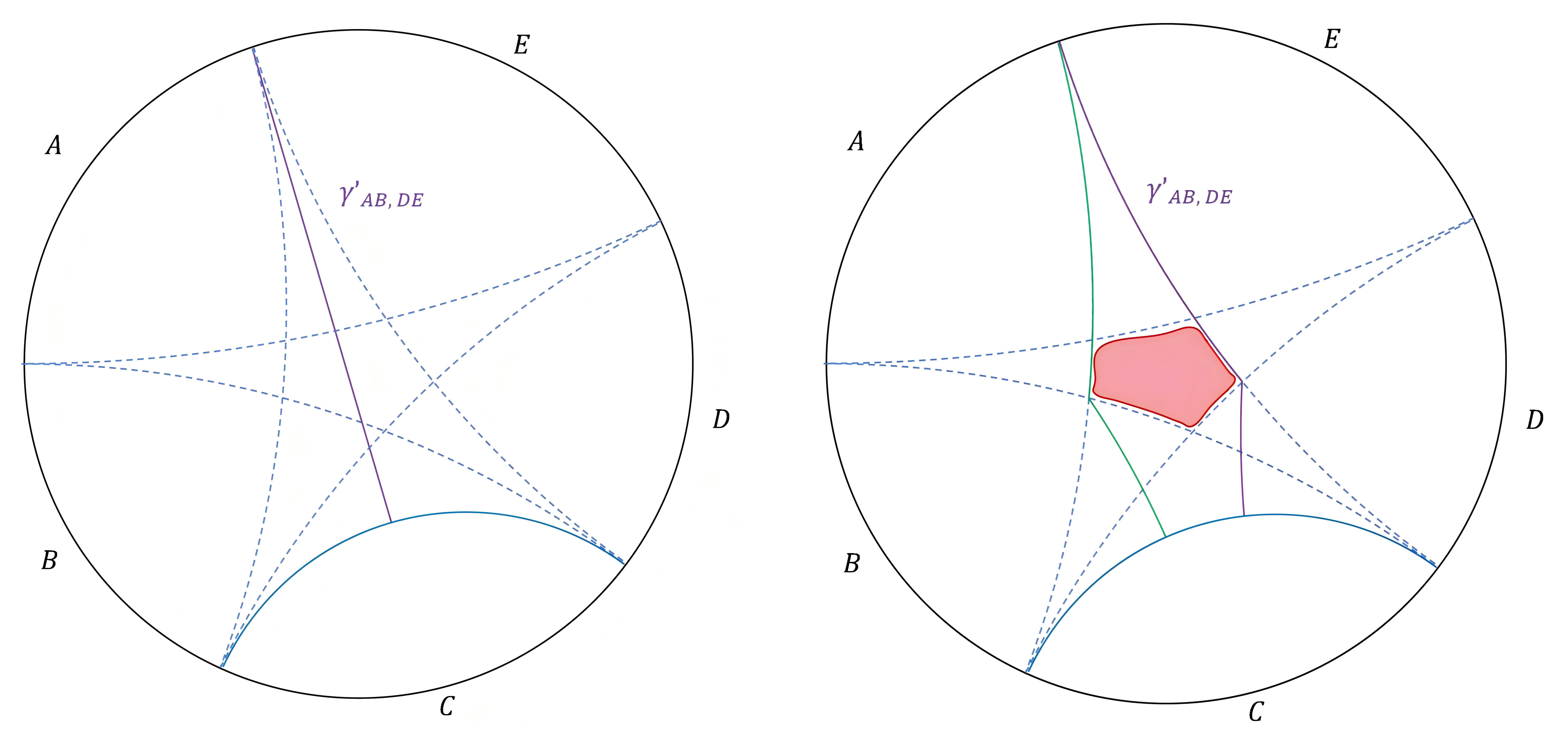} 
\caption{$\gamma\prime_{AB,DE}$ in the pure AdS geometry and the spherically IR modified geometry. We use blue dashed lines to represent the five minimal surfaces homologous to $AB$, $BC$, $CD$, $DE$, and $AE$, and restrict the geometric modification to the pentagonal region enclosed by these minimal surfaces. The blue solid line denotes the minimal surface homologous to $C$. In the left panel, the purple curve represents $\gamma'_{AB,DE}$ in the pure AdS. In the right panel, as the spherical modified IR region gradually fills the pentagonal area enclosed by the minimal surfaces, $\gamma'{AB,DE}$ can no longer penetrate this region and therefore corresponds to the shorter of the two curves — the purple and the green ones. } 
\label{pentagon} 
\end{figure}

As shown in the figure \ref{pentagon}, we divide the whole boundary region into five subregions $A$, $B$, $C$, $D$, and $E$, with the minimal surfaces homologous to $AB$, $BC$, $CD$, $DE$, and $AE$ indicated by blue dashed lines. The subregions whose entanglement wedges are fixed are chosen as: $AB$, $BC$, $CD$, $DE$, and $AE$, while the EWCS that we are going to calculate is $E_W(AB:DE)$. Therefore, we modify the geometry only outside the entanglement wedges of $AB$, $BC$, $CD$, $DE$, and $AE$, namely, within the pentagon in the bulk center enclosed by these five minimal surfaces. Note that in this circumstance, at least five subregions have to be considered, otherwise, there will be no IR region that could be constructed. This choice ensures that the density matrices $\rho_{AB}$, $\rho_{BC}$, $\rho_{CD}$, $\rho_{DE}$, and $\rho_{AE}$ remain unchanged, and the resulting region is purely infrared. {In the spherical extremal case, the EWCS will, in general, no longer coincide trivially with its information-theoretic bounds, and thus we obtain new holographic constraints on the full density matrix.}

The calculation of  $E_W(AB:DE)$ is shown in the right figure in figure \ref{pentagon}. When the boundary of the modified infrared region expands and gradually approaches the five minimal surfaces, $\gamma_{AB,DE}$ is pushed to the boundary of the pentagon, at which point the EWCS attains its extremal value while $\rho_{AB}$, $\rho_{BC}$, $\rho_{CD}$, $\rho_{DE}$, and $\rho_{AE}$ are kept unchanged. We calculate the change of $E_W(AB:DE)$ in this configuration after the IR geometry is 
deformed from the pure AdS into the spherical geometry. {It is convenient to perform this calculation by transforming to planar coordinates. We use $L_A$, $L_B$, $L_C$, $L_D$, and $L_E$ to denote the lengths of the subregions $A$, $B$, $C$, $D$, and $E$ in the planar coordinate and the result is given by}
\begin{equation}\label{simplified result}
\begin{aligned}
\Delta E_W(AB:DE) = \frac{1}{4G_N}\min\Biggl\{&
 \log\!\left[\tfrac{1}{2}\sqrt{\tfrac{L_C(2L_D+L_E)+2(L_D(L_D+L_E)+\sqrt{L_D(L_C+L_D)(L_D+L_E)(L_C+L_D+L_E)})}{L_D(L_C+L_D+L_E)}}\right], \\
& \log\!\left[\tfrac{1}{2}\sqrt{\tfrac{L_C(2L_B+L_A)+2(L_B(L_B+L_A)+\sqrt{L_B(L_C+L_B)(L_B+L_A)(L_C+L_B+L_A)})}{L_B(L_C+L_B+L_A)}}\right]
\Biggr\}.
\end{aligned}
\end{equation}
 The detailed derivation is given in Appendix \ref{A}. Note that $\Delta E_W(AB:DE)$ is an IR quantity with no UV divergence. {As in the spherical extremal case, no geodesics enter the IR region, and the EWCS in this geometry therefore attains its maximal value among all possible IR deformations.} {We thus obtain an upper bound on $E_W(AB:DE)$ in holography, namely
 \begin{equation}
    E_W(AB:DE)\leq \Delta E_W(AB:DE)+E_W(AB:DE)_{\text{pure AdS}},  
 \end{equation}
 which must be satisfied by any full density matrix constructed within this framework. It should be emphasized, however, that this is a holographic upper bound and might not apply to general quantum states. This construction can also be generalized to cases involving more subregions.}

\subsubsection{The hyperbolic case: minimization of EWCS}\label{section EWCShyperbolic}

\noindent In this section, we are going to analyze the lower bound of EWCS under the constraints that certain reduced density matrices are held fixed. Since entanglement measures typically increase in the spherical extremal case, we instead focus on the hyperbolic case. In this type of IR modified geometries, the EWCS under the constraints of unchanged density matrix of boundary subsystems would be smaller than the pure AdS results. Among all possible hyperbolic IR modifications, we could obtain the smallest value of EWCS under the same constraints and this could be conjectured to be the lower bound of EWCS for this quantum marginal problem in holography. 

Similar to the spherical extremal case discussed in Section \ref{EWCS upper bound}, we now replace an IR region with a hyperbolic geometry that ensures the entanglement wedges of certain boundary subregions remain unchanged after the geometric modification. However, in the hyperbolic case, these IR regions are no longer simply the regions outside a set of entanglement wedges. Instead, the IR region that could be modified must be smaller, as the geodesics are pulled toward the center of the bulk by the hyperbolic IR deformation of the geometry. A suitable geometric construction can be employed to determine the IR region that meets this requirement.

\begin{figure}[h] 
\centering 
\includegraphics[width=0.5\textwidth]{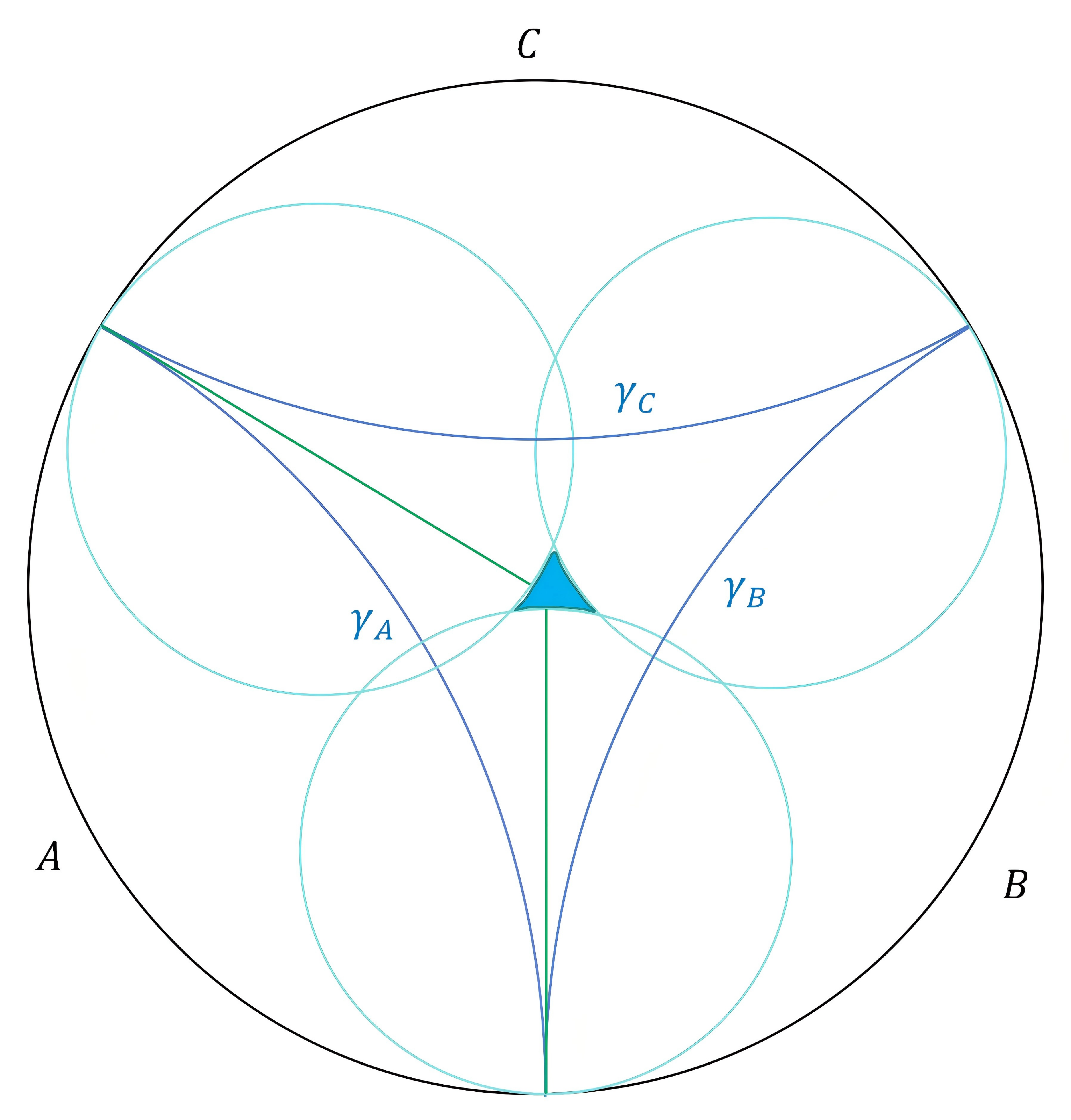} 
\caption{The hyperbolic IR region that preserves the entanglement wedges of $A$, $B$, and $C$. The three horospheres are tangent to {the boundary at} the points separating $A$, $B$, and $C$ and intersect pairwise. The IR region, shown in blue, corresponds to the region enclosed by these three horospheres.} 
\label{HyperbolicIRregion} 
\end{figure}
A horosphere is a circle tangent to the conformal boundary on a Cauchy surface in AdS$_3$. It possesses the property that all geodesics drawn from the tangency point to any other point on the horosphere have equal length. Horospheres are especially useful here as the lengths of the geodesics inside the IR extremal hyperbolic region are zero. As illustrated in figure \ref{HyperbolicIRregion}, we could take the IR region as the region enclosed by three horospheres that are tangent to the boundary at the boundaries of $A$, $B$, and $C$ and intersect pairwise: this configuration ensures that $\gamma_A$ does not intersect the IR region since its length is shorter than that of the two green geodesics, and the same holds for $\gamma_B$ and $\gamma_C$. Consequently, the {RT surfaces and the} entanglement wedges of $A$, $B$, and $C$ are preserved. This geometric construction can be naturally generalized to configurations with more fixed entanglement wedges by introducing additional horospheres that intersect, or are tangent to, their neighboring ones.

Unlike the spherical extremal case, the choice of defining an IR region for the hyperbolic extremal case is not unique because the radii of the horospheres could be adjusted. Therefore, the corresponding values of EWCS are not unique. We could vary the radii of these horospheres while ensuring that they remain pairwise intersecting or tangent, and identify the configuration that minimizes the EWCS. This is conjectured to be the minimum of EWCS in the quantum marginal problem in holography as the geodesic lengths in the IR region are the smallest in the extremal hyperbolic IR geometries among all possible geometries. The results show that, when the entanglement wedges of three subregions $A$, $B$, and $C$ are held fixed, $E_W(A:B)$, $E_W(A:C)$, and $E_W(B:C)$ attain their minimal values when the three horospheres are pairwise tangent. Similarly, when fixing the entanglement wedges of four subregions $A$, $B$, $C$, and $D$, we find that $E_W(A:B)$ also reaches its minimum when three of the four horospheres become pairwise tangent.

Moreover, when $E_W(A:B)$, $E_W(A:C)$, and $E_W(B:C)$ attain their minimal values under the constraint that the entanglement wedges of the three subregions $A$, $B$, and $C$ are held fixed, we find that $g(A:B)=g(B:C)=g(A:C)=0$. Consequently, the boundary CFT state corresponds to a triangle state defined in equation (\ref{triangle state}). This observation can be generalized to multipartite cases, suggesting that one can construct bulk geometries whose dual boundary states take the form of \textit{quadrangle states}, \textit{pentagon states}, and so on, whose definitions will be given later. We will discuss these constructions in detail in Section \ref{multi-MG} and Section \ref{quadrangle pentagon state}.  

\subsubsection*{Minimum of EWCS with fixed entanglement wedges of three boundary subregions}\label{EWCShyperbolic3}

\begin{figure}[h] 
\centering 
\includegraphics[width=0.95\textwidth]{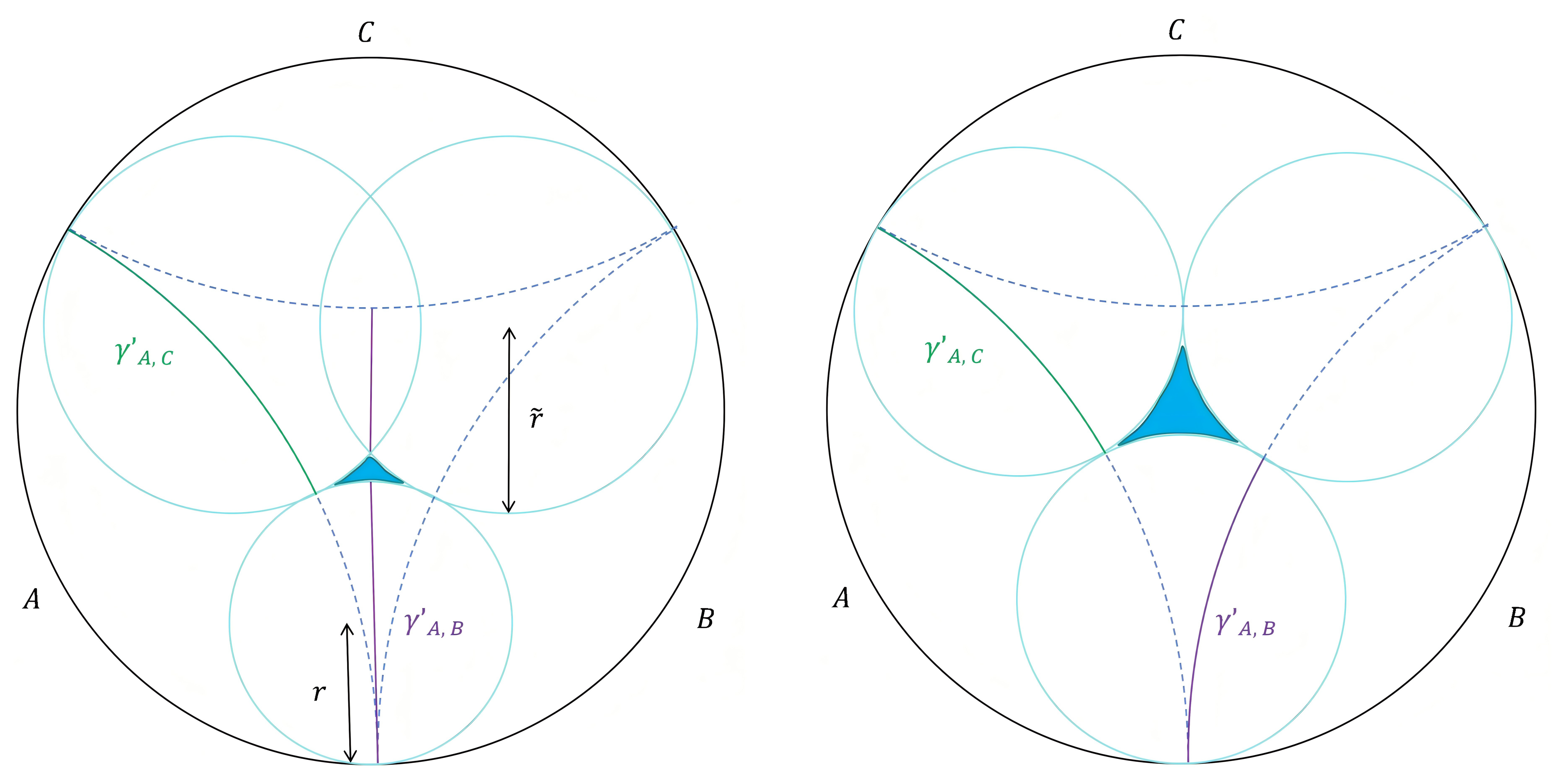} 
\caption{The hyperbolic IR modified geometry that leaves the entanglement wedges of $A$, $B$, and $C$ unchanged. Here $A$, $B$, and $C$ are chosen to be three boundary subregions of equal size. In this setup, the three horospheres must either intersect or be tangent to each other, and the region of the hyperbolic geometry is taken to be the IR region enclosed by these three horospheres. Purple curves represent $\gamma'_{A,B}$, while green curves represents $\gamma'_{A,C}$. When the two upper horospheres intersect, $\gamma'_{A,B}$ consists of two disconnected segments (left figure); when all three horospheres are pairwise tangent, $\gamma'_{A,B}$ becomes a single continuous curve (right figure).} 
\label{three_subregion_hyperbolic} 
\end{figure}
\noindent For the case of three boundary subregions, we only need to fix three points on the boundary to determine $A$, $B$, and $C$. However, due to conformal symmetry, these three points can always be mapped to three equally spaced points on the conformal boundary. For simplicity, we consider the case where the three subregions $A$, $B$, and $C$ are of equal size. As shown in figure \ref{three_subregion_hyperbolic}, we draw three horospheres that are tangent to the boundaries of subregions $A$, $B$, and $C$, respectively, such that each pair of horospheres is either tangent or intersecting to ensure the entanglement wedges of the subsystems stay the same. If the IR region outside these three horospheres is replaced with the extremal hyperbolic geometry, then the entanglement wedges of $A$, $B$, and $C$ remain unchanged after modifying the geometry.

Among the three horospheres, we denote by $r$ the radius of the smallest one. {To minimize the EWCS, the hyperbolic IR region should be taken as large as possible, which corresponds to the configuration in which the other two horospheres are tangent to the smallest one. We denote by $\tilde{r}$ the radius of these two larger horospheres. Since they are tangent to the smallest horospheres, $\tilde{r}$ is given by $\tilde{r}=(3-3r)/(r+3)$.} The parameter $r$ must be in the range from $0\leq r \leq 2\sqrt{3}-3$ for it to be the smallest radius. At $r=2\sqrt{3}-3$ the three horospheres have equal radii and are mutually tangent. The expression for $E_W(A:B)$ in terms of $r$ and $\tilde{r}(r)$ is calculated to be
\begin{equation}\label{results hyperbolic3}
    E_W(A:B) = \frac{1}{4G_N}\left[\log \frac{\sqrt{3} \left(1+\tilde{r}+\sqrt{\tilde{r}^2+6\tilde{r}-3}\right)}{3-\tilde{r}-\sqrt{\tilde{r}^2+6\tilde{r}-3}}\frac{4r}{(1-r)\epsilon}\right],
\end{equation}
where $\epsilon$ is the UV cutoff. The detailed derivation is presented in Appendix \ref{A2}.

\begin{figure}[h] 
\centering 
\includegraphics[width=0.7\textwidth]{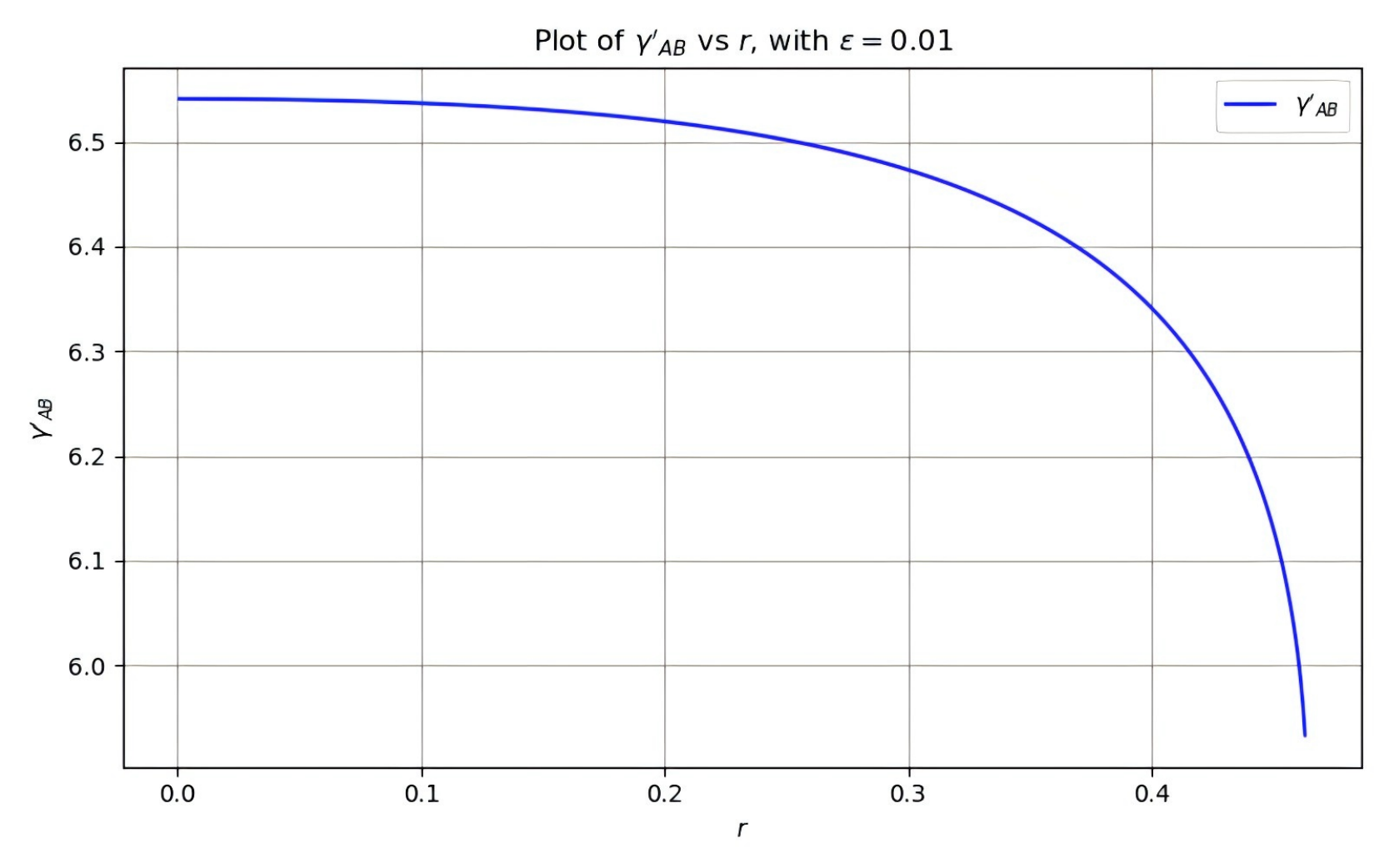} 
\caption{The plot of $\gamma'_{A,B}$ with respect to $r$, where we take the UV cutoff $\epsilon=0.01$. It can be observed that $\gamma'_{A,B}$ decreases monotonically with $r$, reaching its minimum when $r$ takes its maximum value, namely when the three horospheres are pairwise tangent.} 
\label{three_subregion_hyperbolic_gamma} 
\end{figure}

In figure \ref{three_subregion_hyperbolic_gamma} we plot the length of $\gamma'_{A,B}$ as a function of $r$. It is evident that $E_W(A:B)$ attains its minimum when $r$ reaches its maximum, namely when the three horospheres are pairwise tangent. It is worth noting that in this case, {because the three horospheres are tangent to each other,} we have
\begin{equation}
\begin{aligned}
    E_W(A:B)+E_W(A:C)&=S_A,\\
    E_W(A:B)+E_W(B:C)&=S_B,\\
    E_W(A:C)+E_W(B:C)&=S_C.\\
\end{aligned}
\end{equation}
Accordingly, each EWCS can be expressed in terms of the entanglement entropies
\begin{equation}
\begin{aligned}
    E_W(A:B)&=(S_A+S_B-S_C)/2,\\
    E_W(B:C)&=(S_B+S_C-S_A)/2,\\
    E_W(A:C)&=(S_A+S_C-S_B)/2.\\
\end{aligned}
\end{equation}
Thus we obtain
\begin{equation}
    g(A:B)=g(B:C)=g(A:C)=0.
\end{equation}
This implies that the boundary quantum state $\ket{\psi}_{ABC}$ is a triangle state    \cite{Zou_2021}, and therefore takes the form given in equation (\ref{triangle state}). {In this example,  $E_W(A:B)$, $E_W(A:C)$, and $E_W(B:C)$ attain their information-theoretic lower bounds through an appropriate modification of the IR geometry. This confirms the conjecture that, among all possible hyperbolic IR modifications, the minimal EWCS indeed realizes the lower bound under the constraints imposed by the quantum marginal problem. Moreover, this result suggests that the EWCS can saturate its information-theoretic lower bound in holography, and that a holographic construction of a triangle state is possible.}

\subsubsection*{Minimum of EWCS with fixed entanglement wedges of four boundary subregions}\label{EWCShyperbolic4}

\noindent
{In this section, we discuss the minimal value of $E_W(A:B)$ in the hyperbolic modified IR geometry while fixing the entanglement wedges of four subsystems $A$, $B$, $C$, and $D$. We will see that, when the boundary system is divided into more subregions and one considers the quantum marginal problem of minimizing the EWCS between two of them, the resulting minimal value may not coincide trivially with the quantum information-theoretic lower bound.} For the case of four fixed entanglement wedges, we need to fix four points on the boundary to determine subregions $A$, $B$, $C$ and $D$. Due to conformal symmetry, only one degree of freedom remains, as a conformal transformation can always be applied to make $A$ the same size as $C$, and $B$ the same size as $D$, and the total lengths of the four subregions are fixed. The central angle corresponding to interval $A$ and $C$ is {denoted as} $2\theta$. The relation between $\theta$ and the cross-ratio $X(A:C)$ of $A$ and $C$ can be calculated to be
 \begin{equation}
    X(A:C) = \frac{\sin^2\theta}{1-\sin^2{\theta}}.
\end{equation}
\begin{figure}[h] 
\centering 
\includegraphics[width=1\textwidth]{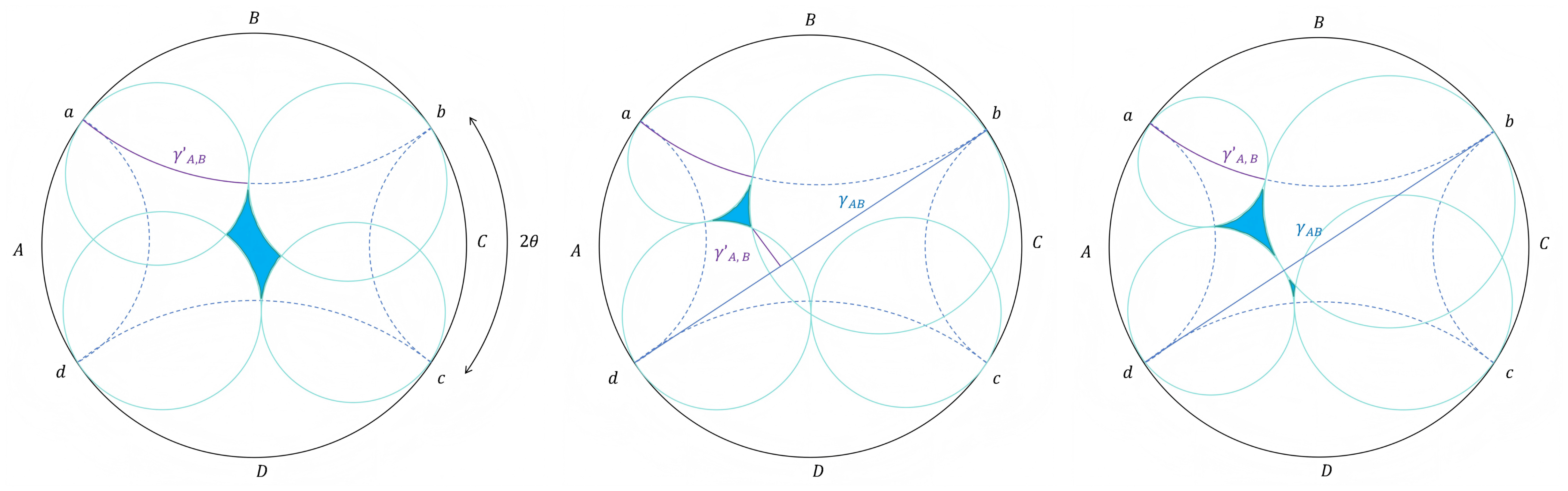} 
\caption{The hyperbolic modified geometry that preserves the entanglement wedges of $A$, $B$, $C$, and $D$. We use blue dashed lines to represent the four minimal surfaces homologous to $A$, $B$, $C$, and $D$. The radii of $\bigodot a$, $\bigodot b$, and $\bigodot c$ are $r$, $r'$ and $r''$, respectively. Any two adjacent horospheres are intersecting or tangent. If $\bigodot b$ intersects $\bigodot d$, then $\gamma'_{A,B}$ consists of 2 parts(middle figure). In the right figure $\bigodot a$, $\bigodot b$, and $\bigodot d$ are pairwise tangent.} 
\label{four_subregion_hyperbolic} 
\end{figure}
We denote by $\bigodot a$, $\bigodot b$, $\bigodot c$, and $\bigodot d$ the horospheres tangent at the points $a$, $b$, $c$, and $d$, with radii $r_a$, $r_b$, $r_c$, $r_d$, respectively. Since adjacent horospheres must intersect or be tangent, we have to demand that 
\begin{equation}
\begin{aligned}
r_b &\geq \frac{\cos^2 \theta - r_a \cos^2 \theta}{\cos^2 \theta + r_a \sin^2 \theta}, \\
r_d &\geq \frac{\sin^2 \theta - r_a \sin^2 \theta}{\sin^2 \theta + r_a \cos^2 \theta},
\end{aligned}
\end{equation}
and both inequalities are saturated when $\gamma'_{A,B}$ reaches its minimum, meaning that $\bigodot a$ is tangent to $\bigodot b$ and $\bigodot d$. Note that $r_c$ does not affect the result. As shown in figure \ref{four_subregion_hyperbolic}, if $\bigodot b$ and circle $\bigodot d$ do not intersect, then $\gamma'_{A,B}$ only contains one segment and the length is entirely determined by $r_a$. If $\bigodot b$ intersects $\bigodot d$ (shown in the middle of figure \ref{four_subregion_hyperbolic}), the length of $\gamma'_{A,B}$ consists of 2 parts. As circle $\bigodot b$ and circle $\bigodot d$ change from being tangent to intersecting, the purely infrared segment of $\gamma'_{A,B}$ increases rapidly in length. Therefore, we have good reason to believe that the total length of $\gamma'_{A,B}$ attains its minimum when circles $\bigodot b$ and $\bigodot d$ are tangent. If we further impose that $\bigodot b$ and $\bigodot d$ are tangent, then we get
\begin{equation}
    r_b+r_d=1,
\end{equation}
which is equivalent to 
\begin{equation}
    r_a = \frac{\sin 2\theta}{2+\sin 2\theta}.
\end{equation}

\begin{figure}[h] 
\centering 
\includegraphics[width=0.7\textwidth]{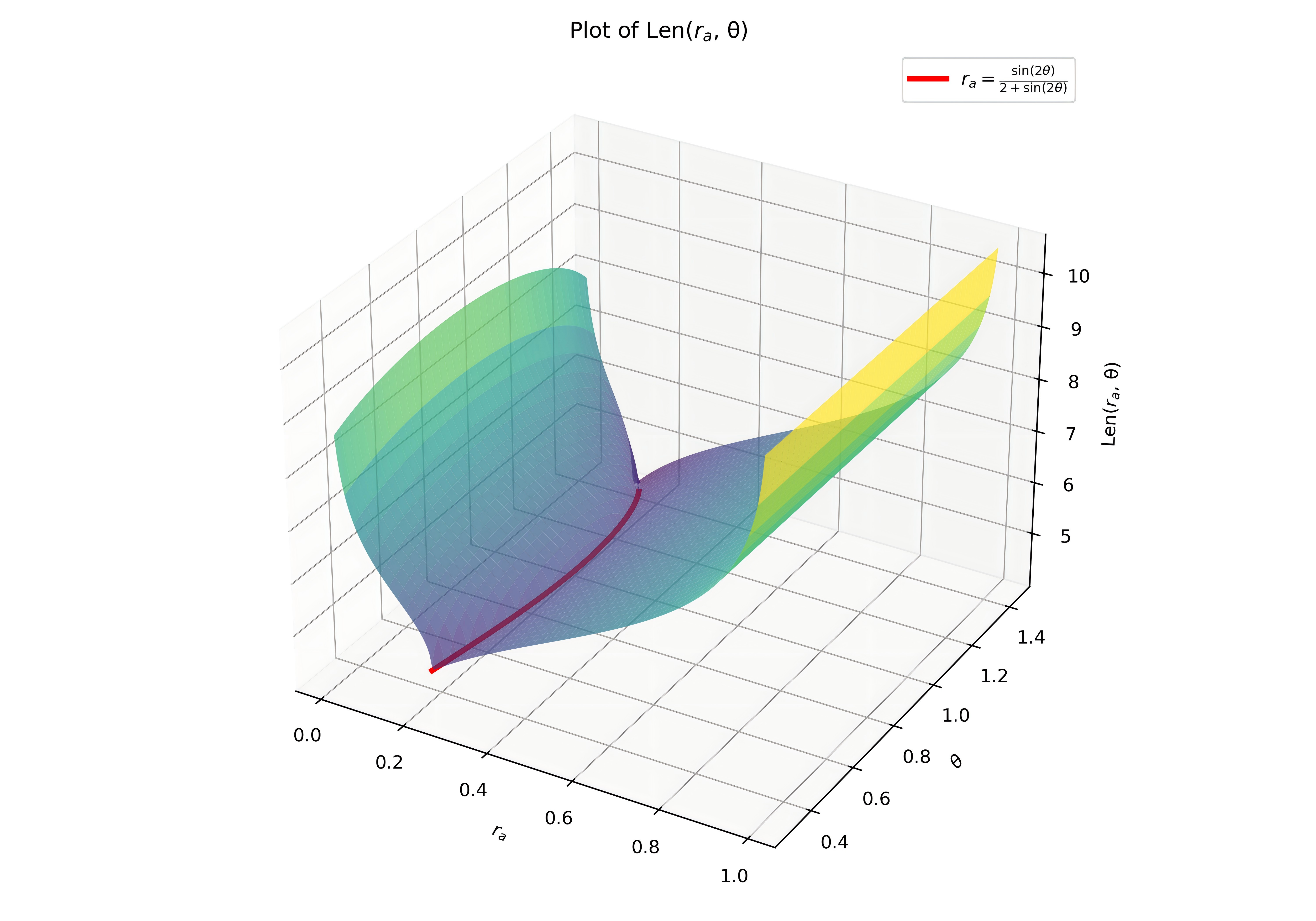} 
\caption{The plot of the length of $\gamma'_{A,B}$, denoted by $\text{Len}(r_a,\theta)$, as a function of $r_a$ and $\theta$ with the UV cutoff $\epsilon=0.01$. The red curve represents the tangency condition between $\bigodot b$ and $\bigodot d$, and it can be observed that the minimum of $\text{Len}(r_a,\theta)$ always lies on this curve.}
\label{plot of Len r theta} 
\end{figure}

We plot the length of $\gamma'_{A,B}$, denoted $\text{Len}(r_a,\theta)$, as a function of $r_a$ and $\theta$ in figure \ref{plot of Len r theta}. It is evident that, for fixed $\theta$ (corresponding to fixed lengths of the boundary subregions), the point where $\text{Len}(r_a,\theta)$ reaches its minimum always lies on the curve $r_a=\frac{\sin 2\theta}{2+\sin 2\theta}$. This agrees with our argument that $\gamma'_{A,B}$ has minimal length when $\bigodot a$, $\bigodot b$, and $\bigodot d$ are pairwise tangent. It then follows that the minimal EWCS is given by
\begin{equation}
    E_W(A:B) = \frac{1}{4G_N}\log\frac{4r_a}{\epsilon(1-r_a)} = \frac{1}{4G_N}\log \left[\frac{2\sin2\theta}{\epsilon} \right] = \frac{1}{4G_N}\log \left[ \frac{4}{\epsilon} \frac{\sqrt{X(A:C)}}{1+X(A:C)}\right],
\end{equation}
where $\epsilon$ is the UV cutoff. {This is claimed to be the minimum of EWCS under the constraint that the density matrices of $A$, $B$, $C$, and $D$ are fixed in holography.}

\subsection{Extremal values of L-entropy and Markov gap}\label{L and M}
\noindent After obtaining the results for EWCS in IR modified geometries, we now analyze the extremal values of the L-entropy and the Markov gap under the constraint that the entanglement wedges of the three boundary subregions $A$, $B$, and $C$ remain invariant after modifying the IR geometry.
\begin{figure}[h] 
\centering 
\includegraphics[width=0.8\textwidth]{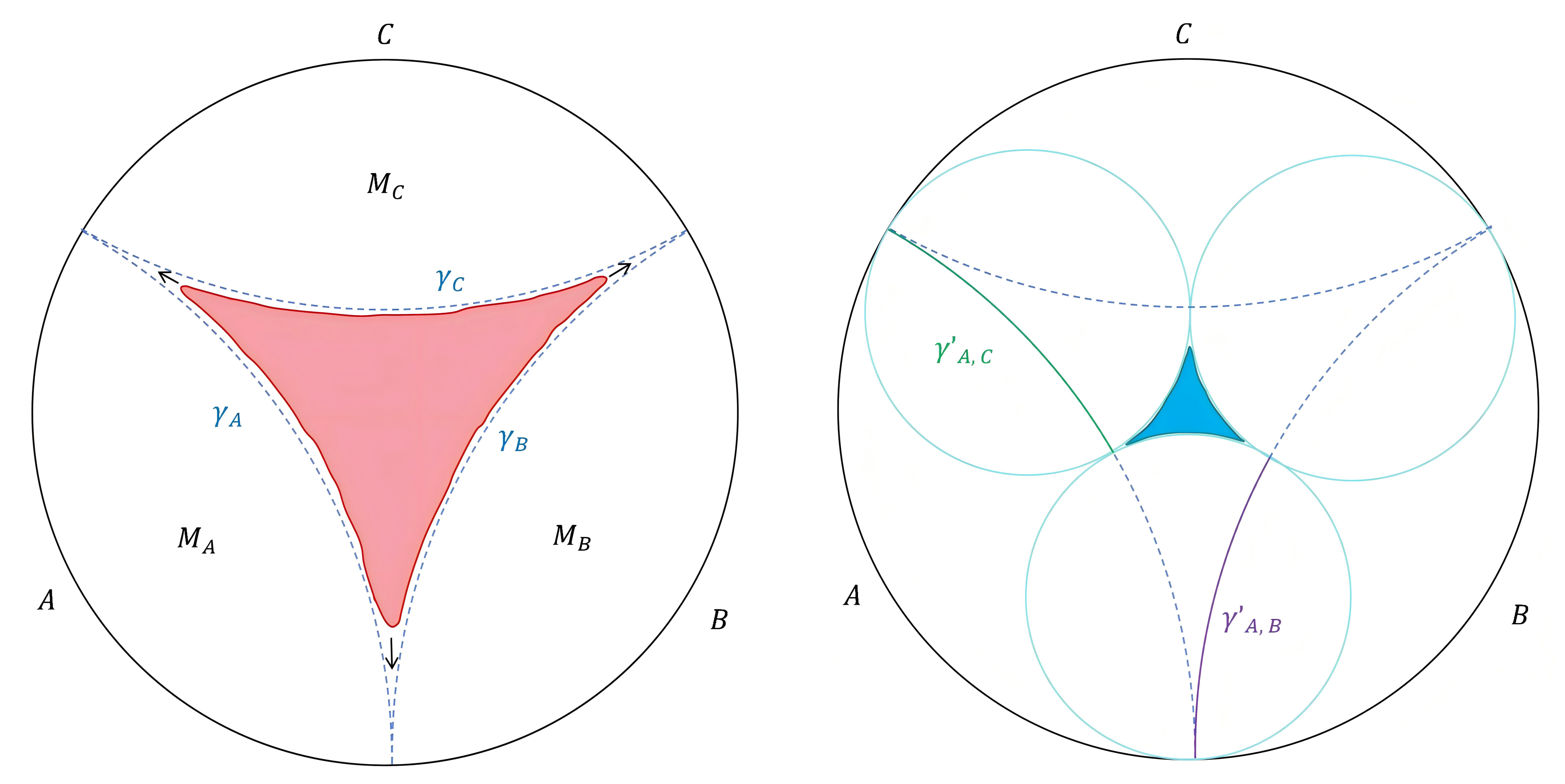} 
\caption{The two configurations in which the EWCS reaches its extremal values with the entanglement wedges of $A$, $B$, and $C$ kept fixed are shown as follows. In the left panel, the region outside the entanglement wedges of $A$, $B$, and $C$ is replaced by the spherical geometry, where $E_W(A:B)$, $E_W(A:C)$, and $E_W(B:C)$ attain their maximal values. In the right panel, the region enclosed by three pairwise tangent horospheres is replaced by the hyperbolic geometry, where $E_W(A:B)$, $E_W(A:C)$, and $E_W(B:C)$ reach their minimal values.}
\label{EWCSextremal}
\end{figure}
In the holographic case, the computations of the L-entropy and the Markov gap both involve the EWCS as well as the entanglement entropies of $A$, $B$, and $C$ (or their linear combinations). Moreover, the L-entropy is negatively correlated with the EWCS, while the Markov gap is positively correlated with it. Therefore, when the entanglement wedges of $A$, $B$, and $C$ are kept fixed, both the L-entropy and the Markov gap attain their extrema when the EWCS reaches its extremal value.

We present in figure \ref{EWCSextremal} the two bulk geometries in which the EWCS reaches its extremal values. In the spherical extremal case, $E_W(A:B)$, $E_W(A:C)$, $E_W(B:C)$ achieve their maximal values,
\begin{equation}
\begin{aligned}
    E_W(A:B)_{\text{max}} &= \min\{S_A,S_B\},\\
    E_W(A:C)_{\text{max}} &= \min\{S_A,S_C\},\\
    E_W(B:C)_{\text{max}} &= \min\{S_B,S_C\}.
\end{aligned}
\end{equation}
In the hyperbolic extremal case, they reach their minimal values,
\begin{equation}
\begin{aligned}
    E_W(A:B)_{\text{min}} &= \frac{1}{2}I(A:B),\\
    E_W(A:C)_{\text{min}} &= \frac{1}{2}I(A:C),\\
    E_W(B:C)_{\text{min}} &= \frac{1}{2}I(B:C).
\end{aligned}
\end{equation}
According to the definitions of the L-entropy and the Markov gap, we can conclude that the L-entropy vanishes in the spherical extremal case, while the Markov gap reaches its maximum. In contrast, in the hyperbolic extremal case, the situation is reversed: the Markov gap vanishes in the hyperbolic extremal case, while the L-entropy reaches its maximum.

{These opposite behaviors confirm our conjecture in Section \ref{EWCS circular} that the two quantities probe distinct types of multipartite entanglement: the Markov gap measures the amount of non-SOTS–type entanglement, whereas the L-entropy quantifies the SOTS-type contribution. Moreover, according to previous results in    \cite{Ju_2024}, in the spherical extremal case, entanglement with correlation length larger than a critical scale is transferred to the longest length scale. As a result, the tripartite entanglement among these subregions becomes more difficult to be transformed into a triangle state via local unitary operations acting on each subsystem, leading to an enhancement of non-SOTS–type entanglement in spherically modified IR geometries. In contrast, in the hyperbolic case, tripartite entanglement with correlation length exceeding the critical scale is truncated, thereby converting non-SOTS–type entanglement into SOTS-type entanglement and rendering the boundary state $\ket{\psi}_{ABC}$ a triangle state. }

\section{Entanglement measures from multi-EWCS for modified IR geometry}\label{section multi EWCS}
\noindent In this section, we study the variation of the multi-EWCS before and after the modification of the bulk IR geometry {in order to analyze the behavior of more-partite entanglement in IR modified geometries}. We begin by briefly reviewing the definition of multi-EWCS and its dual physical quantity in quantum information theory\textemdash the multipartite entanglement of purification (the multi-EoP)    \cite{Umemoto_2018multiEWCS,Bao_2019} in Section \ref{review multiEWCS}. In Section \ref{multiEWCS in modified geometry}, we explore the behavior of multi-EWCS under different bulk geometries. Then in Section \ref{multiEWCS measures}, we begin to discuss the related multipartite entanglement measures and signals. From the behavior of multipartite entanglement measures and signals in different types of bulk geometry, one can extract information about the corresponding entanglement structures of the boundary states. In analyzing the entanglement structure of the boundary state dual to the hyperbolic IR modified geometry, we find that the boundary state exhibits a special class of entanglement structure. 

{As discussed in Section \ref{L and M}, while the reduced density matrices of three boundary subregions $A$, $B$, and $C$ are kept fixed, modifying the hyperbolic geometry allows us to truncate all tripartite entanglement with correlation length greater than the critical scale, thereby rendering $\ket{\psi}_{ABC}$ a triangle state. Motivated by this observation, we conjecture that for a general choice of $n$ boundary subregions, a similar property of the boundary quantum state can be realized through an appropriately modified hyperbolic geometry, and we provide a proof for the cases $n = 4$ and $n=5$ from the perspective of quantum information theory in Section \ref{quadrangle pentagon state}.}

It should be noted that an entanglement signal \cite{balasubramanian2024} is not, strictly speaking, an entanglement measure, since it does not necessarily satisfy the various properties required of a measure as discussed in    \cite{Ma_2024}. Nevertheless, a nonzero signal indicates the presence of genuine tripartite entanglement in the system. In this work, we do not make a distinction between entanglement signals and entanglement measures.

\subsection{Review of multi-EWCS}\label{review multiEWCS}
\begin{figure}[h] 
\centering 
\includegraphics[width=0.5\textwidth]{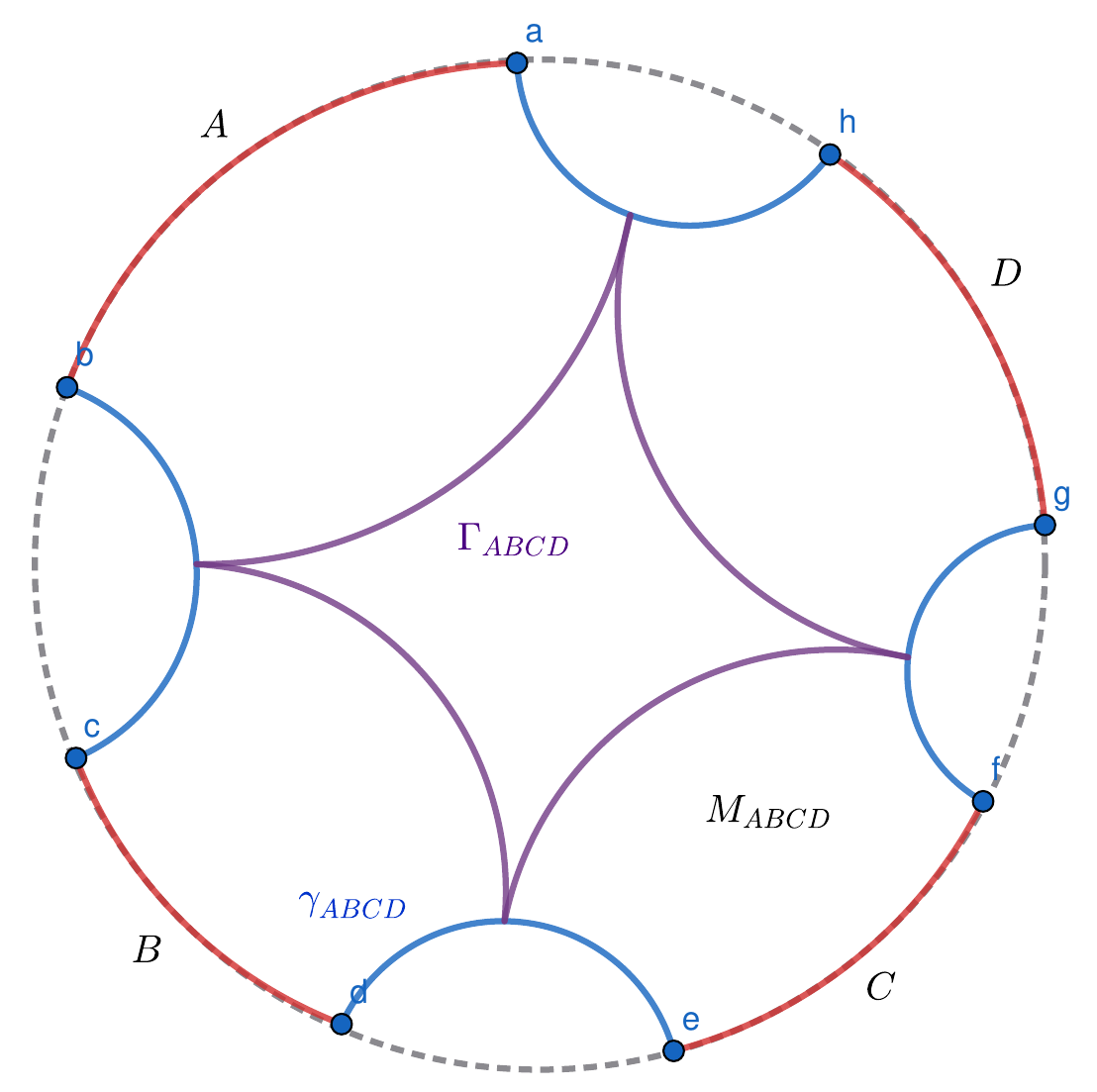} 
\caption{A schematic diagram of $\Gamma_{ABCD}$. The red curves represent the four subregions, $A$, $B$, $C$, and $D$. The blue curve $\gamma_{ABCD}$ represents the minimal surface that is homologous to $ABCD$, and the region they enclose in the bulk is the entanglement wedge of $ABCD$. The purple curves anchored on $\gamma_{ABCD}$ represent $\Gamma_{ABCD}$.} 
\label{definition multiEWCS} 
\end{figure}
\noindent The multi-EWCS is a direct generalization of the bipartite EWCS to the multipartite case. For a collection of non-overlapping subregions on the boundary, $A = A_1 \cup A_2 \cup \dots \cup A_n$, let $\gamma_A$ be the minimal surface homologous to $A$. The region enclosed by $\gamma_A \cup A$ in the bulk is the entanglement wedge, $M_A$, of $A$. We then partition $\gamma_A \cup A$ into $n$ parts, $\gamma_A \cup A = \tilde{A_1} \cup \dots \cup \tilde{A_n}$, such that $A_i \subset \tilde{A_i}$. Treating $M_A$ as a new bulk, $\Gamma_{A_1A_2...A_n}$ is the set of minimal surfaces that are homologous to all $\tilde{A_i}$ and have the minimum area across all such partitions. One explicit example is shown in figure \ref{definition multiEWCS}. The multi-EWCS    \cite{Umemoto_2018multiEWCS, Chu:2019etd, Bao_2019, Bao_2019multiSR} is then defined as 
\begin{equation}
    E_W(A_1:A_2:\dots:A_n) \equiv \frac{\text{Area}(\Gamma_{A_1A_2...A_n})}{4G_N}.
\end{equation}
Substantial evidence indicates that the holographic dual of the multi-EoP is the multi-EWCS    \cite{Umemoto_2018multiEWCS, Bao_2019}. The multi-EoP is defined as
\begin{equation}
    E_P(A_1:A_2:\dots:A_n)\equiv\min_{\ket{\psi}_{A_1A_1'...A_nA_n'}}\sum_iS_{A_iA'_i}.
\end{equation}
Here the minimization is taken over all purifications of $\rho_{A_1A_2...A_n}$. The multipartite entanglement of purification (multi-EoP) is bounded from below by the multipartite mutual information, as well as by bipartite EoP. In particular, Proposition 8 of    \cite{Umemoto_2018multiEWCS} establishes the bound
\begin{equation}\label{lower bound 1}
E_P(A_1:A_2:\dots:A_n)\geq I(A_1:A_2:\dots:A_n),
\end{equation}
where the multipartite mutual information is defined as
\begin{equation}
I(A_1:A_2:\dots:A_n)\equiv \sum_{i=1}^n S_{A_i}-S_{A_1A_2\cdots A_n}.
\end{equation}
We will use this inequality, together with its holographic dual interpretation to prove several of our main results. There exists another lower bound on the multi-EoP, given by
\begin{equation}
E_P(A_1:A_2:\cdots:A_n)\geq \sum_{i=1}^n E_P\left(A_i:A_1\cdots A_{i-1}A_{i+1}\cdots A_n\right).
\end{equation}
For the tripartite case, this inequality reduces to
\begin{equation}\label{lower bound 2}
E_P(A:B:C)\geq E_P(A:BC)+E_P(B:AC)+E_P(C:AB).
\end{equation}
This result corresponds to Proposition 12 of    \cite{Umemoto_2018multiEWCS}, and a holographic proof was subsequently provided in    \cite{Bao_2019}.

\subsection{Multi-EWCS in IR modified geometries}\label{multiEWCS in modified geometry}

\begin{figure}[h] 
\centering 
\includegraphics[width=1\textwidth]{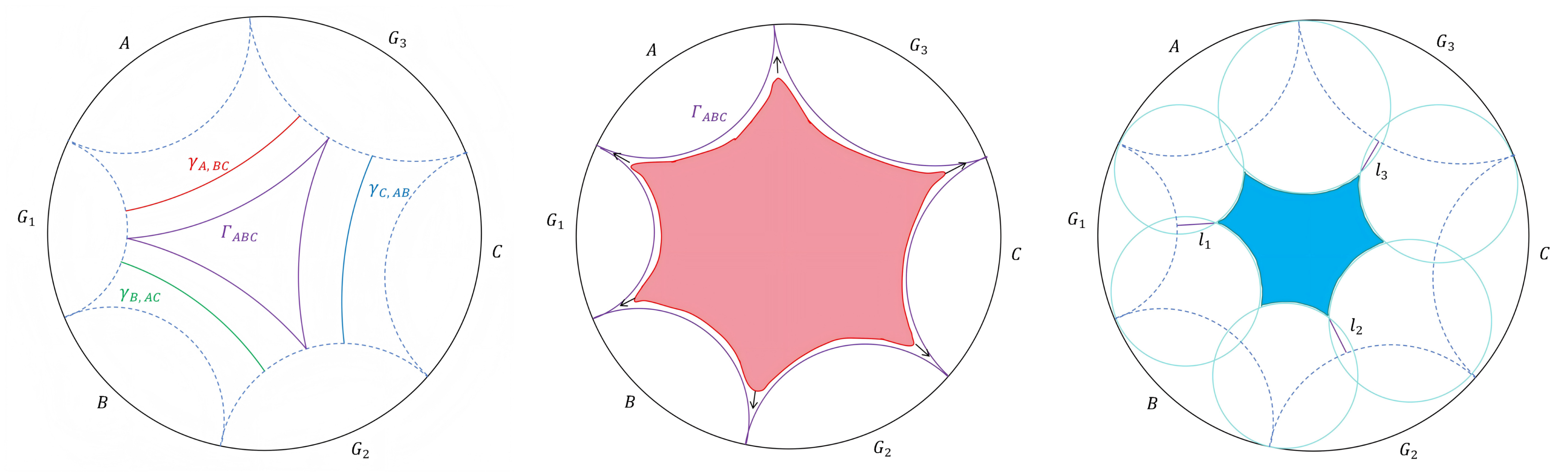} 
\caption{Multi-EWCS in pure AdS, spherically IR modified  geometry, and the hyperbolic IR modified  geometry. In the left figure, the purple curve represents $\Gamma_{ABC}$, while the red, green, and blue curves correspond to $\gamma_{A,BC}$, $\gamma_{B,AC}$, and $\gamma_{C,AB}$, respectively. In the middle figure, we replace the region inside the red region with spherical geometry. As this region gradually approaches the minimal surfaces homologous to $A$, $B$, $C$, $G_1$, $G_2$, and $G_3$, the curve $\Gamma_{ABC}$ coincides with these minimal surfaces, and $\gamma_{A,BC}$, $\gamma_{B,AC}$, and $\gamma_{C,AB}$ (not explicitly shown in the figure) coincide with the minimal surfaces homologous to $A$, $B$, and $C$, respectively. In the right figure, we choose six horospheres tangent to the boundaries of $A$, $B$, $C$, $G_1$, $G_2$, and $G_3$, arranged so that adjacent horospheres are mutually tangent. The curved hexagonal region enclosed by these horospheres is then replaced with the hyperbolic geometry. The curves $l_1$, $l_2$, and $l_3$ denote the shortest geodesics from the boundary of this region to $\gamma_{G_1}$, $\gamma_{G_2}$, and $\gamma_{G_3}$, respectively.} 
\label{figure multiEWCS1} 
\end{figure}
\noindent
We begin by considering three simply connected, non-adjacent subregions—$A$, $B$, and $C$—with the corresponding gap regions denoted as $G_{1,2,3}$. We then move to a configuration where four subregions, $A$, $B$, $C$, and $D$, are adjacent. As shown in figure \ref{figure multiEWCS1}, we select three simply connected boundary subregions $A$, $B$, and $C$, and denote the intervals between them as $G_1$, $G_2$, and $G_3$, which are taken to be sufficiently small so that the entanglement wedge of $ABC$ is connected. In the pure AdS geometry, $\Gamma_{ABC}$ consists of three parts, each larger than $\gamma_{A,BC}$, $\gamma_{B,AC}$, and $\gamma_{C,AB}$, respectively. Therefore, for the pure AdS geometry, we have
\begin{equation}
    E_W(A:B:C) > E_W(A:BC) + E_W(B:AC) + E_W(C:AB).  
\end{equation}
In the extremal case of spherically IR modified geometry, $\Gamma_{ABC}$ coincides with the minimal surfaces homologous to $A$, $B$, $C$, as well as the gap regions $G_1$, $G_2$, and $G_3$, since no geodesic from the boundary enter the IR region. Meanwhile, the curves $\gamma_{A,BC}$, $\gamma_{B,AC}$, and $\gamma_{C,AB}$ coincide with the minimal surfaces homologous to $A$, $B$, and $C$, respectively. Consequently, we obtain
\begin{equation}
E_W(A:B:C) = S_A + S_B + S_C + S_{G_1} + S_{G_2} + S_{G_3},
\end{equation} 
and
\begin{equation}
\begin{aligned}
E_W(A:BC) &= S_A, \
E_W(B:AC) &= S_B, \
E_W(C:AB) &= S_C, \
\end{aligned}
\end{equation}  which are all larger than their values in pure AdS, and are in fact the maximum values for the corresponding quantum marginal problem.

{Then for the hyperbolic IR modified geometry, } as shown in the right figure of figure \ref{figure multiEWCS1}, we replace the curved hexagonal region enclosed by six horospheres with hyperbolic geometry. When the horospheres intersect pairwise and are not too large, we have
\begin{equation}
    E_W(A:B:C) = \frac{\text{Area}(2l_1+2l_2+2l_3)}{4G_N}
\end{equation}
and
\begin{equation}
\begin{aligned}
E_W(A:BC) &= \frac{\text{Area}(l_1+l_3)}{4G_N}, \\
E_W(B:AC) &= \frac{\text{Area}(l_1+l_2)}{4G_N}, \\
E_W(C:AB) &= \frac{\text{Area}(l_2+l_3)}{4G_N}, 
\end{aligned}
\end{equation}
where $l_1$, $l_2$, and $l_3$ denote the shortest geodesics from the boundary of the hexagonal region to $\gamma_{G_1}$, $\gamma_{G_2}$, and $\gamma_{G_3}$, respectively. It is easy to check that, different from the pure AdS case, we now have \begin{equation}
    E_W(A:B:C) = E_W(A:BC) + E_W(B:AC) + E_W(C:AB),  
\end{equation}
which means the inequality (\ref{lower bound 2}) is saturated. {This relation holds irrespective of how large $l_1$, $l_2$, and $l_3$ are. The values of $l_1$, $l_2$, and $l_3$ depend on the specific arrangement of the horospheres. In particular, when all horospheres are mutually tangent, the quantities $E_W(A:B:C)$, $E_W(A:BC)$, $E_W(B:AC)$, and $E_W(C:AB)$ all vanish, thereby attaining their lower bounds under the constraints of the quantum marginal problem.}

Next, we consider a special configuration in which four boundary subregions—$A$, $B$, $C$, and $D$—are adjacent, as illustrated in figure \ref{figure multiEWCS2}. {This setup corresponds to a special case of the above analysis, in which two of the three gap regions shrink to zero.} We modify the IR geometry while keeping the entanglement wedges of these four subregions unchanged. As we will show, under the modified hyperbolic geometry, the corresponding boundary quantum state $\ket{\psi}_{ABCD}$ exhibits a distinctive property: only neighboring subregions share bipartite entanglement, a structure we refer to as a \textit{quadrangle state}.
\begin{figure}[h] 
\centering 
\includegraphics[width=1\textwidth]{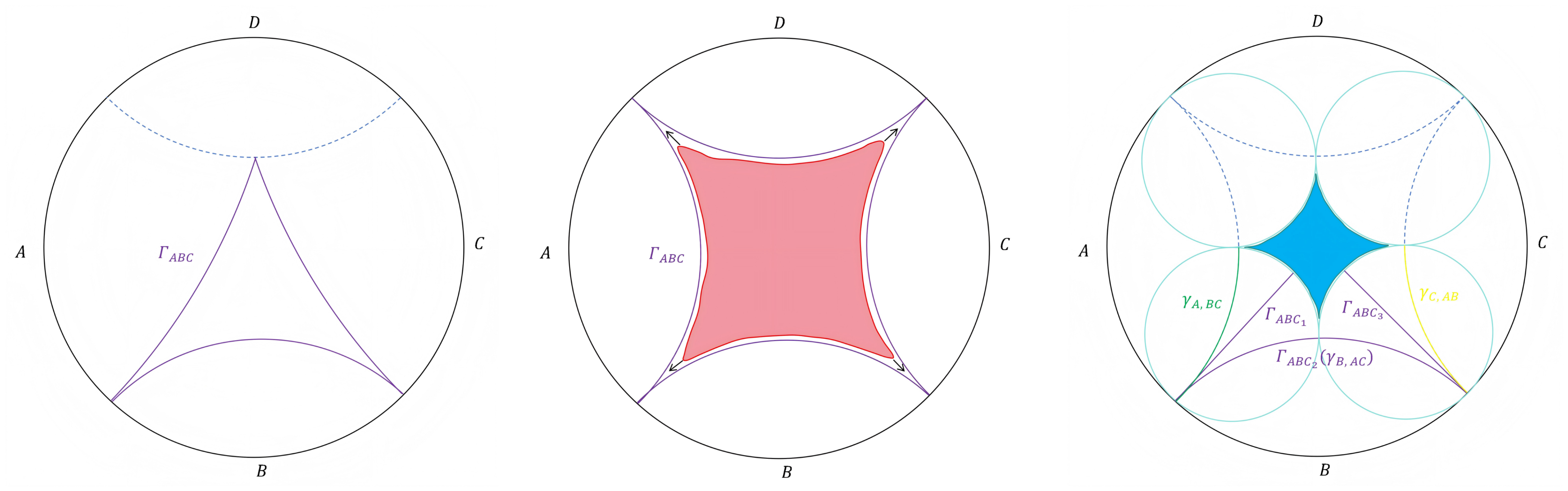}
\caption{Shapes of $\Gamma_{ABC}$ in the pure AdS geometry, the IR spherical modified geometry, and the IR hyperbolic modified geometry when $A$, $B$, $C$ are adjacent. As shown in the middle figure, while keeping the reduced density matrices of the four boundary subregions $A$, $B$, $C$, and $D$ unchanged, when we gradually enlarge the boundary of the spherical geometry region so that it approaches the three minimal surfaces, $\Gamma_{ABC}$(denoted by the purple curve) progressively approaches them. In the case of the modified hyperbolic geometry, we choose the IR region as the region bounded by four horospheres tangent to the boundaries of the subsystems. In this case, $\Gamma_{ABC}$ is likewise represented by the purple curve and it can be divided into three segments, each having the same length as $\gamma_{A,BC}$, $\gamma_{B,AC}$, and $\gamma_{C,AB}$, respectively.} 
\label{figure multiEWCS2} 
\end{figure}

The shapes of $\Gamma_{ABC}$ in the AdS vacuum, the spherically IR modified geometry, and the hyperbolic IR modified geometry when $A$, $B$, $C$ and $D$ are adjacent subregions are shown in figure \ref{figure multiEWCS2}. In the spherical extremal case, $\Gamma_{ABC}$ can be made to coincide with the minimal surfaces homologous to $A$, $B$, $C$, and $D$ by choosing the IR spherical region as large as possible, thereby resulting in
\begin{equation}
    E_W(A:B:C) = S_A+S_B+S_C+S_D,
\end{equation} which is also the upper bound value for this quantum marginal problem.

In the case of the modified hyperbolic geometry, we consider the situation where the cross-ratio of subregions $A$ and $C$ is $X(A:C)=1$. The reason is that in this case, one can always perform a conformal transformation so that the four subregions $A$, $B$, $C$, and $D$ have the same size, which in turn ensures that the four horospheres in figure \ref{figure multiEWCS2} are equal in size and mutually tangent. It is then straightforward to see that, under the modified hyperbolic geometry, inequality 
\begin{equation}
    E_W(A:B:C)\geq I(A:B:C) 
\end{equation}
is saturated. {$E_W(A:B:C)$ therefore reaches the lower bound value in the quantum marginal problem, which is the property associated with IR hyperbolic geometries. Moreover, this property has another important consequence in the behavior of the corresponding state, which we will explain in the next subsections. }
\subsection{Two multipartite entanglement signals related to the multi-EWCS}\label{multiEWCS measures}
\noindent
In this subsection, we investigate the multipartite entanglement structure of the boundary quantum state {dual to the IR modified geometries with the results obtained above}. We begin by examining the behavior of two multipartite entanglement signals associated with the multi-EWCS in the modified geometry. The first is the tripartite entanglement signal introduced in    \cite{bao2025}, and the second is a newly defined multipartite entanglement signal, obtained as a generalization of the quantity $g(A:B)$ introduced in    \cite{Zou_2021}.

\subsubsection{$\Delta_{w}^{(3)}(A:B:C)$ for modified IR geometry}\label{tripartite signal}
\noindent According to    \cite{bao2025}, the holographic tripartite entanglement signal $\Delta_{w}^{(3)}(A:B:C)$ is defined as
\begin{equation}
\Delta_{w}^{(3)}(A:B:C) \equiv E_W(A:B:C) - E_W(A:BC) - E_W(B:AC) - E_W(C:AB)
\end{equation} for a mixed state $ABC$.
The boundary dual of this signal is 
\begin{equation}
\Delta_{p}^{(3)}(A:B:C) \equiv E_P(A:B:C) - E_P(A:BC) - E_P(B:AC) - E_P(C:AB).
\end{equation}
Inequality (\ref{lower bound 2}) ensures that this signal is non-negative. It was shown in    \cite{bao2025} that $\Delta_{p}^{(3)}(A:B:C)=0$ when the state $\rho_{ABC}$ contains only classical correlations and bipartite entanglement among $A$, $B$ and $C$, or when $\rho_{ABC}$ is pure. Consequently, $\Delta_{p}^{(3)}(A:B:C)$ provides a reliable signal of genuine tripartite entanglement for mixed states. We choose $A$, $B$ and $C$ to be three intervals with gap regions $G_1$, $G_2$ and $G_3$ between them. Utilizing results from Section \ref{multiEWCS in modified geometry}, 
in the case of the spherically IR modified geometry, we can make this tripartite entanglement signal attain a large value, which is
\begin{equation}
    \Delta_{w}^{(3)}(A:B:C) = S_{G_1}+S_{G_2}+S_{G_3}>0.
\end{equation}
In contrast, under the hyperbolic IR modified geometry, we find $\Delta_{w}^{(3)}(A:B:C) = 0$

{In the spherical case, the tripartite entanglement signal is nonvanishing, indicating the presence of genuine tripartite entanglement among $A$, $B$, and $C$.}
In contrast, this signal vanishes in the hyperbolic case. {However, this does not imply the absence of tripartite entanglement; rather, there may still exist tripartite entanglement that this signal fails to detect.} In fact, there exists a special  entanglement structure in the hyperbolic case as will be revealed in the next subsection using another signal.  

A similar behavior occurs in the special case when $A$, $B$, $C$, and $D$ are adjacent: under the spherically IR modified geometry we have $\Delta_{w}^{(3)}(A:B:C)=S_D$ while in the hyperbolic case we have $\Delta_{w}^{(3)}(A:B:C)=0$ as well.

\subsubsection{A newly defined multipartite entanglement signal for modified IR geometry}\label{multi-MG}
\noindent We define a multipartite entanglement signal as 
\begin{equation}\label{ME signal g}
    g(A_1:A_2:\cdots :A_n)\equiv E_P(A_1:A_2:\cdots :A_n)-I(A_1:A_2:\cdots :A_n),
\end{equation}
This signal is also a non-negative quantity due to inequality (\ref{lower bound 1}). This definition reduces to the $g(A:B)$ function introduced in    \cite{Zou_2021} when $n=2$. It is straightforward to verify that in the right panel of figure \ref{figure multiEWCS2} and in figure \ref{figure multiEWCS 5}, we have $g(A:B:C)=0$ and $g(A:B:C:D)=0$  for the hyperbolic cases, respectively. Since the vanishing of two partite $g(A:B)$ implies that the pure state $\ket{\psi}_{ABC}$ lacks genuine tripartite entanglement in the perspective of smaller subsystems, it is natural to conjecture that in these two cases, the corresponding boundary states also lack genuine multipartite entanglement in certain sense. 

Another piece of evidence comes from the bipartite entanglement structure studied in    \cite{Ju_2024}, where it was shown that there exists a critical length {beyond which the conditional mutual information is truncated}. In the configuration illustrated in the right figure of figure \ref{figure multiEWCS2}, this critical length coincides with the sizes of the boundary subregions $A$, $B$, $C$, and $D$. Consequently, entanglement beyond this critical scale is eliminated, meaning that only adjacent subregions share bipartite entanglement, while there is no quantum correlation between non-adjacent pairs such as $A$ and $C$ or $B$ and $D$. Therefore, we assert that there exists bipartition $\mathcal{H}_{\alpha}=\mathcal{H}_{\alpha_L}\otimes\mathcal{H}_{\alpha_R}$($\alpha=A,B,C,D$) such that 
\begin{equation}
    \ket{\psi}_{ABCD}=\ket{\psi}_{A_LB_R}\ket{\psi}_{B_LC_R}\ket{\psi}_{C_LD_R}\ket{\psi}_{A_RD_L}.
\end{equation}
 In analogy with the triangle state, we name states with this type of entanglement structure \textit{quadrangle states}. The proof that the boundary state of the right figure of figure \ref{figure multiEWCS2} is a quadrangle state will be provided in Section \ref{quadrangle pentagon state}.
This result can also be generalized to the multipartite case. For instance, as shown in figure \ref{figure multiEWCS 5}, consider five boundary subregions $A, B, C, D, E$, and require that the horospheres tangent to their boundaries are pairwise tangent, while the geometry is modified only within the pentagon enclosed by these horospheres. In the extremal case of the modified hyperbolic geometry, long-range entanglement is truncated, so quantum entanglement should exist only between adjacent subregions, and the boundary CFT state $\ket{\psi}_{ABCDE}$ can be written as
\begin{equation}
    \ket{\psi}_{ABCDE} = \ket{\psi}_{A_LB_R}\ket{\psi}_{B_LC_R}\ket{\psi}_{C_LD_R}\ket{\psi}_{D_LE_R}\ket{\psi}_{A_RE_L}.
\end{equation}
for a suitable bipartition $\mathcal{H}_{\alpha}=\mathcal{H}_{\alpha_L}\otimes\mathcal{H}_{\alpha_R}$($\alpha=A,B,C,D,E$). We name states with this type of entanglement structure \textit{quadrangle states}. We will present the proof of this statement using tools from quantum information theory in Section \ref{quadrangle pentagon state} as well.

\begin{figure}[h] 
\centering 
\includegraphics[width=0.6\textwidth]{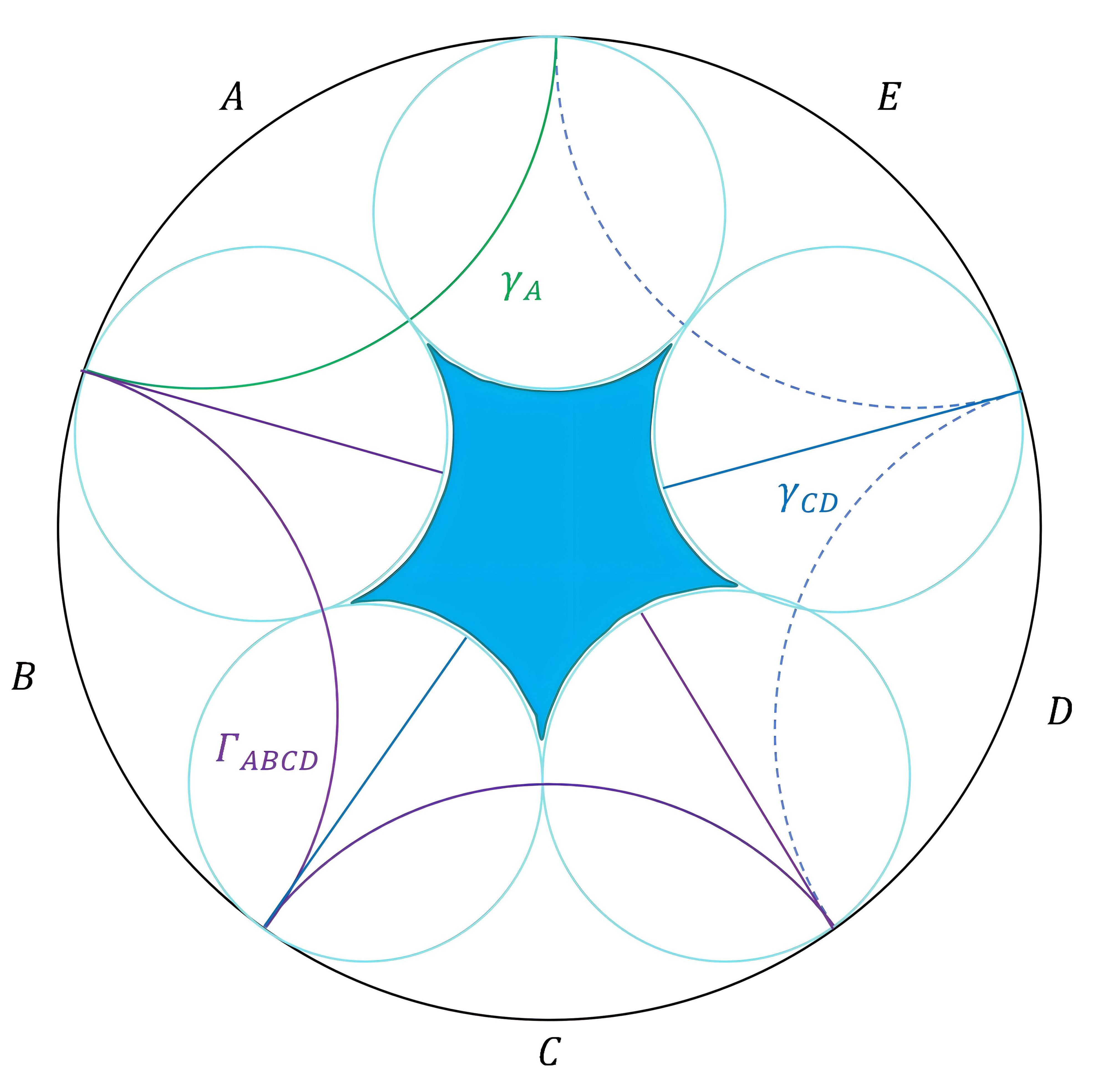} 
\caption{The modified hyperbolic geometry with entanglement wedges of 5 subregions unchanged. The blue curve and the green curve represent the minimal surfaces homologous to $CD$ and $A$, respectively. The purple curve represents $\Gamma_{ABCD}$. By a straightforward calculation, one finds that $g(A:B:C:D)=0$, and the mutual information between non-adjacent subregions vanishes.} 
\label{figure multiEWCS 5} 
\end{figure}

\subsection{Multipartite entanglement structure from hyperbolic IR modified  geometry}\label{quadrangle pentagon state}
\noindent At the end of the previous section, we asserted that in the extremal case of such a deformation, the boundary quantum state can exhibit a distinctive feature: among $n$ boundary subregions, only adjacent ones share bipartite entanglement, while genuine multipartite entanglement is absent. In this section, we provide proofs for the cases of $n=4$ and $n=5$ using methods from quantum information theory, and argue that this conclusion can be naturally extended to systems with more subregions.
\subsubsection{Quadrangle states from the multipartite entanglement signal $g(A:B:C)$ 
}
\label{quadrangle state}

\noindent We first present a theorem on the relationship between the multipartite {entanglement signal (\ref{ME signal g})} and the entanglement structure of a quantum state, which can be regarded as a generalization of theorem 2 in    \cite{Zou_2021}.
\begin{theorem}\label{thm 2}
A state $\ket{\psi}_{ABCD}$ can be written in the form
\begin{equation}\label{eq thm 2}
    \ket{\psi}_{ABCD}=\ket{\psi}_{A_1D_1}\ket{\psi}_{B_1D_2}\ket{\psi}_{C_1D_3}\ket{\psi}_{A_2B_2C_2}
\end{equation}
with an appropriate partition of the local Hilbert spaces $\mathcal{H}_\alpha=\mathcal{H}_{\alpha_1}\otimes\mathcal{H}_{\alpha_2}$ ($\alpha=A,B,C$) and $\mathcal{H}_D = \mathcal{H}_{D_1}\otimes\mathcal{H}_{D_2}\otimes\mathcal{H}_{D_3}$ up to local unitary transformations if and only if $g(A:B:C) = 0$.
\end{theorem}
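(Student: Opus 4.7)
The plan is to establish the two implications separately. The ``if'' direction reduces to a straightforward calculation using the lower bound $E_P \geq I$ from (\ref{lower bound 1}), while the ``only if'' direction requires extracting a structural tensor decomposition from the saturation of an entropic inequality, thus providing a multipartite generalization of the tripartite triangle-state theorem (Theorem 2 of \cite{Zou_2021}).

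For the easy direction ($\Leftarrow$), I would trace out $D=D_1 D_2 D_3$ from the assumed quadrangle state to obtain $\rho_{ABC}=\rho_{A_1}\otimes\rho_{B_1}\otimes\rho_{C_1}\otimes\ket{\psi}\bra{\psi}_{A_2 B_2 C_2}$, where $\rho_{A_1}=\mathrm{tr}_{D_1}\ket{\psi}\bra{\psi}_{A_1 D_1}$ and similarly for $B,C$. The tensor-product structure gives $S_{ABC}=S_{A_1}+S_{B_1}+S_{C_1}$ (the pure state on $A_2 B_2 C_2$ contributing nothing) and $S_A=S_{A_1}+S_{A_2}$ with the analogous identities for $B,C$, so that $I(A{:}B{:}C)=S_{A_2}+S_{B_2}+S_{C_2}$. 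Using $\ket{\psi}_{ABCD}$ itself as a purification of $\rho_{ABC}$ with the tripartition $(D_1,D_2,D_3)$, the purity of each $\ket{\psi}_{A_i D_i}$ yields $S_{AD_1}=S_{A_2}$ and analogously for the other two terms, so that the trial value of the purification sum equals $I(A{:}B{:}C)$. Combined with (\ref{lower bound 1}), this forces $E_P=I$, i.e.\ $g(A{:}B{:}C)=0$.

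For the hard direction ($\Rightarrow$), I would start from $E_P(A{:}B{:}C)=I(A{:}B{:}C)$ and use Uhlmann's theorem to promote the optimal purification to our given $D$ (enlarging $D$ by a trivial ancilla if necessary), thereby obtaining a tripartition $(D_1, D_2, D_3)$ saturating $S_{AD_1}+S_{BD_2}+S_{CD_3}=S_A+S_B+S_C-S_D$. Rewriting via $I(X{:}Y)=S_X+S_Y-S_{XY}$ and using the purity identity $S_D=S_{ABC}$, this condition becomes
\begin{equation}
I(A{:}D_1)+I(B{:}D_2)+I(C{:}D_3)=S_{D_1}+S_{D_2}+S_{D_3}+S_{D_1 D_2 D_3}.
\end{equation}
The goal is then to show that this equality forces each Araki--Lieb bound $I(X{:}D_i)\leq 2S_{D_i}$ together with the subadditivity bound $S_{D_1 D_2 D_3}\leq S_{D_1}+S_{D_2}+S_{D_3}$ to saturate simultaneously. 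Saturation of the Araki--Lieb bounds yields bipartitions $A=A_1 A_2$, $B=B_1 B_2$, $C=C_1 C_2$ with $\rho_{A_1 D_1}$, $\rho_{B_1 D_2}$, $\rho_{C_1 D_3}$ each pure, while saturation of subadditivity forces $\rho_D=\rho_{D_1}\otimes\rho_{D_2}\otimes\rho_{D_3}$. Substituting these factorizations into the global pure state $\ket{\psi}_{ABCD}$ and using the Schmidt decomposition on each pure factor then yields the announced quadrangle form, with $\ket{\psi}_{A_2 B_2 C_2}$ identified as the residual pure state on the remaining factors.

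The main obstacle is showing that the saturation of the single scalar equation $E_P=I$ forces each of the above bounds to saturate individually, rather than merely as a sum: a priori, positive slack in one Araki--Lieb bound can be compensated by slack in subadditivity. I expect to resolve this either by exploiting the freedom in minimizing $E_P$ to choose an optimal partition for which each individual saturation becomes automatic (perhaps via a variational argument at the minimum), or by invoking the Hayden--Josa--Petz--Winter structure theorem for states saturating strong subadditivity to extract Markov-chain factorizations on triples such as $(A,D_1,D_2 D_3)$, $(B,D_2,D_1 D_3)$, $(C,D_3,D_1 D_2)$, which can then be assembled into the desired global tensor decomposition. This reassembly step is exactly where the passage from the tripartite triangle-state theorem of \cite{Zou_2021} to the present quadripartite setting becomes technically delicate, since one must verify compatibility of three independently derived Markov decompositions with the single global pure-state structure of $\ket{\psi}_{ABCD}$.
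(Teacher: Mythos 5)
Your ``if'' direction is correct and coincides with the paper's argument: trace out $D$, use the purity of each factor $\ket{\psi}_{A_1D_1}$, $\ket{\psi}_{B_1D_2}$, $\ket{\psi}_{C_1D_3}$ to evaluate the trial purification sum as $S_{A_2}+S_{B_2}+S_{C_2}=I(A{:}B{:}C)$, and invoke the lower bound $E_P\geq I$ to force equality.

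The ``only if'' direction has a genuine gap, precisely at the step you flag yourself. Using purity of the global state one has $2S_{D_1}-I(A{:}D_1)=I(D_1{:}BCD_2D_3|A)$ and its two analogues, and your single scalar saturation condition is \emph{exactly equivalent} to
\[
\bigl[2S_{D_1}-I(A{:}D_1)\bigr]+\bigl[2S_{D_2}-I(B{:}D_2)\bigr]+\bigl[2S_{D_3}-I(C{:}D_3)\bigr]=S_{D_1}+S_{D_2}+S_{D_3}-S_{D},
\]
i.e.\ the sum of the three Araki--Lieb deficits equals the subadditivity deficit. This is perfectly consistent with all four deficits being strictly positive, so no amount of bookkeeping with these inequalities alone can force the individual saturations; your primary route cannot be completed as stated, and the ``variational argument at the minimum'' is not a proof. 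What the paper actually does---and what your second fallback gestures at without executing---is to rewrite the total deficit $S_{AD_1}+S_{BD_2}+S_{CD_3}+S_{ABC}-S_A-S_B-S_C$ as a telescoping sum of conditional mutual informations, each non-negative by strong subadditivity, for instance
\[
I(D_1{:}BD_2|A)+I(D_2{:}AC|B)+I(D_1{:}C|ABD_2)+I(D_3{:}AD_1BD_2|C)=0,
\]
so that every term must vanish; permuting the roles of $A$, $B$, $C$ produces two further such identities and hence $I(D_1{:}BCD_2D_3|A)=I(D_2{:}AD_1CD_3|B)=I(D_3{:}AD_1BD_2|C)=0$. Only then does the Hayden--Jozsa--Petz--Winter structure theorem, applied to the global pure state and combined with monotonicity of CMI under partial trace after each factor is peeled off, deliver the bipartitions $\mathcal{H}_\alpha=\mathcal{H}_{\alpha_1}\otimes\mathcal{H}_{\alpha_2}$ and the product form \eqref{eq thm 2}. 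Finding these specific telescoping identities (and the correct conditioning systems $A$, $B$, $C$ rather than the triples $(A,D_1,D_2D_3)$ you propose) is the missing idea; without it the forward implication is unproven.
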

Although we cannot conclude, as in the tripartite case, that $\ket{\psi}_{ABCD}$ is a quadrangle state from the vanishing of the multipartite entanglement signal, we can assert that there is no nontrivial four-partite entanglement in $\ket{\psi}_{ABCD}$ in the perspective of smaller subsystems. The proof of this theorem is provided in Appendix \ref{B}. In order to further prove, on this basis, that the boundary quantum state $\psi_{ABCD}$ dual to the modified hyperbolic geometry is a quadrangle state, we also need to use the fact that $I(A:C) = I(B:D) = 0$, as well as the following lemma.
\begin{lemma}\label{lemma1 main text}
Let $\rho_{AB}$ be a quantum state on $\mathcal{H}=\mathcal{H}_A\otimes\mathcal{H}_B$. If $I(A:B)=0$, then any purification $\ket{\psi}_{ABC}$ of $\rho_{AB}$ can be decomposed into the form
\begin{equation}
    \ket{\psi}_{ABC} = \ket{\psi}_{AC^{(A)}}\otimes\ket{\psi}_{BC^{(B)}}
\end{equation}
up to local unitary transformations.
\end{lemma}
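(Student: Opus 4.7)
The plan is to reduce this lemma to the standard purification-uniqueness theorem in quantum information: any two purifications of the same density matrix are related by an isometry on the purifying system. The first step is to upgrade the hypothesis $I(A:B)=0$ to the stronger-looking but equivalent statement $\rho_{AB}=\rho_A\otimes\rho_B$. This is immediate from writing the mutual information as the relative entropy $I(A:B)=S(\rho_{AB}\,\|\,\rho_A\otimes\rho_B)$ and invoking the faithfulness of relative entropy (Klein's inequality).

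Once the state is known to factorize, the natural next step is to exhibit an explicit factorized purification. I would purify $\rho_A$ and $\rho_B$ separately on auxiliary spaces $\calh_{R_A}$ and $\calh_{R_B}$ of dimensions $\mathrm{rank}(\rho_A)$ and $\mathrm{rank}(\rho_B)$, using the canonical spectral purifications $\ket{\phi_A}_{AR_A}$ and $\ket{\phi_B}_{BR_B}$. Their tensor product $\ket{\Phi}\equiv\ket{\phi_A}_{AR_A}\otimes\ket{\phi_B}_{BR_B}$ is then a purification of $\rho_{AB}$ that is manifestly of the required form.

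The final step is to compare the given purification $\ket{\psi}_{ABC}$ with $\ket{\Phi}$. By purification uniqueness, there exists an isometry $V:\calh_{R_A}\otimes\calh_{R_B}\hookrightarrow\calh_C$ with $\ket{\psi}_{ABC}=(\mone_{AB}\otimes V)\ket{\Phi}$; the dimension requirement $\dim C\geq\mathrm{rank}(\rho_{AB})=\dim R_A\cdot\dim R_B$ is automatic. I would then extend $V$ to a unitary $U_C$ on $\calh_C$ by introducing an ancillary factor $\calh_K$ with $\dim K=\dim C/(\dim R_A\cdot\dim R_B)$ and padding with a fixed state $\ket{0}_K$. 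Identifying $\calh_{C^{(A)}}\equiv\calh_{R_A}\otimes\calh_K$ and $\calh_{C^{(B)}}\equiv\calh_{R_B}$ (the placement of $\calh_K$ is a matter of convention), the local unitary $U_C^\dagger$ on $C$ sends $\ket{\psi}_{ABC}$ to $\bigl(\ket{\phi_A}_{AR_A}\otimes\ket{0}_K\bigr)\otimes\ket{\phi_B}_{BR_B}$, which is exactly the factorization claimed.

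The main obstacle, if any, is purely bookkeeping: one must check that dimensions match so that the isometry $V$ can be completed to a unitary on $\calh_C$, and that the factorization $\calh_C=\calh_{C^{(A)}}\otimes\calh_{C^{(B)}}$ required by the statement can be realized by absorbing the ancillary factor $\calh_K$ into one of the two halves. Beyond this minor dimension-matching step, no holographic input is needed; the lemma is purely a fact about finite-dimensional quantum systems with vanishing mutual information, and will serve in the next subsection as a building block for promoting the condition $g(A:B:C)=0$ together with $I(A:C)=I(B:D)=0$ into the quadrangle-state decomposition.
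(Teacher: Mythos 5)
Your proposal is correct and follows essentially the same route as the paper's own proof: deduce $\rho_{AB}=\rho_A\otimes\rho_B$ from $I(A:B)=0$, purify each factor separately, and invoke purification uniqueness to relate the given purification to the product one by a local transformation on $C$. Your version is slightly more careful than the paper's one-line argument in that it explicitly extends the isometry to a unitary on $\mathcal{H}_C$ and accounts for the ancillary factor in the tensor decomposition $\mathcal{H}_C=\mathcal{H}_{C^{(A)}}\otimes\mathcal{H}_{C^{(B)}}$, but the underlying idea is the same.
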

The proof of this lemma is straightforward. Since $I(A:B)=0$, we have $\rho_{AB}=\rho_A\otimes\rho_B$. We can then purify $\rho_A$ and $\rho_B$ separately, obtaining $\ket{\psi}_{AA'}\otimes\ket{\psi}_{BB'}$, and use the fact that any two purifications of $\rho_{AB}$ differ by a local unitary transformation. The fact that $g(A:B:C)=0$ implies that the boundary state $\ket{\psi}_{ABCD}$ can be expressed in the form of equation (\ref{eq thm 2}). By repeatedly applying inequality $I(A:BC)\geq I(A:B)$, we obtain 
\begin{equation}
\begin{aligned}
    0&=I(B:D)\geq I(B_1:D_2)\geq  0 \\
    0&=I(A:C)\geq I(A_2:C_2)\geq  0, \\
\end{aligned}
\end{equation}
so we have $I(B_1:D_2)=I(A_2:C_2)=0$. According to Lemma \ref{lemma1 main text}, $\ket{\psi}_{B_1D_2}$ and $\ket{\psi}_{A_2B_2C_2}$ can be further decomposed as $\ket{\psi}_{B_1}\otimes\ket{\psi}_{D_2}$ and $\ket{\psi}_{A_2B_2^{(A)}}\otimes\ket{\psi}_{C_2B_2^{(C)}}$, respectively, so that $\ket{\psi}_{ABCD}$ becomes a quadrangle state.

\subsubsection{Pentagon states from the multipartite entanglement signal $g(A:B:C:D)$ }
\label{pentagon state}
\noindent We first generalize Theorem \ref{thm 2} from the previous section to the five-partite case and provide the proof in the appendix. Subsequently, we will use this generalized theorem together with the fact that the mutual information between non-adjacent subregions vanishes to prove that, under the modified geometry shown in figure \ref{figure multiEWCS 5}, the boundary quantum state is a pentagon state.
\begin{theorem}\label{thm 3}
A state $\ket{\psi}_{ABCDE}$ can be written in the form
\begin{equation}\label{equation thm 3}
    \ket{\psi}_{ABCDE}=\ket{\psi}_{A_1E_1}\ket{\psi}_{B_1E_2}\ket{\psi}_{C_1E_3}\ket{\psi}_{D_1E_4}\ket{\psi}_{A_2B_2C_2D_2}
\end{equation}
with an appropriate partition of the local Hilbert spaces $\mathcal{H}_\alpha=\mathcal{H}_{\alpha_1}\otimes\mathcal{H}_{\alpha_2}$ ($\alpha=A,B,C,D$) and $\mathcal{H}_E = \mathcal{H}_{E_1}\otimes\mathcal{H}_{E_2}\otimes\mathcal{H}_{E_3}\otimes\mathcal{H}_{E_4}$ up to local unitary transformations if and only if $g(A:B:C:D) = 0$.
\end{theorem}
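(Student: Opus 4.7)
The plan is to establish the iff statement in two directions, closely following the strategy used for Theorem \ref{thm 2}. The easier direction is that the pentagon form implies $g(A:B:C:D)=0$, which I would dispatch by direct computation. The harder direction is the converse, for which I would imitate the optimal-purification / SSA-saturation argument that underlies the triangle/quadrangle results and then iterate it to produce the four bipartite edges plus one four-body residue that characterize a pentagon state.

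For the ``if'' direction, evaluate $g$ on the purification naturally supplied by the pentagon factorization. With $E$ as the ancilla partitioned as $E=E_1\otimes E_2\otimes E_3\otimes E_4$ according to (\ref{equation thm 3}), each bipartite factor $\ket{\psi}_{X_1E_i}$ is pure on its two-party support, so tracing out the complement of $XE_i$ yields $\rho_{XE_i}=\ket{\psi}\bra{\psi}_{X_1E_i}\otimes\rho_{X_2}$ and hence $S_{XE_i}=S_{X_2}$. Summing over $X\in\{A,B,C,D\}$ produces $\sum_i S_{XE_i}=S_{A_2}+S_{B_2}+S_{C_2}+S_{D_2}$. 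On the other hand, from $\rho_{ABCD}=\rho_{A_1}\otimes\rho_{B_1}\otimes\rho_{C_1}\otimes\rho_{D_1}\otimes \ket{\phi_5}\bra{\phi_5}_{A_2B_2C_2D_2}$ together with $S_X=S_{X_1}+S_{X_2}$, a direct evaluation of $I(A:B:C:D)=\sum_X S_X - S_{ABCD}$ gives the same sum $S_{A_2}+S_{B_2}+S_{C_2}+S_{D_2}$. Since $E_P\ge I$ holds universally by (\ref{lower bound 1}), this choice of purification is optimal and $g(A:B:C:D)=0$.

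For the ``only if'' direction, start from an optimal purification $\ket{\Psi}_{ABCDE_1E_2E_3E_4}$ that saturates $\sum_i S_{XE_i}=I(A:B:C:D)$. Since the total state is pure one has $\sum_i S_{XE_i}=I(AE_1:BE_2:CE_3:DE_4)$, so the assumed equality reads $I(AE_1:BE_2:CE_3:DE_4)=I(A:B:C:D)$. This is precisely the saturation of monotonicity of multipartite mutual information under tracing out the $E_i$, and it forces a collection of Markov-chain conditions of the schematic form $I(E_i : X_1\cdots\widehat{X_i}\cdots X_4\,E_1\cdots\widehat{E_i}\cdots E_4 \,\vert\, X_i)=0$. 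Iterated application of the Hayden--Jozsa--Petz--Winter structure theorem then splits each $\mathcal H_X=\mathcal H_{X_1}\otimes\mathcal H_{X_2}$, with $E_i$ purifying a factor attached to $X_1$ and decoupled from every other subsystem, leaving a residual state on $A_2B_2C_2D_2$ that is decoupled from the ancillae and can be identified with $\ket{\psi}_{A_2B_2C_2D_2}$ in (\ref{equation thm 3}).

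The principal obstacle is handling the simultaneous extraction of all four outer--ancilla factorizations. In the tripartite case of Theorem \ref{thm 2} a single decomposition round suffices to carve the three bipartite pieces out of a three-body residue. With one additional outer party, the Markov structure extracted for, say, party $A$ might a priori be distorted when the argument is repeated for $B$, $C$, or $D$. I expect the clean resolution is an inductive peeling scheme: extract the $(A_1,E_1)$ factor first, verify that the residual state on $BCDE_2E_3E_4$ together with $A_2$ still satisfies a reduced saturated equality of the required type, and then invoke Theorem \ref{thm 2} to handle the remaining three outer parties and their ancillae. Confirming that these successive extractions compose consistently, so that no hidden cross-correlation remains among the extracted $X_1$ sectors, is the main technical hurdle I anticipate.
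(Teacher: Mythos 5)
Your proposal is correct and follows essentially the same route as the paper: evaluate $g$ on the purification induced by the pentagon form for the ``if'' direction, and for the converse take the optimal purification, rewrite the saturation of $E_P\geq I$ as a sum of non-negative conditional mutual informations that must each vanish, and then apply the Hayden--Jozsa--Petz--Winter structure theorem iteratively. The ``main technical hurdle'' you flag is resolved exactly by the inductive peeling you anticipate: after extracting each $\ket{\psi}_{X_1E_i}$ factor, monotonicity of conditional mutual information under partial trace (e.g.\ $0=I(E_2:AE_1CE_3DE_4|B)\geq I(E_2:A_2CE_3DE_4|B)\geq 0$) guarantees that the Markov conditions descend consistently to the residual state, which is precisely how the paper's Appendix~\ref{B} proceeds.
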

If the four-partite entanglement signal $g(A:B:C:D)=0$, then $E$ can be decomposed into four parts, each entangled with $A$, $B$, $C$, and $D$ respectively in a bipartite manner. In addition, there may still exist genuine four-partite entanglement among $A$, $B$, $C$, and $D$, but no nontrivial five-partite entanglement exists in the whole system in the perspective of smaller subsytems. 

By applying Theorem \ref{thm 3} and the fact that the mutual information of non-adjacent subregions vanishes, we can further demonstrate that the boundary state $\ket{\psi}_{ABCDE}$ is a pentagon state. According to inequality $I(A:BC)\geq I(A:B)$ we have
\begin{equation}
\begin{aligned}
    0&=I(B:E)\geq I(B_1:E_2) \geq 0\\
    0&=I(C:E)\geq I(C_1:E_2)\geq 0\\
    0&=I(A:CD) \geq I(A_2:C_2D_2) \geq 0
\end{aligned}
\end{equation}
and consequently obtain $I(B_1:E_2)= I(C_1:E_2)= I(A_2:C_2D_2)=0$. Utilizing Lemma \ref{lemma1 main text}, we can factorize $\ket{\psi}_{ABCDE}$ into
\begin{equation}
    \ket{\psi}_{ABCDE}=\ket{\psi}_{A_1E_1}\ket{\psi}_{B_1}\ket{\psi}_{E_2}\ket{\psi}_{C_1}\ket{\psi}_{E_3}\ket{\psi}_{D_1E_4}\ket{\psi}_{A_2B_2^{(A)}}\ket{\psi}_{B_2^{(CD)}C_2D_2}.
\end{equation}
Furthermore, because $I(B:D) = I(B_2^{(CD)}:D_2) = 0$, the state $\ket{\psi}_{B_2^{(CD)}C_2D_2}$ can be factorized as well. We have
\begin{equation}
    \ket{\psi}_{B_2^{(CD)}C_2D_2} = \ket{\psi}_{B_2^{(CD)}C_2^{(B)}}\ket{\psi}_{C_2^{(D)}D_2}.
\end{equation}
This series of steps successfully decomposes the full state $\ket{\psi}_{ABCDE}$ into a form where entanglement exists exclusively between adjacent subregions, which is a defining characteristic of a pentagon state.

From the above analysis, we demonstrated that, in the hyperbolic extremal case, when the horospheres defining the IR region are made tangent to their neighboring ones, the boundary quantum state exhibits a distinctive feature that we refer to as \textit{the polygon state}, a configuration in which only adjacent subregions share bipartite entanglement. We provided rigorous proof of this result for the cases with $n=4$ and $n=5$ boundary subregions, and we expect that the same conclusion holds for arbitrary $n$.

\section{Entanglement measures from multi-entropy for modified IR geometry}\label{multi entropy}
\noindent In this section, we focus on the behavior of the multi-entropy    \cite{Gadde_2022} under modified IR geometries. We begin by reviewing the definition of multi-entropy and its holographic dual. For a $\mathrm{q}$-partite pure state $\ket{\psi}_{A_1A_2\cdots A_{\mathrm{q}}}$, the \textit{n}-th Rényi $\mathrm{q}$-partite multi-entropy is defined by
\begin{equation}
    S_n^{(\mathrm{q})}(A_1:\cdots A_{\mathrm{q}})
    \equiv
    \frac{1}{1-n}\frac{1}{n^{\mathrm{q}-2}}\log \frac{Z_n^{(\mathrm{q})}}{\left(Z_1^{(\mathrm{q})}\right)^{n^{\mathrm{q}-1}}},
\end{equation}
\begin{equation}
    Z_n^{(\mathrm{q})}\equiv \bra{\psi}^{\otimes n^{\mathrm{q}-1}}\Sigma_1(g_1)\Sigma_2(g_2)\cdots \Sigma_\mathrm{q}(g_\mathrm{q})\ket{\psi}^{\otimes n^{\mathrm{q}-1}},
\end{equation}
where $\Sigma_i\left(g_\mathrm{k}\right)$ are the twist operators that implement the permutation action of $g_\mathrm{k}$ on indices of reduced density matrices for subsystem $A_\mathrm{k}$. Here, the permutation of $g_\mathrm{k}$ is defined by the discrete translation along the $\mathrm{k}$-th direction on a $(\mathrm{q} - 1)$-dimensional hypercube of length $n$, that is
\begin{equation}
    g_{\mathrm{k}}\cdot \left(x_1,x_2,\cdots,x_\mathrm{k},\cdots,x_{\mathrm{q}-1}\right)=\left(x_1,x_2,\cdots,x_\mathrm{k}+1,\cdots,x_{\mathrm{q}-1}\right),\qquad \mathrm{k}=1,2,\cdots \mathrm{q}-1,
\end{equation}
\begin{equation}
    g_{\mathrm{q}}\cdot \left(x_1,x_2,\cdots,\cdots,x_{\mathrm{q}-1}\right)=
    \left(x_1,x_2,\cdots,\cdots,x_{\mathrm{q}-1}\right).
\end{equation}
Here $\left(x_1,x_2,\cdots,\cdots,x_{\mathrm{q}-1}\right)$ denotes an integer lattice point on the $(\mathrm{q}-1)$-dimensional hypercube of length $n$ with the periodic boundary condition. The multi-entropy is defined by 
\begin{equation}
    S^{(\mathrm{q})}\left(A_1:A_2:\cdots A_{\mathrm{q}}\right)\equiv \lim_{n\to 1}S_n^{(\mathrm{q})}\left(A_1:A_2:\cdots A_{\mathrm{q}}\right).
\end{equation}
The multi-entropy serves as the multipartite generalization of entanglement entropy. When we take $\mathrm{q}=2$, it reduces to the standard von Neumann entropy of a bipartite system. Some other properties of the multi-entropy such as additivity can be found in    \cite{Penington_2023, Gadde_2024}.

The holographic dual of the multi-entropy for $\mathrm{q}$ subregions is conjectured to be a minimal bulk web    \cite{Gadde_2022}, $\mathcal{W}$, which satisfies the following two conditions. First, $\mathcal{W}$ is anchored to the boundaries of all boundary subregions. Second, $\mathcal{W}$ contains sub-webs that are homologous to each of the boundary subregions. Two explicit examples are shown in figure \ref{Wdefinition}.
\begin{figure}[h] 
\centering 
\includegraphics[width=0.85\textwidth]{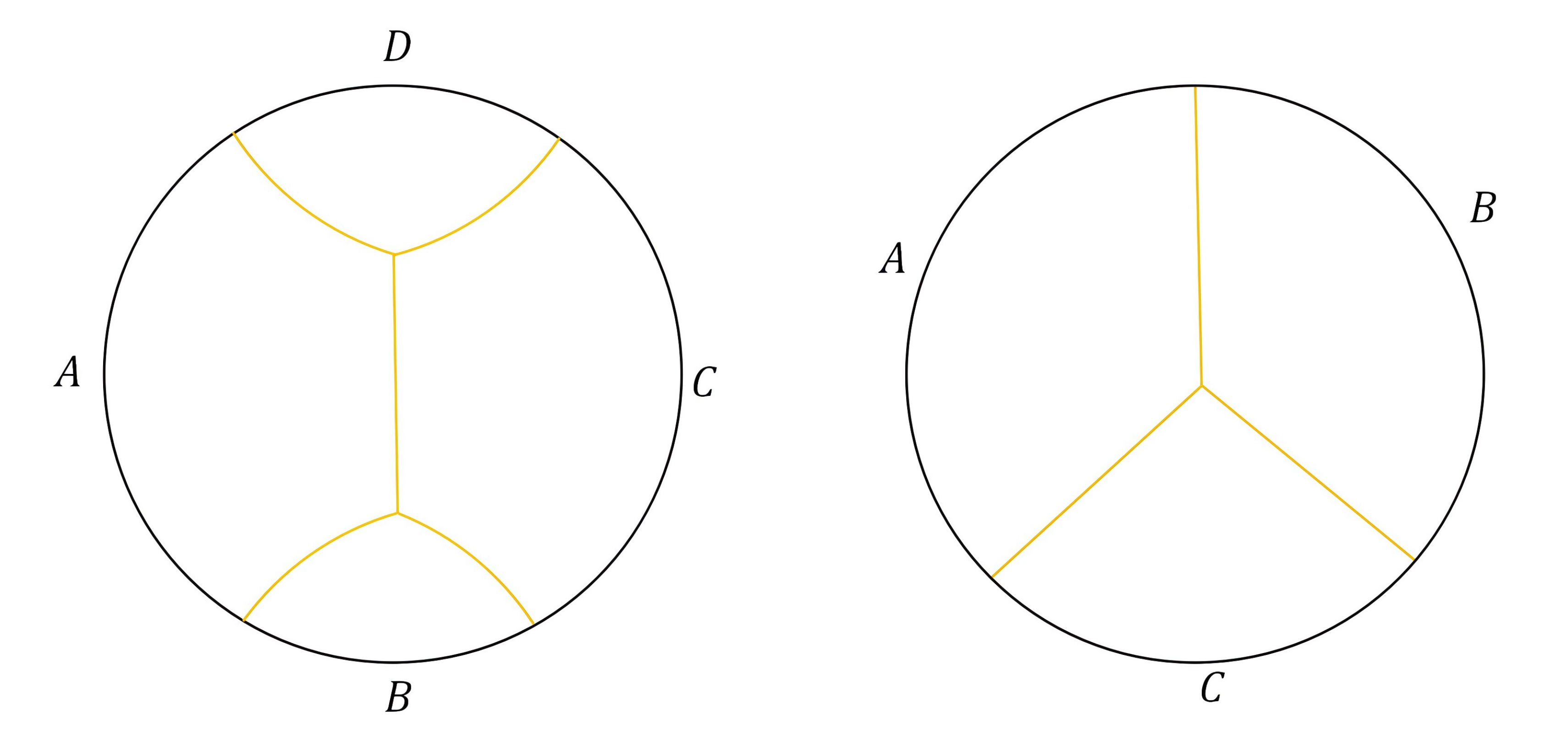} 
\caption{Holographic duals of $S^{(4)}(A:B:C:D)$ (left) and $S^{(3)}(A:B:C)$ (right). The two bulk webs, shown in yellow, are anchored to the boundaries of the corresponding subregions. Moreover, $\mathcal{W}$ contains sub-webs that are homologous to each of these subregions.} 
\label{Wdefinition} 
\end{figure}

In this section, we mainly focus on two tripartite entanglement measures for a pure state $\ket{\psi}_{ABC}$ built from the multi-entropy. The first is $\kappa$, which is studied in    \cite{liu2024} on 2d CFTs, and is defined as
\begin{equation}
    \kappa \equiv S^{(3)}(A:B:C)-\frac{1}{2}\left(S_A+S_B+S_C\right).
\end{equation}
This quantity coincides with the genuine multi-entropy $\text{GM}^{(\mathrm{q})}\left(A_1:A_2:\cdots A_{\mathrm{q}}\right)$ defined in    \cite{iizuka2025GM, iizuka2025MGM} when $\mathrm{q}=3$. It was shown in    \cite{iizuka2025GM, iizuka2025MGM} that $\kappa$, or equivalently $\mathrm{GM}^{(3)}$, vanishes for both separable states and triangle states. Consequently, it is regarded as a measure of genuine tripartite entanglement\footnote{It is worth noting that different criteria for genuine tripartite entanglement is adopted in    \cite{iizuka2025GM, iizuka2025MGM} and    \cite{Basak2025}. A key distinction is that triangle states have vanishing $\kappa$ and are therefore regarded as lacking genuine tripartite entanglement in    \cite{iizuka2025GM, iizuka2025MGM}, whereas their L-entropy remains nonzero. But $\kappa$ could be nonzero for SOTS, such as the GHZ state. \label{f6}}. The other measure we will consider is one that we define as 
\begin{equation}
    \Upsilon \equiv \min \left\{S_A+S_B, S_A+S_C, S_B+S_C\right\} - S^{(3)}(A:B:C).
\end{equation}
In holography, $\Upsilon$ is non-negative due to the inequality
\begin{equation}\label{inequality upsilon}
    S^{(3)}(A:B:C) \leq \min \left\{S_A+S_B, S_A+S_C, S_B+S_C\right\}.
\end{equation}
This inequality follows from the fact that $\gamma_A + \gamma_B$, $\gamma_A + \gamma_C$, and $\gamma_B + \gamma_C$ are all valid candidates for $\mathcal{W}$, and thus their lengths must be shorter than or equal to that of $\mathcal{W}$. However, we emphasize that this inequality does not necessarily hold for general {non-holographic} quantum states. Although the physical meaning of $\Upsilon$ remains uncertain, the toy model of the modified geometry offers a framework from which we can conjecture its possible interpretation, as discussed in Section \ref{entanglement structure}.

Similar to the previous sections, we choose the IR region in such a way that the entanglement wedges of certain boundary subregions remain unchanged. We mainly consider the case of three adjacent subregions on the boundary and show that, in the extremal cases of modified spherical and hyperbolic geometries, the tripartite multi-entropy reaches its upper and lower bounds, respectively. By analyzing the entanglement structure of the boundary CFT under the modified hyperbolic geometry together with the behavior of this measure, we propose a conjecture about the physical interpretation of $\Upsilon$, namely, that it quantifies the weakest bipartite entanglement among the three subregions.

\subsection{Multi-entropy for the modified spherical geometry}
\begin{figure}[h] 
\centering
\includegraphics[width=0.95\textwidth]{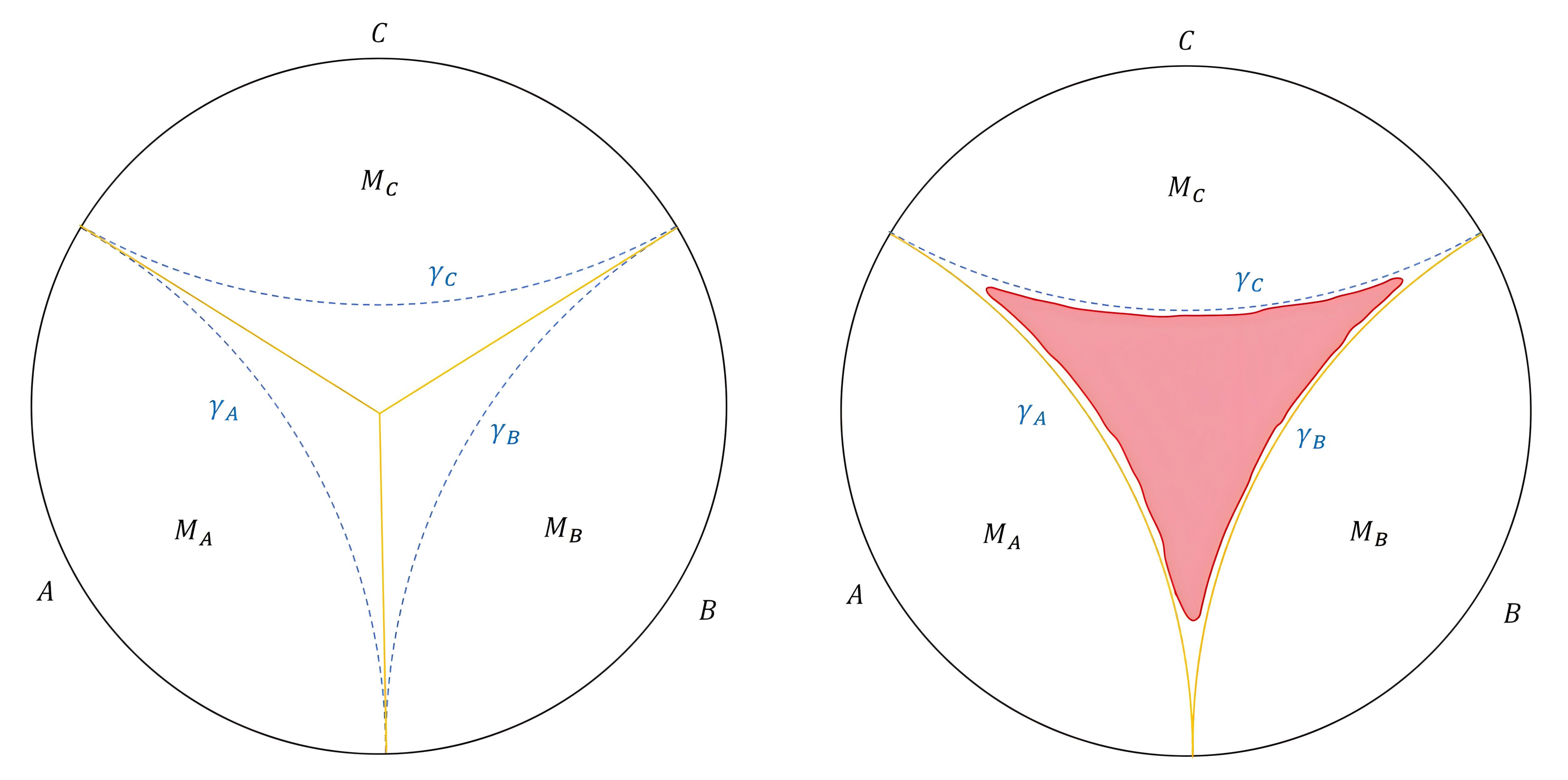} 
\caption{$\mathcal{W}$ in the AdS vacuum (left figure) and the modified spherical geometry (right figure).  Similarly, we modify the geometry only in the region outside the entanglement wedges $M_A$, $M_B$, and $M_C$. As the spherical geometry gradually fills this region, $S^{(3)}(A:B:C)$ becomes the smallest among $S_A+S_B$, $S_B+S_C$, and $S_A+S_C$.} 
\label{MEvacuumspherical} 
\end{figure}
\noindent We first examine the behavior of the multi-entropy under spherically IR modified geometries. Since in the spherical extremal case, geodesics cannot penetrate the IR region, the length of $\mathcal{W}$ increases monotonically as the chosen IR region grows. As shown in the right figure of figure \ref{MEvacuumspherical}, once the modified IR region nearly fills the area enclosed by the three minimal surfaces, $S^{(3)}(A:B:C)$ reaches its upper bound value under the condition that the density matrices of $A$, $B$, and $C$ remain unchanged.

Therefore, $\kappa$ saturates the upper bound value for this quantum marginal problem as well. On the other hand, inequality (\ref{inequality upsilon}) is saturated so that  $\Upsilon$ saturates its lower bound  value in this case.

\subsection{Multi-entropy for the modified hyperbolic geometry}\label{MEhyperbolic}
\begin{figure}[h] 
\centering 
\includegraphics[width=0.5\textwidth]{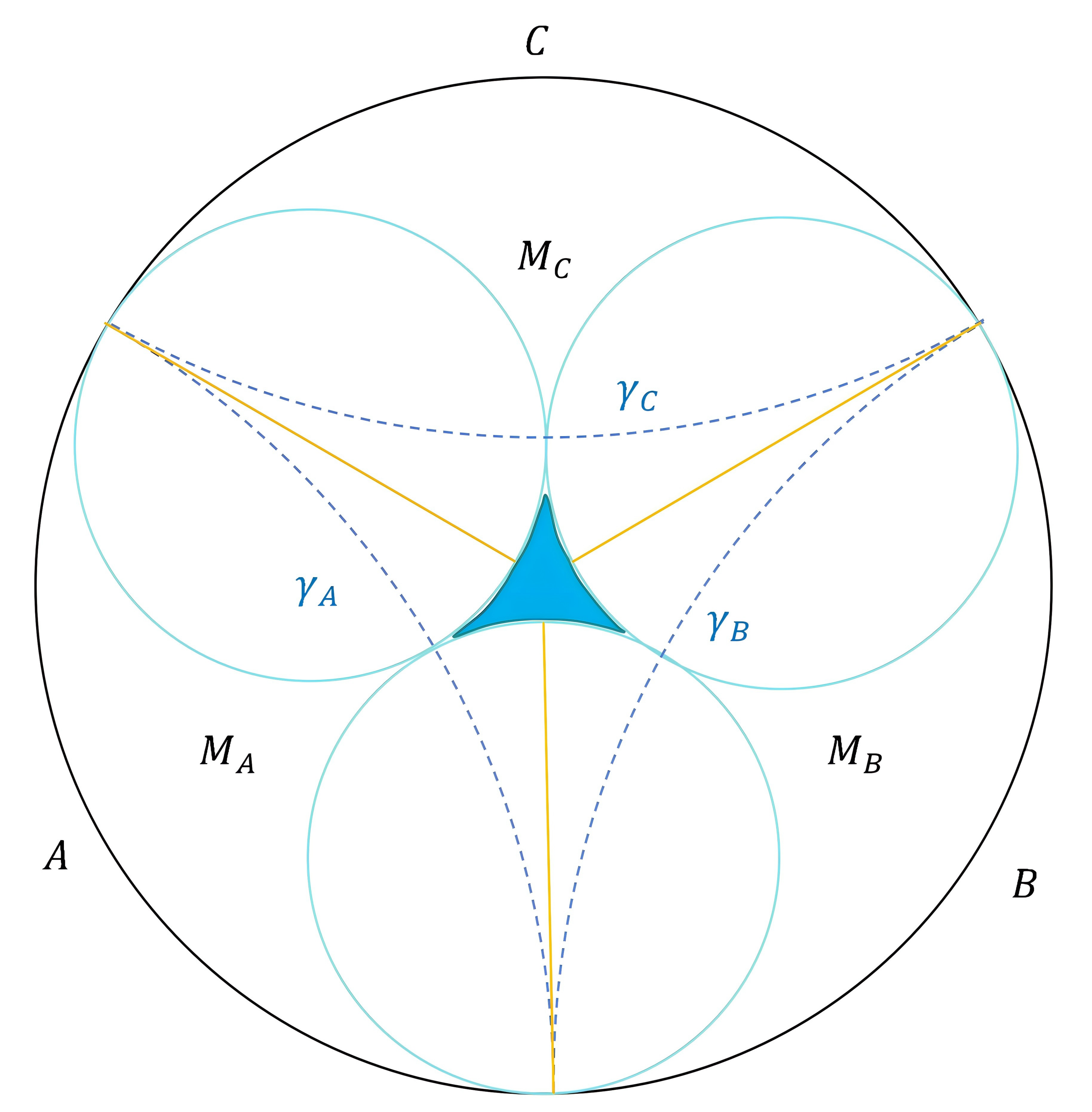} 
\caption{$\mathcal{W}$ in the modified hyperbolic geometry. The three horospheres are pairwise tangent, and the hyperbolic geometric region is taken to be the IR region enclosed by these three circles. In this case, we have $S^{(3)}(A:B:C) = \frac{1}{2}(S_A+S_B+S_C)$}. 
\label{MEvacuumhyperbolic} 
\end{figure}
\noindent Similar to Section \ref{section EWCShyperbolic}, we choose the IR region as the area enclosed  by the three horospheres tangent to the boundaries of subregions $A$, $B$ and $C$, respectively. Since the geodesic length inside the IR region vanishes in this case, $\mathcal{W}$ takes the shape of the yellow web as shown in figure \ref{MEvacuumhyperbolic}. Accordingly, we have
\begin{equation}
    \kappa \equiv S^{(3)}(A:B:C) - \frac{1}{2}(S_A+S_B+S_C) = 0
\end{equation}

According to the convention adopted in    \cite{iizuka2025GM, iizuka2025MGM} (see footnote \ref{f6}), we may conclude that, under the hyperbolic IR-modified geometry, there is no genuine tripartite entanglement among $A$, $B$, and $C$. This is consistent with our earlier conclusion based on the Markov gap: the boundary state dual to the hyperbolic IR-modified geometry is a triangle state, for which $\kappa$ naturally vanishes.

\subsection{Physical {interpretation} of the entanglement measure $\Upsilon$}\label{entanglement structure}
\noindent In this section, building on our understanding of the entanglement structure of boundary quantum states under different geometries, we propose a conjecture regarding the physical meaning of $\Upsilon$. Without loss of generality, we assume that $S_A+S_B$ is the least among $S_A+S_B$, $S_A+S_C$, and $S_B+S_C$ in the following discussion. It should be noted that
\begin{equation}
    \Upsilon + \kappa = \frac{1}{2} I(A:B).
\end{equation}
Considering the hyperbolic extremal case where the boundary state is a triangle state, $\kappa$ vanishes, implying that $\Upsilon$ coincides with the mutual information. Moreover, in a triangle state we have $I(A:B)=I(A_L:B_R)$, since
\begin{equation}
\begin{aligned}
    I(A:B) &= S_A+S_B-S_{AB}\\
    &= S_A + S_B - S_C\\
    &= S_{A_L}+S_{A_R}+S_{B_L}+S_{B_R}-S_{C_L}-S_{C_R}\\
    &= S_{A_L}+S_{B_R} = I(A_L:B_R).
\end{aligned}
\end{equation}
Here we have used $S_A=S_{A_L}+S_{A_R}$, $S_B=S_{B_L}+S_{B_R}$, $S_C=S_{C_L}+S_{C_R}$, $S_{C_L}=S_{A_R}$, and $S_{C_R}=S_{B_L}$. 
Therefore, in this case, we can say that $\Upsilon = \frac{1}{2}I(A_L:B_R)$ measures the weakest genuine bipartite entanglement among the three subsystems $A$, $B$, and $C$.  
In the general case, the subsystems $A$, $B$, and $C$ may share genuine tripartite entanglement. To quantify the weakest bipartite entanglement among them, one must subtract the contribution from tripartite entanglement in $I(A:B)$. For instance, $\Upsilon$ vanishes for the GHZ state because all of $I(A:B)$ comes from genuine tripartite entanglement. Although purely GHZ-like entanglement is forbidden in holographic states    \cite{balasubramanian2025,Hayden_2021}, we suspect similar behaviors to occur in holographic cases, making it natural to remove this type of contribution. Moreover, note that in the general case where $\kappa \ne 0$,
\begin{equation}
\Upsilon = \frac{1}{2}I(A:B) - \kappa.
\end{equation}
Hence, we conjecture that $\Upsilon$ continues to serve as a measure of the weakest bipartite entanglement among $A$, $B$, and $C$, even in the general case. An illustration for this interpretation is shown in figure \ref{upentanglement}.

\begin{figure}[h] 
\centering 
\includegraphics[width=0.95\textwidth]{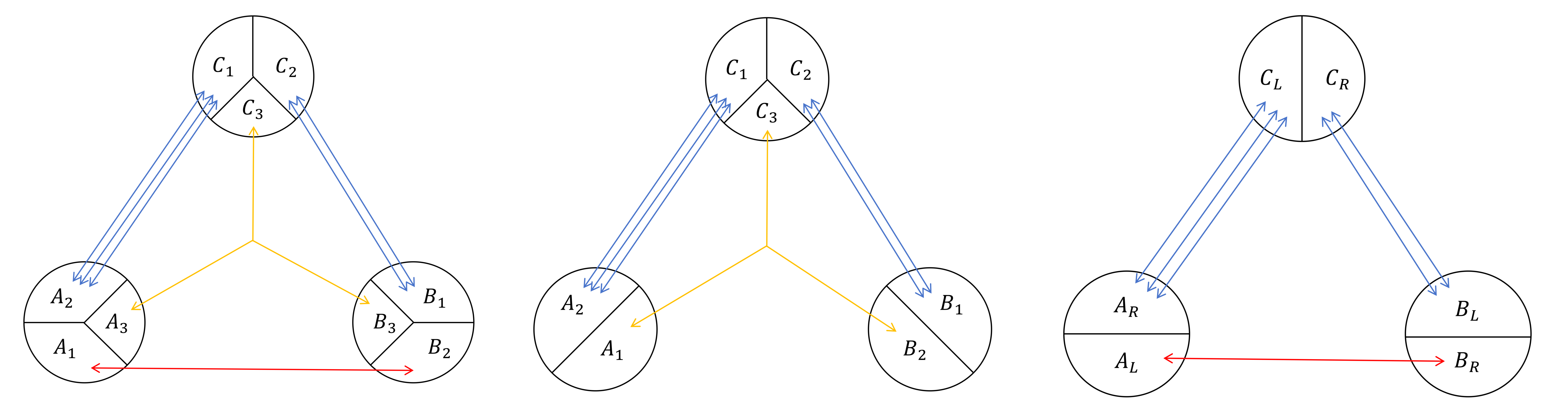} 
\caption{The multipartite entanglement structures of the boundary CFT under different geometries: the left, middle, and right figures correspond to the pure AdS, the spherically IR modified geometry, and the hyperbolic one, respectively. The yellow arrows in the figure represent genuine tripartite entanglement, which vanishes in the extremal case of the modified hyperbolic geometry. The blue arrows denote the relatively stronger bipartite entanglement shared among $A$, $B$, and $C$, while the weakest bipartite entanglement is measured by $\Upsilon$, shown as the red arrows. In the extremal case of the modified spherical geometry, $\Upsilon$ reduces to zero.} 
\label{upentanglement} 
\end{figure}

\section{Conclusion and Discussion}\label{sec7}
\noindent In this work, we investigate the relation between the holographic multipartite entanglement structures and the bulk geometry, especially focusing on the relation between the long scale multipartite entanglement and the bulk IR geometry.
We employ the two toy models proposed in    \cite{Ju_2024} to change the IR geometries in two opposite ways in an AdS$_3$ geometry and study the behavior of various multipartite entanglement measures in the modified geometries. Making use of a range of well-studied measures, we are able to determine the multipartite entanglement structure of the boundary field theory in the two modified geometries. 
Moreover, starting from the entanglement structure revealed by the modified geometries, we can also learn more about what type of entanglement structures is captured by the measures that we have considered.
 
First, we examine the variation of the EWCS in the two opposite types of IR modified geometries. We analyze two multipartite entanglement measures related to the bipartite EWCS, namely the L-entropy and the Markov gap, and find that they exhibit distinctly different behaviors. This strongly suggests that they detect different types of tripartite entanglement: the Markov gap detects non-SOTS type tripartite entanglement while the L-entropy can detect the SOTS type entanglement. The Markov gap increases in the spherically IR modified geometry while it decreases in the hyperbolic case. This is consistent with the behavior found for CMI in    \cite{Ju_2024}, \ie, non-SOTS type tripartite entanglement increases at long scales in the spherically IR modified geometry while decreasing in the hyperbolic case.
 
Specifically, we  modify the geometry in the largest possible IR region while ensuring that the entanglement wedges of specific boundary subregions remain unchanged, thereby preserving the density matrices before and after the geometric modification. In this case, we find that the EWCS can reach the upper bound for this quantum marginal problem subject to the constraint that certain boundary density matrices remain invariant in the extremal case of the spherically IR modified  geometry. On the contrary, in the extremal case of the hyperbolic modification, the EWCS can achieve the corresponding minimal value. Moreover, we find that in the modified hyperbolic geometry it is possible for the Markov gap to vanish, which implies that the boundary quantum state is a triangle state.

Next, we investigate the behavior of multipartite EWCS in modified IR geometries and compute two multipartite entanglement signals, $\Delta_{w}^{(3)}$, as well as $g(A:B:C)$, a multipartite generalization of the Markov gap, for the two types of IR modified geometries. From the behavior of $\Delta_{w}^{(3)}$, we observe that multipartite entanglement increases relative to the pure AdS in the spherically IR modified geometry, while we can make it vanish in the hyperbolic IR modified geometry. Through quantum information analysis, we demonstrate that in the modified hyperbolic geometry, one can obtain boundary quantum states in which multiple subregions share only bipartite entanglement without exhibiting multipartite entanglement. This important result aligns with our expectations, as previous studies have shown that under modified hyperbolic geometry, long-range entanglement is transferred to the critical length. Consequently, it is possible that only adjacent subregions share bipartite entanglement in the hyperbolic case.

Finally, we investigate the behavior of the multi-entropy under the two types of modified geometries and analyze two associated entanglement measures, $\kappa$ and $\Upsilon$. The physical meaning of $\kappa$ is relatively clear, as it is regarded as a measure of {genuine tripartite} entanglement. We find that in the spherical and hyperbolic IR geometries, the multi-entropy reaches its upper and lower bounds, respectively, with $\Upsilon$ and $\kappa$ vanishing respectively. In particular, in the hyperbolic case, $\kappa$ can be made to vanish, once again confirming the feature that hyperbolic deformations truncate tripartite entanglement at longer scales. We also find a physical interpretation for $\Upsilon$ from its behavior in different geometries.

We summarize in table \ref{tab:summary} the behavior of the main entanglement measures discussed in this work in the two IR modified geometries. It can be observed that the L-entropy exhibits behavior opposite to that of other tripartite entanglement measures\footnote{It should be noted that $\Upsilon$ is not regarded as a tripartite entanglement measure. Although it exhibits the same behavior as the L-entropy in the two types of IR-modified geometries, it does not probe the same type of entanglement as the L-entropy.}, suggesting that it likely probes a completely different type of entanglement. In conclusion, changing the IR geometry serves as valuable tools for testing and clarifying the physical meaning of various entanglement measures, and could reveal the relation between bulk geometry and boundary entanglement structures. In this setup, considering more possible configurations would give us more information on this relation and we will leave it for future work.
\begin{table}[h]
\centering
\begin{tabular}{|c|c|c|c|}
\hline
 & Measure & Spherical & Hyperbolic \\
\hline
\multirow{2}{*}{from EWCS} 
 & L-entropy   & $\downarrow$ & $\uparrow$ \\
 & Markov gap  & $\uparrow$   & $\downarrow$ \\
\hline
\multirow{2}{*}{from multi-EWCS} 
 & $\Delta_{w}^{(3)}$         & $\uparrow$ & $\downarrow$ \\
 & $g(A:B:C)$ & $\uparrow$ & $\downarrow$ \\
\hline
\multirow{2}{*}{from multi-entropy} 
 & $\kappa$   & $\uparrow$ & $\downarrow$ \\
 & $\Upsilon$ & $\downarrow$ & $\uparrow$ \\
\hline
\end{tabular}
\caption{The change for the values of the entanglement measures considered in this work in the two types of IR modified geometries. In all cases, the boundary subregions are chosen to be simply connected, adjacent, and of equal size. We find that the L-entropy exhibits behavior opposite to that of the other tripartite (or multipartite) entanglement measures, reflecting the different entanglement structures that it detects.}
\label{tab:summary}
\end{table}

\section*{Acknowledgement}

We thank Bart{\l}omiej Czech, Xiantong Chen, Xuanting Ji, and Wen-Bin Pan for their valuable discussions.
This project is supported in part by the National Natural Science Foundation of China under Grant No. 12035016.

\appendix

\section{Extremal values of EWCS with fixed entanglement wedges for five subregions}\label{A}
\noindent In this appendix, we provide the derivation of equation (\ref{simplified result}), which corresponds to the maximal possible variation of EWCS under the condition that the entanglement wedges of certain pairs of adjacent boundary subregions remain unchanged. We first present and prove two lemmas that serve to simplify the following calculation.
\begin{figure}[h] 
\centering 
\includegraphics[width=0.95\textwidth]{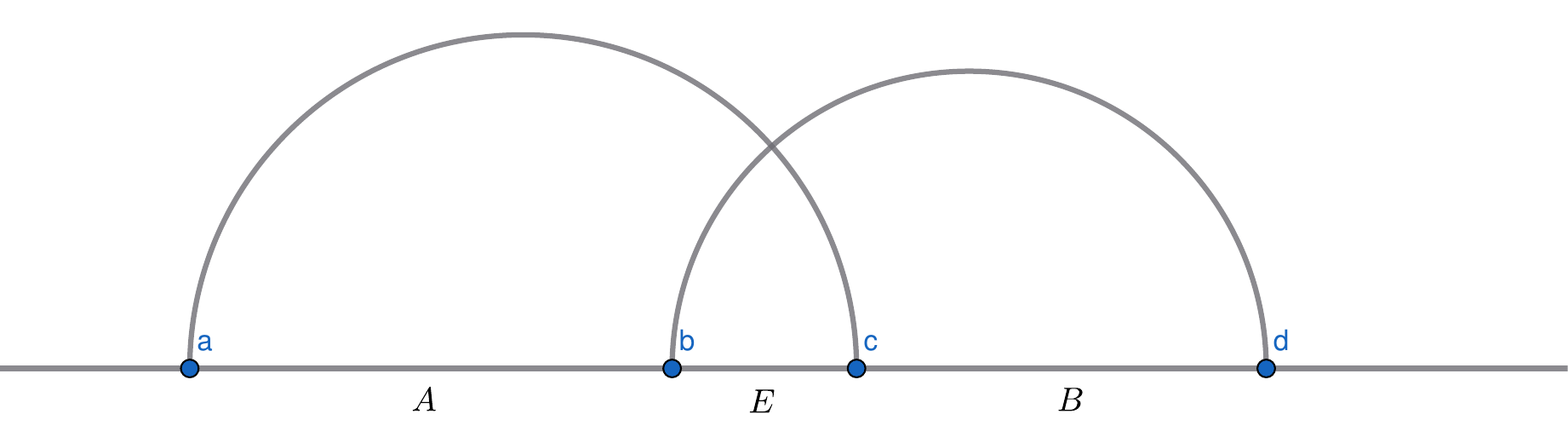} 
\caption{Illustration of Lemma \ref{lemma2}. The minimal surfaces homologous to $AE$ and $BE$ are perpendicular to each other.} 
\label{Fig:lemma2} 
\end{figure}
\begin{lemma}\label{lemma2}
As shown in figure \ref{Fig:lemma2}, if the minimal surfaces homologous to the boundary subregions $AE$ and $BE$ are perpendicular, then the entanglement wedges of $A$ and $B$ are at the critical point between being connected and disconnected, we have
\begin{equation}
    S_{A} + S_{B} = S_{ABE} + S_{E}.
\end{equation}
In the Poincaré disk model, this is equivalent to
\begin{equation}
    L_AL_B=L_{ABE}L_E.
\end{equation}
\end{lemma}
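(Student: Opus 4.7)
The plan is to work in the upper half-plane (UHP) model, where the four boundary endpoints can be placed on the real axis as $p_1<p_2<p_3<p_4$, with $A=[p_1,p_2]$, $E=[p_2,p_3]$, $B=[p_3,p_4]$. The claim is insensitive to this choice because both the perpendicularity condition and the identity $L_AL_B=L_{ABE}L_E$ are Möbius-invariant (the latter is essentially a cross-ratio), so any three of the four endpoints may be fixed by a Möbius transformation.

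First I would establish the ``chord-product'' characterization of the phase transition. The joint region $A\cup B$ admits two candidate RT surfaces: the disconnected surface $\gamma_A\cup\gamma_B$ (contributing area $\propto\log L_A+\log L_B$) and the connected surface, which by RT complementarity coincides with the disconnected RT surface of the complement $E\cup F$ (contributing $\propto\log L_E+\log L_F=\log L_E+\log L_{ABE}$, since $F$ and $ABE$ share the same chord). Equating the two areas gives $S_A+S_B=S_{ABE}+S_E$, which in the disk/UHP translates directly to $L_AL_B=L_EL_{ABE}$. The critical (tangency) configuration of the entanglement wedges of $A$ and $B$ is exactly this equality.

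Next I would compute the angle between $\gamma_{AE}$ and $\gamma_{BE}$ in the UHP. These are the semicircles with endpoints $(p_1,p_3)$ and $(p_2,p_4)$, i.e.\ with centers $\tfrac12(p_1+p_3)$, $\tfrac12(p_2+p_4)$ and radii $\tfrac12(p_3-p_1)$, $\tfrac12(p_4-p_2)$. Applying the standard formula
\begin{equation*}
\cos\theta=\frac{r_1^2+r_2^2-|C_1-C_2|^2}{2r_1r_2},
\end{equation*}
and introducing $\alpha=p_2-p_1=L_A$, $\beta=p_3-p_2=L_E$, $\gamma=p_4-p_3=L_B$ (so that $L_{AE}=\alpha+\beta$, $L_{BE}=\beta+\gamma$, $L_{ABE}=\alpha+\beta+\gamma$), a short expansion yields
\begin{equation*}
\cos\theta=\frac{L_EL_{ABE}-L_AL_B}{L_{AE}L_{BE}}.
\end{equation*}
Perpendicularity $\theta=\pi/2$ is equivalent to the vanishing of the numerator, i.e.\ $L_AL_B=L_EL_{ABE}$, matching the phase-transition condition above.

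The main obstacle is conceptual rather than computational: one must be careful that the ``connected'' candidate for the RT surface of $A\cup B$ really is $\gamma_E\cup\gamma_F$ (rather than the crossing pair $\gamma_{AE}\cup\gamma_{BE}$, which is not homologous to $A\cup B$ and in any case is not minimal), so that the critical-point statement $S_A+S_B=S_{ABE}+S_E$ is the correct phrasing. Once this is settled, the remaining work is routine algebra, and the computation above shows the two conditions are in fact the same equation.
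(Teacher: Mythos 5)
Your proof is correct, and it takes a genuinely different route from the paper's. The paper argues by conformal normalization: it uses a special conformal transformation to set $L_A=L_E$, observes that angle preservation then forces $\gamma_{BE}$ to be a vertical line (so the right endpoint of $B$ sits at infinity), and reads off $S_A+S_B=S_{ABE}+S_E$ from the resulting symmetric frame, where $\gamma_A$ and $\gamma_E$ have equal length and $\gamma_B$, $\gamma_{ABE}$ are both vertical lines of equal regularized length. You instead keep all four endpoints generic on the real axis and compute the intersection angle of the two semicircles directly via the Euclidean two-circle formula, obtaining
\begin{equation*}
\cos\theta=\frac{L_EL_{ABE}-L_AL_B}{L_{AE}L_{BE}}=1-\frac{2L_AL_B}{L_{AE}L_{BE}},
\end{equation*}
(the algebra checks out, using $L_AL_B+L_EL_{ABE}=L_{AE}L_{BE}$), and you separately identify $L_AL_B=L_EL_{ABE}$ as the equal-area condition between the connected and disconnected RT candidates for $A\cup B$. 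Your version buys an explicit quantitative formula for the angle as a function of the cross-ratio, which makes the ``if and only if'' character manifest and measures the deviation from criticality; the paper's version is shorter but rests on the slightly less obvious geometric fact that the only geodesic through a boundary point perpendicular to a semicircle centered there is the vertical line. Your care in distinguishing the connected candidate $\gamma_E\cup\gamma_F$ from the crossing (non-minimal) pair $\gamma_{AE}\cup\gamma_{BE}$, and in noting that $S_F=S_{ABE}$ by purity, closes the only place where the argument could have gone wrong.
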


\begin{lemma}\label{lemma3}
As shown in figure \ref{Fig:lemma3}, if the minimal surfaces homologous to $AE_1B$, $E_1BE_2$ and $BE_2C$ intersect at the same point, then the entanglement wedge of $ABC$ is at the critical point between being completely disconnected and completely connected. We have
\begin{equation}
    S_{A} + S_{B} +S_C = S_{AE_1BE_2C} + S_{E_1}+S_{E_2},
\end{equation}
and this is equivalent to 
\begin{equation}
    L_AL_BL_C=L_{AE_1BE_2C}L_{E_1}L_{E_2}
\end{equation}
in the Poincaré disk model.
\end{lemma}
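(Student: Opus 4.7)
The plan is to work in the Poincar\'e upper half plane, place the five boundary intervals on the real axis with endpoints $x_0<x_1<x_2<x_3<x_4<x_5$ (so that $L_A=x_1-x_0$, $L_{E_1}=x_2-x_1$, and so on), and realize $\gamma_{AE_1B}$, $\gamma_{E_1BE_2}$, $\gamma_{BE_2C}$ as three Euclidean semicircles with diameters $[x_0,x_3]$, $[x_1,x_4]$, $[x_2,x_5]$, respectively. Since all three circles have their centers on the real axis, their three pairwise radical axes are vertical lines. The three circles then share a common point in the UHP if and only if these three vertical lines coincide, and by the general concurrency of pairwise radical axes at the radical center it is in fact enough to impose the equality of any two of them.

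The first key step is a direct computation of these radical axes. For a pair of semicircles with diameters $[\ell_i,R_i]$ and $[\ell_j,R_j]$, the radical axis sits at the vertical line $x_{ij}=(\ell_iR_i-\ell_jR_j)/(\ell_i+R_i-\ell_j-R_j)$. I would write $x_{12}=x_{23}$ in this form, clear denominators, substitute the explicit endpoints, and simplify. After expansion and collection of terms, I expect the cubic monomials in the edge lengths to cancel pairwise so that the equation collapses to the clean identity
\begin{equation*}
L_A\,L_B\,L_C \;=\; L_{AE_1BE_2C}\,L_{E_1}\,L_{E_2}.
\end{equation*}
Taking logarithms of both sides, and using the pure-AdS$_3$ formula $S_X=(c/3)\log(L_X/\epsilon)$, the UV cutoff cancels between the two sides (three factors of $L$ on each side), and this length identity becomes equivalent to $S_A+S_B+S_C=S_{AE_1BE_2C}+S_{E_1}+S_{E_2}$.

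To recognize this identity as the critical locus between the completely disconnected and completely connected phases of the entanglement wedge of $A\cup B\cup C$, one compares the two extreme RT candidates for this union: the fully disconnected candidate $\gamma_A\cup\gamma_B\cup\gamma_C$, whose area is proportional to $\log(L_AL_BL_C)$, and the fully connected candidate $\gamma_{AE_1BE_2C}\cup\gamma_{E_1}\cup\gamma_{E_2}$, whose area is $\log(L_{AE_1BE_2C}L_{E_1}L_{E_2})$. The critical configuration between the two phases is exactly where these two candidate areas agree, which is the entropy identity just derived; the analogy with Lemma \ref{lemma2} is that the perpendicular intersection of a pair of geodesics there is replaced here by the common intersection of three geodesics.

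The main obstacle I anticipate is simply controlling the polynomial algebra in the central step: the relation $x_{12}=x_{23}$, after clearing denominators, is a cubic identity in the five edge lengths, and care is needed so that it collapses cleanly into the symmetric form above rather than leaving a spurious factor. A secondary subtlety is verifying that the coincident radical axis really meets the circles at a real point (so that the common intersection is not merely a formal radical centre at infinity), but this is automatic here because the three diameters $[x_0,x_3]$, $[x_1,x_4]$, $[x_2,x_5]$ are pairwise interleaved, so the three semicircles already cross pairwise in the UHP, and any point on their shared radical axis lying on one of them lies on all three.
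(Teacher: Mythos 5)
Your proof is correct, but it takes a genuinely different route from the paper's. The paper first uses a special conformal transformation to send the right endpoint of $C$ to infinity, so that $\gamma_{AE_1BE_2C}$, $\gamma_{BE_2C}$ and $\gamma_C$ become vertical lines; elementary similar-triangle geometry at the common intersection point $m$ then gives $\overline{mc}^{\,2}=L_{E_1}(L_B+L_{E_2})=L_B(L_A+L_{E_1})$, hence $L_AL_B=L_{E_1}L_{E_2}$, which in that frame (where $L_C/L_{AE_1BE_2C}\to 1$) is the claimed identity; conformal invariance of the UV-finite entropy combination transports it back to a general frame. You instead stay in a general upper-half-plane frame and characterize concurrency of the three semicircles via their radical axes: with diameters $[x_0,x_3]$, $[x_1,x_4]$, $[x_2,x_5]$ the pairwise radical axes are vertical, and (since equal powers with respect to circles $1,2$ and $2,3$ imply equal powers with respect to $1,3$, and the interleaved diameters guarantee pairwise intersection in the upper half plane) concurrency is exactly $x_{12}=x_{23}$. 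The anticipated algebraic obstacle does not materialize: expanding $(x_0x_3-x_1x_4)(x_1+x_4-x_2-x_5)-(x_1x_4-x_2x_5)(x_0+x_3-x_1-x_4)$ and $(x_1-x_0)(x_3-x_2)(x_5-x_4)-(x_5-x_0)(x_2-x_1)(x_4-x_3)$ gives the \emph{same} twelve cubic monomials term by term, so the concurrency condition is identically the product formula with no spurious factor, and the entropy form follows from $S_X\propto\log(L_X/\epsilon)$ with three cutoff factors cancelling on each side. Your approach buys a frame-independent, if-and-only-if derivation that avoids the implicit appeal to conformal invariance of the entropy combination; the paper's buys brevity and parallelism with its proof of Lemma \ref{lemma2}. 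The one caveat, shared with the paper, is that identifying this locus as the transition between the \emph{completely} disconnected and \emph{completely} connected wedge phases tacitly assumes the partially connected RT candidates are not dominant there; the mathematical content of the lemma is just the entropy identity, which you establish.
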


\begin{figure}[h] 
\centering 
\includegraphics[width=0.95\textwidth]{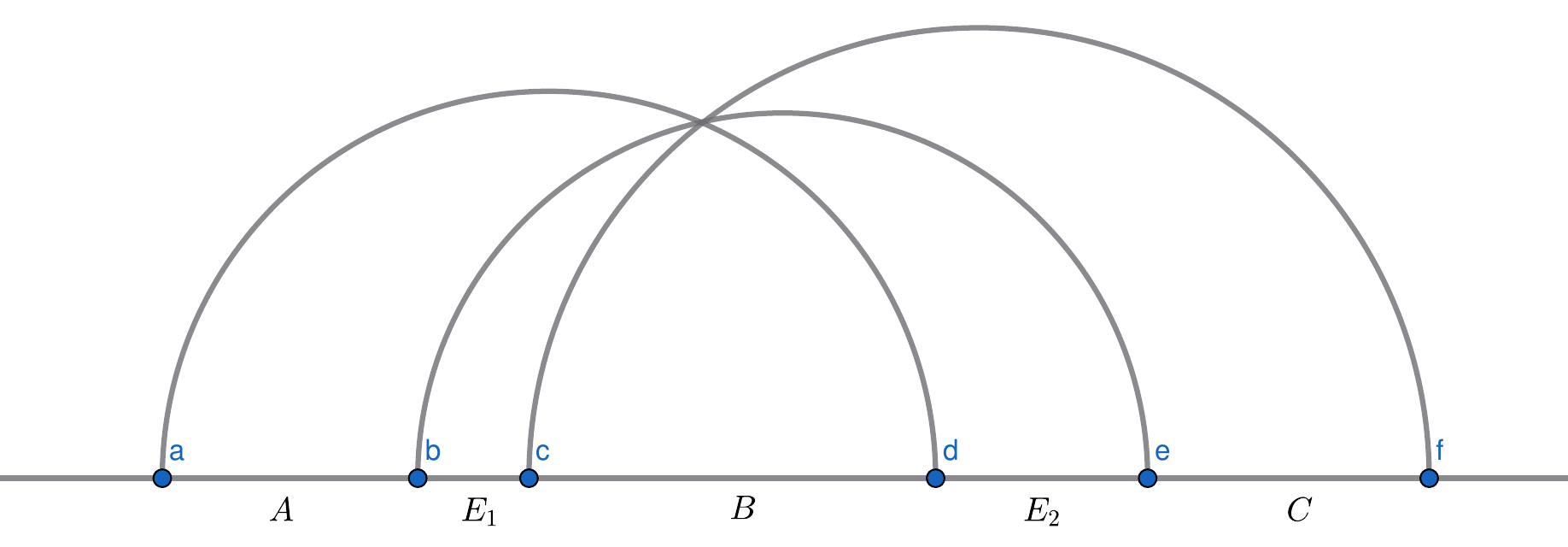} 
\caption{Illustration of Lemma \ref{lemma3}. The minimal surfaces homologous to $AE_1B$, $E_1BE_2$, and $BE_2C$ intersect at the same point.} 
\label{Fig:lemma3} 
\end{figure}
We can exploit the conformal symmetry of hyperbolic geometry to prove these two lemmas. For Lemma \ref{lemma2}, we can always perform a special conformal transformation such that subsystems $A$ and $E$ have the same length (as shown in figure \ref{lemma2proof}). Since conformal transformations preserve angles, $\gamma_{BE}$ remains perpendicular to $\gamma_{AE}$, and therefore it must be a straight line perpendicular to the boundary. The right boundary of subregion $B$ is located at infinity, and in this case it is evident that 
\begin{equation}\label{lemma2equation}
    S_A+S_B-S_E-S_{ABE}=0.
\end{equation}

\begin{figure}[h] 
\centering 
\includegraphics[width=0.95\textwidth]{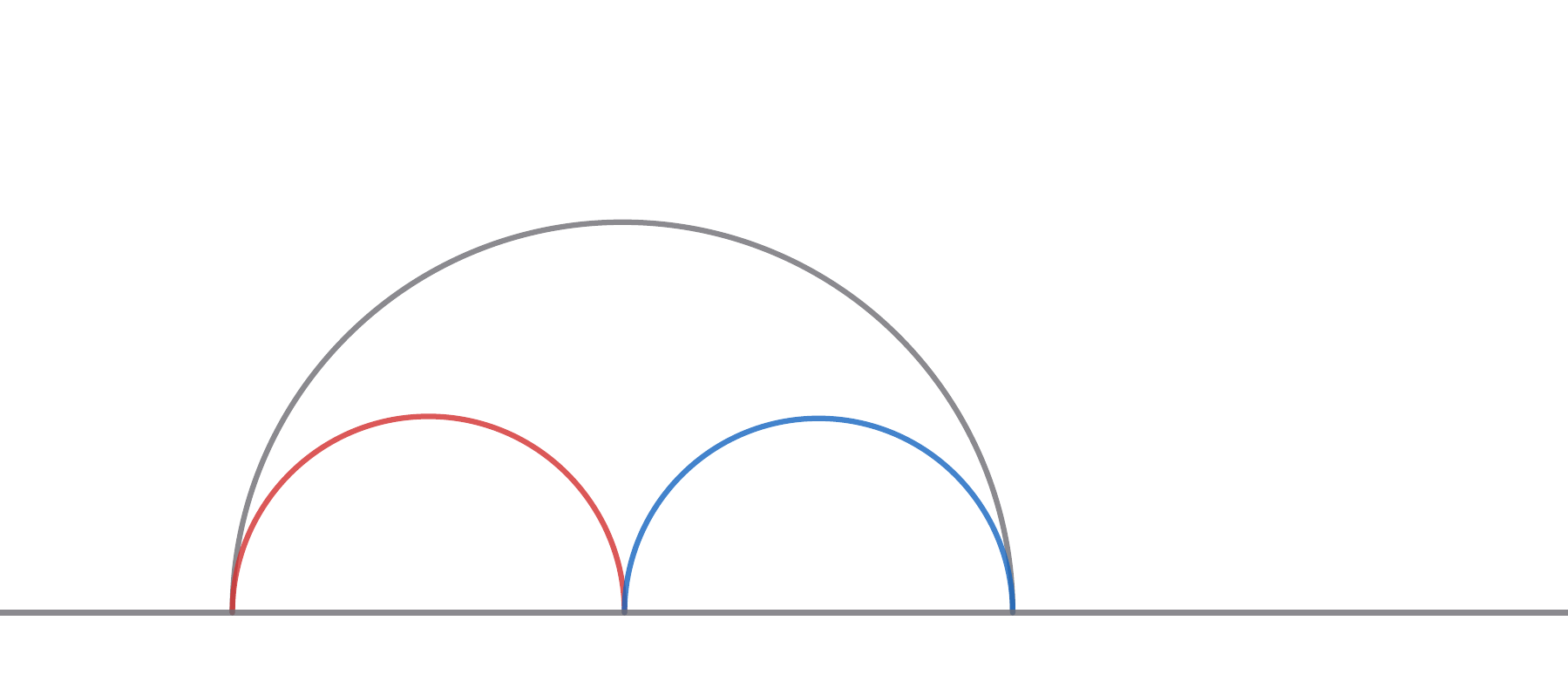} 
\caption{Schematic diagram of the proof of Lemma \ref{lemma2}. After applying a special conformal transformation, the right endpoint of subregion $B$ is mapped to infinity, and subregions $A$ and $E$ are of equal size. The curves $\gamma_{ABE}$, $\gamma_{BE}$, and $\gamma_{B}$ are straight lines perpendicular to the conformal boundary. At this point, the sum of the lengths of the red curves equals the sum of the lengths of the blue curves.} 
\label{lemma2proof} 
\end{figure}

\begin{figure}[h] 
\centering 
\includegraphics[width=0.95\textwidth]{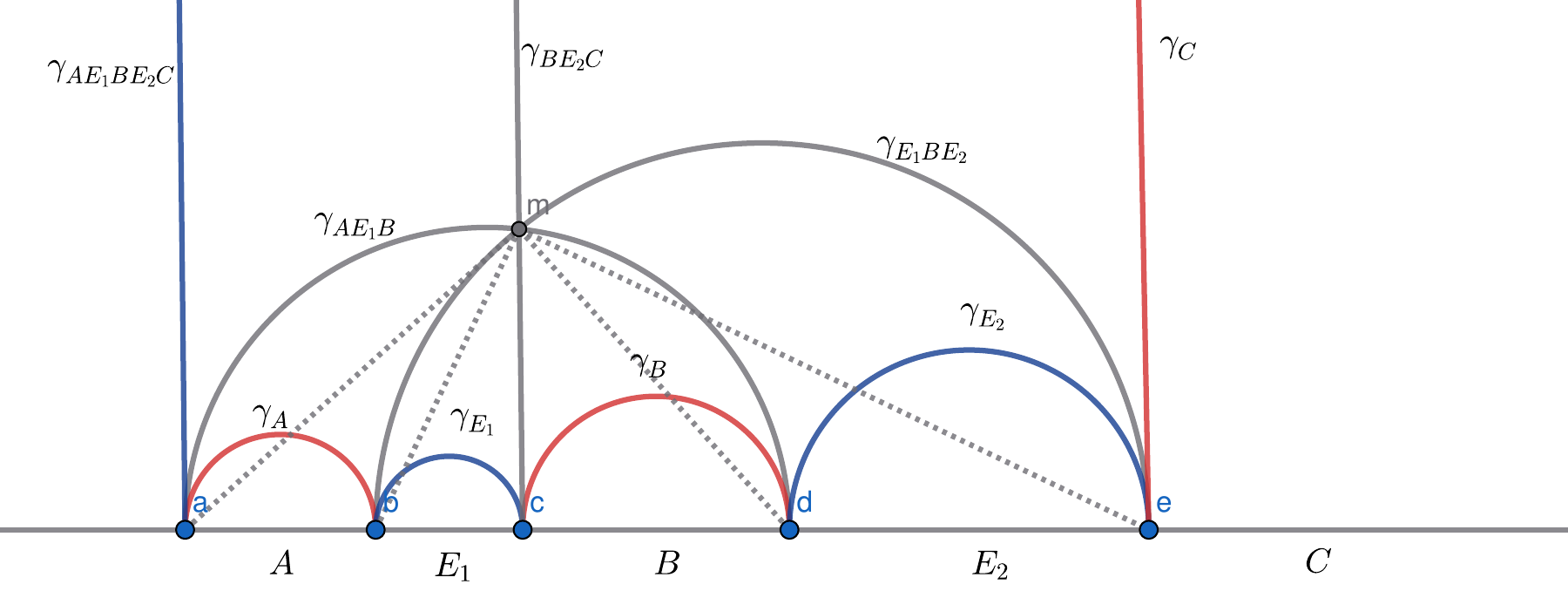}
\caption{Schematic diagram of the proof of Lemma \ref{lemma3}. After applying a special conformal transformation, the right endpoint of subregion $C$ is mapped to infinity, and the curves $\gamma_{AE_1BE_2C}$, $\gamma_{BE_2C}$, and $\gamma_{C}$ become straight lines perpendicular to the conformal boundary. It can be shown that the sum of the lengths of the red curves equals the sum of the lengths of the blue curves.} 
\label{lemma3proof} 
\end{figure}
For Lemma \ref{lemma3}, we can similarly apply a special conformal transformation to map the right boundary of subregion $C$ to infinity. Denoting by $m$ the common intersection point of the three minimal surfaces $\gamma_{AE_1B}$, $\gamma_{E_1BE_2}$ and $\gamma_{BE_2C}$, we can construct similar triangles by connecting $ma$, $mb$, $md$, and $me$. Using this elementary geometric method, we obtain
\begin{equation}
\frac{L_{E_1}}{\overline{mc}}=\frac{\overline{mc}}{L_B+L_{E_2}}
\end{equation}
and
\begin{equation}
\frac{L_{B}}{\overline{mc}}=\frac{\overline{mc}}{L_A+L_{E_1}}.
\end{equation}
Where $\overline{mc}$ is the Euclidean distance between point $m$ and point $c$. Combining these two relations we can derive that
\begin{equation}
    L_AL_B=L_{E_1}L_{E_2}
\end{equation}
Moreover, in the Poincaré half-plane model, the length of a geodesic is proportional to the logarithm of the length of the boundary subregion homologous to it, so we have
\begin{equation}\label{lemma3equation}
    S_A+S_B+S_C-S_{E_1}-S_{E_2}-S_{AE_1BE_2C}=0.
\end{equation}
Since special conformal transformations do not alter the values of infrared quantities, equation (\ref{lemma2equation}) and equation (\ref{lemma3equation}) still hold prior to the special conformal transformation. Thus, we have completed the proof of Lemma \ref{lemma2} and Lemma \ref{lemma3}.

We next introduce a trick for computing the length of a geodesic segment.
\begin{figure}[h] 
\centering 
\includegraphics[width=0.80\textwidth]{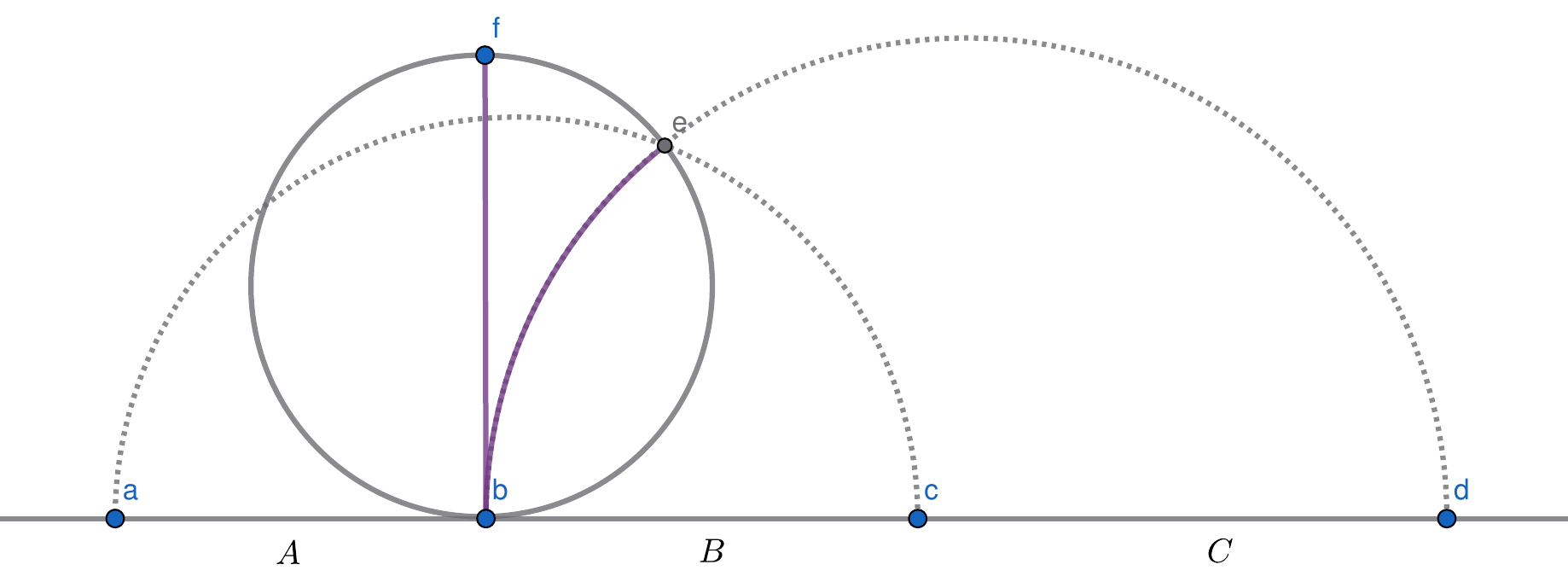} 
\caption{The horosphere is tangent at point $b$ and passes through point $e$. The purple curve $bf$ is the "diameter" of this horosphere. Due to the properties of the horosphere, the two purple geodesics $\overset{\frown}{bf}$ and $\overset{\frown}{be}$ have the same length. } 
\label{geodesic segment} 
\end{figure}
As shown in figure \ref{geodesic segment}, the computation of the length of $\overset{\frown}{be}$ can be reduced to that of $\overset{\frown}{bf}$, while the length of $\overset{\frown}{bf}$ 
\begin{equation}
    \overset{\frown}{bf} = \int_{\epsilon}^{d}\frac{dz}{z} = \log\frac{d}{\epsilon}
\end{equation}
Here $d$ is the Euclidean length of $\overset{\frown}{bf}$, $\epsilon$ is the UV cutoff. From elementary geometry, we can obtain that  
\begin{equation}
    d = (L_B+L_C)\sqrt{\frac{L_AL_B}{L_C(L_A+L_B+L_C)}}.
\end{equation}
With the help of Lemma \ref{lemma2}, Lemma \ref{lemma3} and this trick, we can compute the variation of the $E_W(AB:DE)$ before and after the geometric modification.

\begin{figure}[h] 
\centering 
\includegraphics[width=0.5\textwidth]{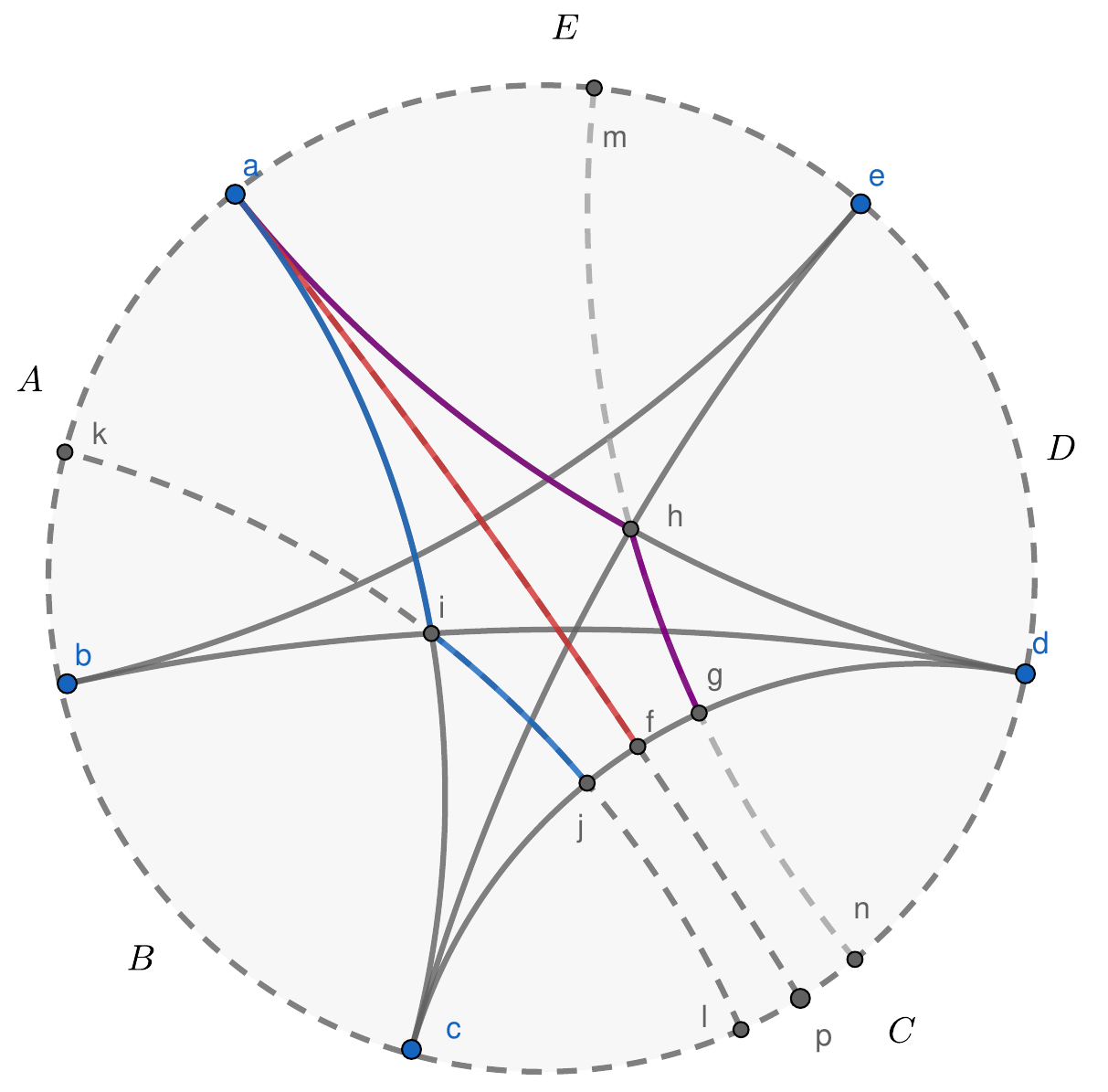} 
\caption{The shape of $\gamma'_{AB,DE}$ before and after the geometric modification in the Poincaré disk. We choose the IR region as the purely IR region enclosed by the five minimal surfaces in the bulk. Here, the red curves represent $\gamma'_{AB,DE}$ before the geometric modification, while the shorter of the purple and blue curves represents $\gamma'_{AB,DE}$. The geodesics $\overset{\frown}{af}$, $\overset{\frown}{hg}$ and $\overset{\frown}{ij}$ are both perpendicular to the geodesic $\overset{\frown}{cd}$. } 
\label{pentagonDisk}
\end{figure}
As shown in figure \ref{pentagonDisk}, before the geometric modification, $\gamma_{AB,DE}$ is the red curve perpendicular to the geodesic $\gamma_C$; after the modification, $\gamma_{AB,DE}$ becomes the shorter of the purple and blue curve. By applying Lemma \ref{lemma2} and Lemma \ref{lemma3}, we can determine the positions of these geodesics and subsequently compute their lengths. In the following discussion we assume that the length of $\overset{\frown}{gh}+\overset{\frown}{ah}$ is less than that of $\overset{\frown}{ai}+\overset{\frown}{ij}$.
\begin{figure}[H] 
\centering 
\includegraphics[width=0.95\textwidth]{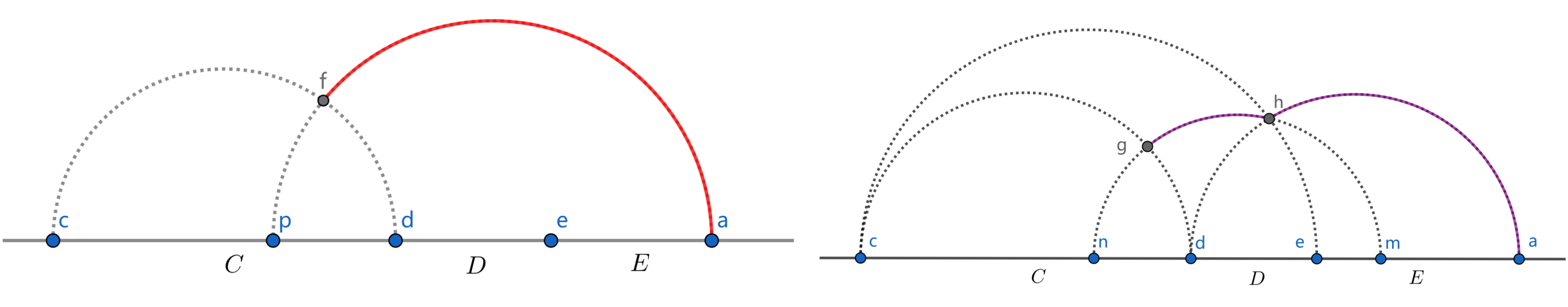} 
\caption{The shape of $\gamma'_{AB,DE}$ before and after the geometric modification in the Poincaré half plane. We assume that the geodesic $\overset{\frown}{af}$, upon extension, intersects the conformal boundary at point $p$, while the geodesic $\overset{\frown}{gh}$, upon extension, intersects the conformal boundary at points $m$ and $n$.} 
\label{pentagonHalfplane} 
\end{figure}
Since geodesic $\overset{\frown}{gh}$ is perpendicular to $\gamma_C$, and it intersects $\gamma_{CD}$ and $\gamma_{DE}$ at the same point $h$, we have
\begin{equation}
\overline{cn}\cdot\overline{dm}=\overline{cm}\cdot\overline{nd}
\end{equation}
and
\begin{equation}
\overline{cn}\cdot\overline{de}\cdot\overline{ma}=\overline{ca}\cdot\overline{nd}\cdot\overline{em}.
\end{equation}
From these two equations, we obtain
\begin{equation}
\overline{em}= \frac{-L_D(L_C+L_D)+\sqrt{L_D(L_C+L_D)(L_D+L_E)(L_C+L_D+L_E)}}{L_C+2L_D+L_E},    
\end{equation}
\begin{equation}
    \overline{cn}=\frac{L_C(L_C+L_D+\overline{em})}{L_C+2\overline{em}+2L_D}.
\end{equation}
This thus determines the positions of points $m$ and $n$. In a similar manner, we can solve for
\begin{equation}
    \overline{cp}=\frac{L_C(L_C+L_D+L_E)}{L_C+2L_D+2L_E}.
\end{equation}
The variation of $E_W(AB:DE)$ is
\begin{equation}
    \Delta E_W(AB:DE)=\frac{1}{4G_N}\left(\overset{\frown}{gh}+\overset{\frown}{ah}-\overset{\frown}{af}\right).
\end{equation}
By employing the trick we introduced earlier, we can express the lengths of these three geodesics as
\begin{equation}
\begin{aligned}
    \overset{\frown}{af} &= 2\log \frac{L_C+L_D+L_E-\overline{cp}}{\epsilon} - \log \frac{d(\overline{cp},L_C-\overline{cp},L_D+L_E)}{\epsilon},\\
    \overset{\frown}{ah} &= 2\log \frac{L_D+L_E}{\epsilon} - \log \frac{d(L_C,L_D,L_E)}{\epsilon},\\
    \overset{\frown}{gh} &= 2\log \frac{L_C+L_D+\overline{em}-\overline{cn}}{\epsilon} - \log \frac{d(\overline{cn},L_C-\overline{cn},L_D+\overline{em})}{\epsilon}\\
    &-\log \frac{d(L_E-\overline{em},L_D+\overline{em},L_C-\overline{cn})}{\epsilon}.
\end{aligned}
\end{equation}
Here we have defined $d(x,y,z)\equiv (y+z)\sqrt{\frac{xy}{z(x+y+z)}}$. The final simplified result is
\begin{equation}
\resizebox{0.90\hsize}{!}{$\displaystyle 
   \Delta E_W(AB:DE) =  \frac{1}{4G_N}\log\left[\frac{1}{2}\sqrt{\frac{L_C(2L_D+L_E)+2(L_D(L_D+L_E)+\sqrt{L_D(L_C+L_D)(L_D+L_E)(L_C+L_D+L_E)})}{L_D(L_C+L_D+L_E)}}\right].
   $}
\end{equation}
Moreover, if the blue curve is shorter than the purple curve, we only have to substitute $L_E$ and $L_D$ to $L_A$ and $L_B$ respectively to get the final result. 

\section{Calculation of lower bound of EWCS under the modified hyperbolic geometry}\label{A2}
\noindent In this appendix, we give the detailed derivation of equation (\ref{results hyperbolic3}). As shown in figure \ref{three_subregion_hyperbolic}, among the three horospheres, we denote by $r$ the radius of the smallest horosphere. Since the other two must be at least tangent to it, the radius of the other two horospheres is at least
\begin{equation}
    \tilde{r} = \frac{3-3r}{r+3} 
\end{equation}
The coordinates of the intersection points of the two larger horospheres are
\begin{equation}
    \left(0, \frac{1-\tilde{r}\pm \sqrt{\tilde{r}^2+6\tilde{r}-3}}{2}\right)
\end{equation}
The length of $\gamma'_{A,B}$ consists of two terms. The first term is 
\begin{equation}
    \log \frac{\sqrt{3} \left(1+\tilde{r}+\sqrt{\tilde{r}^2+6\tilde{r}-3}\right)}{3-\tilde{r}-\sqrt{\tilde{r}^2+6\tilde{r}-3}},
\end{equation}
which goes to zero when three horospheres are pairwise tangent, and the second term is a UV divergent term  
\begin{equation}
    \log \frac{4r}{(1-r)\epsilon}.
\end{equation}
Accordingly, the value of $E_W(A:B)$ is proportional to the sum of these two terms. The expression of $E_W(A:B)$ in terms of $r$ and $\tilde{r}(r)$ is given by
\begin{equation}
    E_W(A:B) = \frac{1}{4G_N}\left[\log \frac{\sqrt{3} \left(1+\tilde{r}+\sqrt{\tilde{r}^2+6\tilde{r}-3}\right)}{3-\tilde{r}-\sqrt{\tilde{r}^2+6\tilde{r}-3}}\frac{4r}{(1-r)\epsilon}\right]
\end{equation}
with $r$ ranging from $0$ to $2\sqrt{3}-3$.

\section{Entanglement structure of quantum states with vanishing multi Markov gap}\label{B}
\noindent This appendix provides the proof of Theorem \ref{thm 2} and Theorem \ref{thm 3} presented in the main text. We begin by noting that the multi EWCS is dual to the multipartite entanglement of purification. The multipartite entanglement of purification for $\rho_{ABC}$ is defined as 
\begin{equation}
    \Delta_P(\rho_{A_1A_2...A_n}) \equiv \min_{\ket{\psi}_{A_1A'_1A_2A'_2...A_nA'_n}}(S_{A_1A'_1}+S_{A_2A'_2}+...+S_{A_nA'_n}).
\end{equation}
where the minimization is taken over all possible purifications of $\rho_{A_1A_2...A_n}$. In addition, we will also make use of a result known as the quantum Markov property   \cite{Hayden2004}, stated in the following theorem.
\begin{theorem}\label{quantum Markov property}
Let $\rho_{ABC}$ be a quantum state on $\mathcal{H}_A\otimes\mathcal{H}_B\otimes\mathcal{H}_C$. Then $I(A:C|B)=0$ if and only if there exists a decomposition of $\mathcal{H}_B$, 
\begin{equation}
    \mathcal{H}_B = \bigoplus_i\mathcal{H}_{B_L}^i\otimes \mathcal{H}_{B_R}^i
\end{equation}
such that
\begin{equation}
    \rho_{ABC}=\sum_i q_i\rho_{AB_L}^i\otimes\rho_{B_RC}^i
\end{equation}
In particular, for a pure state $\ket{\psi}_{ABC}$ satisfying $I(A:C|B)=0$, there exists a bipartition $\mathcal{H}_B=\mathcal{H}_{B_L}\otimes\mathcal{H}_{B_R}$ such that $\ket{\psi}_{ABC}=\ket{\psi}_{AB_L}\otimes \ket{\psi}_{B_RC}$.
\end{theorem}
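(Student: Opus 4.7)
The plan is to prove the Hayden--Jozsa--Petz--Winter characterization of the equality case of strong subadditivity by treating the two implications separately and then specializing to pure states. For the sufficiency direction (decomposition implies $I(A:C|B)=0$), I would proceed by direct calculation. Given the block-diagonal structure $\mathcal{H}_B = \bigoplus_i \mathcal{H}^i_{B_L}\otimes\mathcal{H}^i_{B_R}$ and $\rho_{ABC} = \sum_i q_i\,\rho^i_{AB_L}\otimes\rho^i_{B_R C}$, the orthogonality of the blocks forces each marginal on $AB$, $BC$, $B$, and $ABC$ to inherit a direct-sum structure, so its von Neumann entropy splits as a classical Shannon contribution plus a sum of quantum contributions, for instance $S(\rho_{ABC}) = H(q) + \sum_i q_i\bigl[S(\rho^i_{AB_L}) + S(\rho^i_{B_R C})\bigr]$. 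Assembling the analogous expressions for $S(\rho_{AB})$, $S(\rho_{BC})$, and $S(\rho_B)$, the combination entering $I(A:C|B)$ cancels block by block to yield zero.

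For the necessity direction, which I expect to be the main obstacle, I would use the Petz recovery map formulation of conditional mutual information, namely $I(A:C|B) = S\bigl(\rho_{ABC} \,\|\, \mathcal{R}^{B\to AB}_{\rho_{AB}}(\rho_{BC})\bigr)$, where $\mathcal{R}$ is the Petz transpose channel associated with the partial trace over $A$ and the reference state $\rho_{AB}$. Vanishing of the relative entropy forces $\rho_{ABC} = \mathcal{R}^{B\to AB}_{\rho_{AB}}(\rho_{BC})$, so the partial trace over $A$ is reversible by $\mathcal{R}$ on the support of $\rho_{BC}$. A structural analysis of the fixed-point algebra of this recovery channel, via the Koashi--Imoto/Takesaki decomposition of its commutant on $\mathcal{H}_B$, then yields the claimed block decomposition $\mathcal{H}_B = \bigoplus_i \mathcal{H}^i_{B_L}\otimes\mathcal{H}^i_{B_R}$ and the factorized form of $\rho_{ABC}$ on each block.

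The pure-state corollary follows as a short specialization of the mixed-state structure theorem. If $\ket{\psi}_{ABC}$ is pure, then $S(\rho_{ABC})=0$, and in the decomposition $\rho_{ABC} = \sum_i q_i\,\rho^i_{AB_L}\otimes\rho^i_{B_R C}$ this is possible only if the distribution $\{q_i\}$ is concentrated at a single index $i_0$ and both tensor factors $\rho^{i_0}_{AB_L}$ and $\rho^{i_0}_{B_R C}$ are themselves pure (since a convex combination of orthogonally-supported states is pure only in the trivial case, and a tensor product is pure only when both factors are). Applying the unitary that realizes the block decomposition of $\mathcal{H}_B$ and restricting to the $i_0$ block gives $\ket{\psi}_{ABC} = \ket{\psi}_{AB_L}\otimes\ket{\psi}_{B_R C}$ up to the stated local unitary on $\mathcal{H}_B$.

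The main difficulty is concentrated in the necessity step: extracting a full operator-algebraic decomposition of $\mathcal{H}_B$ from the single scalar equality $I(A:C|B)=0$ requires the non-trivial Petz recovery / Koashi--Imoto machinery, while sufficiency and the pure-state collapse are routine consequences once the structure theorem is established. An alternative route, which would also work, is to diagonalize $\rho_{AB}$ and $\rho_B$ simultaneously in the $B$ factor and impose the equality condition of strong subadditivity directly on the spectral data, following the original HJPW argument; this avoids explicit invocation of Petz maps but at the cost of a more intricate combinatorial bookkeeping.
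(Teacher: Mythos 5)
The paper does not prove this statement at all: Theorem \ref{quantum Markov property} is imported verbatim from Hayden--Jozsa--Petz--Winter and used as a black box in the appendix, so there is no in-paper argument to compare against. Judged on its own terms, your outline reproduces the standard literature proof: the block-by-block entropy cancellation for sufficiency is correct (orthogonality of the blocks gives $S(\rho_{ABC})=H(q)+\sum_i q_i[S(\rho^i_{AB_L})+S(\rho^i_{B_RC})]$ and its analogues, and the four terms of $I(A:C|B)$ cancel), and the pure-state specialization is also correct, since $S(\rho_{ABC})=0$ forces the distribution $\{q_i\}$ to be a point mass and both tensor factors to be pure.

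There is, however, one concretely false step at the entry to the necessity direction. The claimed identity $I(A:C|B)=S\bigl(\rho_{ABC}\,\|\,\mathcal{R}^{B\to AB}_{\rho_{AB}}(\rho_{BC})\bigr)$, with $\mathcal{R}$ the Petz transpose channel, does not hold: the relative entropy to the Petz-recovered state is not equal to the conditional mutual information for general states (if it were, the modern recoverability literature built around approximate versions of this statement would be vacuous). What your argument actually needs is Petz's theorem on the equality case of the data-processing inequality, applied to $I(A:C|B)=D(\rho_{ABC}\|\rho_A\otimes\rho_{BC})-D(\rho_{AB}\|\rho_A\otimes\rho_B)$: vanishing of this difference is equivalent to exact recovery $\rho_{ABC}=\rho_{BC}^{1/2}\rho_B^{-1/2}\rho_{AB}\rho_B^{-1/2}\rho_{BC}^{1/2}$. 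Alternatively, one may use the genuine identity $I(A:C|B)=D\bigl(\rho_{ABC}\,\|\,\exp(\log\rho_{AB}-\log\rho_B+\log\rho_{BC})\bigr)$ together with Lieb's triple-operator inequality to control the trace of the reference operator. Either repair feeds into exactly the fixed-point/commutant (Koashi--Imoto/Takesaki) analysis you describe, so the gap is local and fixable, but as written the hard direction of the proof rests on an equation that is not true.
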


We first derive that the state $\ket{\psi}_{ABCD}$ can be written in the form
\begin{equation}
    \ket{\psi}_{ABCD}=\ket{\psi}_{A_1D_1}\ket{\psi}_{B_1D_2}\ket{\psi}_{C_1D_3}\ket{\psi}_{A_2B_2C_2}
\end{equation}
from the condition $g(A:B:C)=0$. Consider the optimal purification $\ket{\psi}_{AA'BB'CC'}$ of $\rho_{ABC}$, and rename $A'$, $B'$, and $C'$ as $D_1$, $D_2$, $D_3$, respectively. Using $g(A:B:C)=\Delta(A:B:C)-I(A:B:C)=0$, we obtain
\begin{equation}
S_A+S_B+S_C = S_{AD_1}+S_{BD_2}+S_{CD_3}+S_{ABC}.
\end{equation}
The above expression can be rewritten as
\begin{equation}
I(D_1:BD_2|A) + I(D_2:AC|B) + I(D_1:C|ABD_2) + I(D_3:AD_1BD_2|C) = 0
\end{equation}
or
\begin{equation}
I(D_3:BD_2|C) + I(D_2:AC|B) + I(D_3:A|BCD_2) + I(D_1:BCD_2D_3|A) = 0
\end{equation}
or
\begin{equation}
I(D_1:CD_3|A) + I(D_3:AC|C) + I(D_1:C|ACD_3) + I(D_2:AD_1CD_3|B) = 0.
\end{equation}
Since conditional mutual informations are nonnegative, each of them must vanish. That is 
\begin{equation}
I(D_3:AD_1BD_2|C) = I(D_1:BCD_2D_3|A) = I(D_2:AD_1CD_3|B) = 0.
\end{equation}
Moreover, conditional mutual information is monotonic under the partial trace of a subsystem. Indeed,
\begin{equation}
\begin{aligned}
I(A:BC|E)-I(A:B|E)&=S_{AE}+S_{BCE}-S_{ABCE}-S_{E}-S_{AE}-S_{BE}+S_{ABE}+S_E\\
&=S_{BCE}+S_{ABE}-S_{ABCE}-S_{BE}\\
&=I(A:C|BE)\geq 0.
\end{aligned}
\end{equation}
which implies
\begin{equation}
    \begin{aligned}
        0&=I(D_1:BCD_2D_3|A) \geq I(D_1:BD_2C_2|A) \geq 0,\\
        0&=I(D_2:AD_1CD_3|B)\geq I(D_2:A_2C_2|B)\geq 0.
    \end{aligned}
\end{equation}
Using the conditions 
\begin{equation}
    I(D_3:AD_1BD_2|C) = I(D_1:BD_2C_2|A) = I(D_2:A_2C_2|B) = 0
\end{equation}
together with Theorem \ref{quantum Markov property}, We can iteratively decompose $\ket{\psi}_{AD_1BD_2CD_3}$ as
\begin{equation}
\begin{aligned}
&\ket{\psi}_{AD_1BD_2CD_3} = \ket{\psi}_{AD_1BD_2C_2}\ket{\psi}_{C_1D_3} \\
&= \ket{\psi}_{A_2BD_2C_2}\ket{\psi}_{A_1D_1}\ket{\psi}_{C_1D_3} = \ket{\psi}_{A_1D_1} \ket{\psi}_{B_1D_2}\ket{\psi}_{C_1D_3}\ket{\psi}_{A_2B_2C_2}.
\end{aligned}
\end{equation} 
Finally, since all purifications of $\rho_{ABC}$ differ only by a local unitary transformation, $\ket{\psi}_{ABCD}$ can be written in the above form up to a local unitary transformation.

The proof of the converse direction of this theorem is relatively straightforward. If $\ket{\psi}_{ABCD} = \ket{\psi}_{A_1D_1} \ket{\psi}_{B_1D_2}\ket{\psi}_{C_1D_3}\ket{\psi}_{A_2B_2C_2}$, then
\begin{equation}
 \begin{aligned}
g(A:B:C) &\leq \frac{1}{2}(S_{AD_1}+S_{BD_2}+S_{CD_3}-S_A-S_B-S_C+S_{ABC}) \\
&= \frac{1}{2}(S_{A_2}+S_{B_2}+S_{C_2}-S_{A_1}-S_{A_2}-S_{B_1}-S_{B_2}-S_{C_1}-S_{C_2}+S_D) \\
&= 0.
\end{aligned}   
\end{equation}
And since $g(A:B:C)$ is non-negative so it must vanish. Here we have used $S_{AD_1}=S_{A_2}$, $S_{BD_2}=S_{B_2}$, $S_{CD_3}=S_{C_2}$, $S_{A/B/C}=S_{A_1/B_1/C_1}+S_{A_2/B_2/C_2}$, and $S_{ABC}=S_{D}=S_{D_1}+S_{D_2}+S_{D_3}=S_{A_1}+S_{B_1}+S_{C_1}$.

For Theorem \ref{thm 3}, the proof proceeds in a similar manner. If $g(A:B:C:D)=0$, and we consider the optimal purification $\ket{\psi}_{AA'BB'CC'DD'}$ of $\rho_{ABCD}$, renaming $A'$, $B'$, $C'$, and $D'$ as $E_1$, $E_2$, $E_3$ and $E_4$ we have 
\begin{equation}\label{4partiteMarkov}
    S_{A}+S_B+S_C+S_D=S_{AE_1}+S_{BE_2}+S_{CE_3}+S_{DE_4} + S_{ABCD}.
\end{equation}
The above expression can be rewritten as
\begin{equation}
\begin{aligned}
    &I(E_1:BE_2CE_3DE_4|A)+I(E_2:ACD|B)+I(E_3:ABE_2DE_4|C)\\
    &+I(E_4:ABE_2C|D)=0.
\end{aligned}
\end{equation}
By the non-negativity of the conditional mutual information, we obtain
\begin{equation}
I(E_1:BE_2CE_3DE_4|A)=0    
\end{equation}
Moreover, since $A$, $B$, $C$, and $D$ exhibit cyclic symmetry in equation (\ref{4partiteMarkov}), we can also deduce that
\begin{equation}
\begin{aligned}
    I(E_2:AE_1CE_3DE_4|B)=I(E_3:AE_1BE_2DE_4|C)=I(E_4:AE_1BE_2CE_3|D)=0.
\end{aligned}
\end{equation}
Furthermore, using the monotonicity of the conditional mutual information under partial trace, we obtain
\begin{equation}
\begin{aligned}
    0&=I(E_2:AE_1CE_3DE_4|B)\geq I(E_2:A_2CE_3DE_4|B)\geq 0,\\
    0&=I(E_3:AE_1BE_2DE_4|C)\geq I(E_3:A_2B_2DE_4|C)\geq 0,\\
    0&=I(E_4:AE_1BE_2CE_3|D)\geq I(E_4:A_2B_2C_2|D)\geq 0.
\end{aligned}
\end{equation}
Combining 
\begin{equation}
\begin{aligned}
&I(E_1:BE_2CE_3DE_4|A)=I(E_2:A_2CE_3DE_4|B)\\
&=I(E_3:A_2B_2DE_4|C)=I(E_4:A_2B_2C_2|D)=0
\end{aligned}
\end{equation}
with the quantum Markov property, the state $\ket{\psi}_{AE_1BE_2CE_3DE_4}$ can be successively factorized into
\begin{equation}
    \begin{aligned}
       &\ket{\psi}_{AE_1BE_2CE_3DE_4}=\ket{\psi}_{A_1E_1}\ket{\psi}_{A_2BE_2CE_3DE_4}=\ket{\psi}_{A_1E_1}\ket{\psi}_{B_1E_2}\ket{\psi}_{A_2B_2CE_3DE_4}\\
       &=\ket{\psi}_{A_1E_1}\ket{\psi}_{B_1E_2}\ket{\psi}_{C_1E_3}\ket{\psi}_{A_2B_2C_2DE_4}\\ 
       &= \ket{\psi}_{A_1E_1}\ket{\psi}_{B_1E_2}\ket{\psi}_{C_1E_3}\ket{\psi}_{D_1E_4}\ket{\psi}_{A_2B_2C_2D_2}.
    \end{aligned}
\end{equation}
Conversely, if the state $\ket{\psi}_{ABCDE}$ can be decomposed into the above form, then by a straightforward calculation we obtain
\begin{equation}
\begin{aligned}
    g(A:B:C:D)&\leq \frac{1}{2}(S_{AE_1}+S_{BE_2}+S_{CE_3}+S_{DE_4}-S_A-S_B-S_C+S_{ABC})\\
    &=\frac{1}{2}(S_{A_2}+S_{B_2}+S_{C_2}+S_{D_2}-S_{A_1}-S_{A_2}-S_{B_1}-S_{B_2}\\
    &-S_{C_1}-S_{C_2}-S_{D_1}-S_{D_2}+S_{E_1E_2E_3E_4})=0.
\end{aligned}
\end{equation}
Here we have used $S_{AE_1}=S_{A_2}$, $S_{BE_2}=S_{B_2}$, $S_{CE_3}=S_{C_2}$, $S_{DE_3}=S_{D_2}$ $S_{A/B/C/D}=S_{A_1/B_1/C_1/D_1}+S_{A_2/B_2/C_2/D_2}$, and $S_{ABCD}=S_{E}=S_{E_1}+S_{E_2}+S_{E_3}+S_{E_4}=S_{A_1}+S_{B_1}+S_{C_1}+S_{D_1}$.

\bibliography{main}
\bibliographystyle{JHEP}

\end{document}